\newtheorem{openproblem}{Open Problem}
\newtheorem{problem}{Problem}
\newcommand{\ignore}[1]{}
\newcommand{\abs}[1]{| #1 |}
\title{Distributed Complexity of $P_k$-freeness:\\ Decision and Certification} %TODO Please add
\titlerunning{Distributed Complexity of $P_k$-freeness: Decision and Certification} %TODO optional, please use if title is longer than one line
\author{Masayuki Miyamoto}{Graduate School of Mathematics, Nagoya University}{masayuki.miyamoto95@gmail.com}{}{}
\authorrunning{M. Miyamoto} %TODO mandatory. First: Use abbreviated first/middle names. Second (only in severe cases): Use first author plus 'et al.'
\keywords{subgraph detection, CONGEST model, local certification} %TODO mandatory; please add comma-separated list of keywords
\begin{document}
\maketitle
%TODO mandatory: add short abstract of the document
\begin{abstract}
The class of graphs that do not contain a path on $k$ nodes as an induced subgraph ($P_k$-free graphs) has rich applications in the theory of graph algorithms. This paper explores the problem of deciding $P_k$-freeness from the viewpoint of distributed computing. 

For specific small values of $k$, we present the \textit{first} $\mathsf{CONGEST}$ algorithms specified for $P_k$-freeness, utilizing structural properties of $P_k$-free graphs in a novel way. Specifically, we show that $P_k$-freeness can be decided in $\tilde{O}(1)$ rounds for $k=4$ in the $\mathsf{broadcast\;CONGEST}$ model, and in $\tilde{O}(n)$ rounds for $k=5$ in the $\mathsf{CONGEST}$ model, where $n$ is the number of nodes in the network and $\tilde{O}(\cdot)$ hides a $\mathrm{polylog}(n)$ factor. These results significantly improve the previous $O(n^{2-2/(3k+2)})$ upper bounds by Eden et al. (Dist.~Comp.~2022). The main technical contribution is a novel technique used in our algorithm for $P_5$-freeness to distinguish induced $5$-paths from non-induced ones, which is potentially applicable to other induced subgraphs. This technique also enables the construction of a local certification of $P_5$-freeness with certificates of size $\tilde{O}(n)$. This is nearly optimal, given our $\Omega(n^{1-o(1)})$ lower bound on certificate size, and marks a significant advancement as no nontrivial bounds for proof-labeling schemes of $P_5$-freeness were previously known. 

For general $k$, we establish the first $\mathsf{CONGEST}$ lower bound, which is of the form $n^{2-1/\Theta(k)}$. The $n^{1/\Theta(k)}$ factor is unavoidable, in view of the $O(n^{2-2/(3k+2)})$ upper bound mentioned above. Additionally, our approach yields the \textit{first} superlinear lower bound on certificate size for local certification. This partially answers the conjecture on the optimal certificate size of $P_k$-freeness, asked by Bousquet et al. (arXiv:2402.12148). %We also show a new and improved local certification for $P_k$-freeness. Notably, our proof for this certification is significantly simpler than the previous ones.

Finally, we propose a novel variant of the problem called ordered $P_k$ detection. We show that in the $\mathsf{CONGEST}$ model, the round complexity of ordered $P_k$ detection is $\tilde{\Omega}(n)$ for $k \geq 5$, and in contrast, proving any nontrivial lower bound for ordered $P_3$ detection implies a strong circuit lower bound. As a byproduct, we establish a circuit-complexity barrier for $\Omega(n^{1/2+\varepsilon})$ \textit{quantum} $\mathsf{CONGEST}$ lower bounds for induced $4$-cycle detection. This is complemented by our $\tilde{O}(n^{3/4})$ quantum upper bound, which surpasses the classical $\tilde{\Omega}(n)$ lower bound by Le Gall and Miyamoto (ISAAC~2021).
\end{abstract}

\maketitle

\section{Introduction}
\subsection{Background}
The subgraph detection problems in the distributed setting of limited communication bandwidth, including the $\mathsf{CONGEST}$ model, are interesting and well studied problems. %In the $\mathsf{CONGEST}$ model, each node of the network $G=(V,E)$ of size $|V|=n$ has the unique identifier, and can communicate with its neighbors in $G$. The communication proceeds in synchronous rounds where for each edge $(u,v)\in E$, $u$ can send/receive one message of size $O(\log n)$ bits to/from $v$ in each round. 
For a given specific small pattern graph $H$ (the description of $H$ is known to all nodes of the network), the goal of $H$ detection is to decide if the network $G$ contains $H$ as a subgraph (or an induced subgraph). If the network $G$ contains $H$, at least one node outputs ``yes'', otherwise all nodes of the network output ``no''. ($H$ detection is often referred as $H$ freeness, where at least one node outputs ``no'' if the network contains $H$, and otherwise all nodes of the network output ``yes''. In this paper we abuse these two problems since they are essentially equivalent.) There are several variants previously considered in the literature: $H$ listing requires to output all copies of $H$ in $G$ (each node outputs some copies of $H$ and the union of all output lists are equivalent to the list of all copies of $H$ in $G$). %and induced $H$ detection requires to decide if $G$ contains $H$ as an induced subgraph. 
In recent years, significant progress in distributed subgraph detection has been made, and we now understand well the round complexities for detection of e.g., cliques and cycles. However, much less is known about induced paths. We summarize the previous results in this context in Section~\ref{appendix:related work}.

\paragraph*{$P_k$-freeness in the distributed setting}
Since it is known that in the $\mathsf{CONGEST}$ model, detecting non-induced $k$-paths can be done in $O(1)$ rounds for constant $k$~\cite{korhonen_et_al:LIPIcs:2018:8625}, our focus in this paper is specifically on the induced $k$-path detection problem. Induced $k$-paths are particularly more important than non-induced paths because graphs that do not contain an induced $k$-path (often called $P_k$-free graphs) have numerous applications in algorithmic graph theory, particularly in the centralized setting. The absence of induced $k$-node paths imposes structural constraints on graphs, which can be leveraged to develop efficient algorithms for problems like independent set or coloring~\cite{brause20154,gartland2020independent,grzesik2022polynomial,lokshantov2014independent,hoang2010deciding,randerath20043}. Despite its importance, few results are known for $P_k$-freeness in the distributed setting. The only known result we are aware of is the result of Nikabadi and Korhonen~\cite{nikabadi2022beyond}, who explored the multicolored variant of the induced $k$-path detection, where each node is colored by an integer from $\{1,2,\ldots,k\}$, and the goal is to detect an induced path on $k$ different colors. They demonstrated that multicolored induced $P_k$ detection requires $\Omega(n/\log n)$ rounds for any $k\geq 7$. However, their proof fails for $P_k$-freeness. To the best of our knowledge, there is no nontrivial result\footnote{For the trivial case of $k=3$, $P_3$ detection can be done in $O(1)$ rounds} for $P_k$-freeness in the $\mathsf{CONGEST}$ model beyond the general result by~\cite{eden2022sublinear}: for any subgraph $H$ on $k$ nodes, there is a randomized $\mathsf{CONGEST}$ algorithm that solves induced $H$ detection in $\tilde{O}(n^{2-\frac{2}{3k+1}})$ rounds.

\paragraph*{Local certification of $P_k$-freeness}

Another line of research focuses on local certification of $P_k$-freeness. In local certification, small labels called certificates are assigned to the nodes of the graph, and each node decides if the graph satisfies some property by using its local view. There are two important parameters, that is, the size of certificates and the locality of verification. In local certification of size $s$ and locality $\ell$, each node receives an $O(s)$-bit certificate from the external entity called the \textit{prover}, and each node decides its binary output (\textit{accept} or \textit{reject}) only using the unique identifiers and certificates of its $\ell$-hop neighbors. The scheme (with the locality $\ell=1$) is introduced by~\cite{korman2010proof} under the name of proof-labeling schemes, which is now a popular setting in the community of distributed computing:

\begin{definition}[proof-labeling schemes]\label{def:local-certification}
    A proof-labeling scheme ($\mathsf{PLS}_{\ell}$) of size $s$ is a pair ($c,\mathcal{A}$) of functions $c$ and $\mathcal{A}$ of the input graph $G=(V,E)$ such that
    \begin{itemize}
        \item for each node $u$, the function $c$ assigns a bit string $c(u)$ of length $s$ called the certificate of $u$;
        \item for each node $u$, the function $\mathcal{A}$, called the verification algorithm, depends on the certificates of $\ell$-hop neighbors of $u$ as well as the identifiers of these nodes, and decides the output of $u$. More precisely, let $N_{\ell}(u)=\{v_1,\ldots,v_d\}$ be the $\ell$-hop neighbors of $u$, then
        \[
        \mathcal{A}(\mathsf{id}(u),c(u),\mathsf{id}(v_1),c(v_1),\ldots,\mathsf{id}(v_d),c(v_d))
        \] is the output of $u$.
    \end{itemize}
        Let $\mathcal{P}$ be a graph property. We say that there is a $\mathsf{PLS}_{\ell}$ that certifies $\mathcal{P}$ if there exists a verification algorithm $\mathcal{A}$ with locality $\ell$ (which outputs $\mathsf{Accept}$ or $\mathsf{Reject}$) satisfies the following conditions.
    \begin{itemize}
        \item If $G$ satisfies the property $\mathcal{P}$, there exists a certificate function $c$ such that all nodes in $G$ output $\mathsf{Accept}$.
        \item If $G$ does not satisfy the property $\mathcal{P}$, for any certificate function $c$, at least one node in $G$ outputs $\mathsf{Reject}$.
    \end{itemize}
\end{definition}

When the locality of the verification algorithm is $\ell \geq \lfloor \frac{k}{2} \rfloor$, certificates are unnecessary for $P_k$-freeness (i.e., $P_k$-freeness can be solved in $\lfloor \frac{k}{2} \rfloor$ rounds of the verification algorithm, as the radius of $P_k$ is $\lfloor \frac{k}{2} \rfloor$). Thus, our interest lies in cases where the path length is long relative to the locality. Recently, Bousquet, Cook, Feuilloley, Pierron, and Zeitoun~\cite{bousquet2024local} studied this topic by analyzing the relationship among path length, certificate size, and locality of verification. They provided various nontrivial upper and lower bounds on certificate size (we discuss details in Section~\ref{subsec:our-results}). While local certification of various subgraph-related problems has been studied (e.g., ~\cite{bousquet_et_al:LIPIcs.OPODIS.2021.22,crescenzi_et_al:LIPIcs.DISC.2019.13,fraigniaud2024meta,fraigniaud2019distributed}), the only known result for local certification of $P_k$-freeness prior to~\cite{bousquet2024local} was for $k=4$: there is a $\mathsf{PLS}_1$ that certifies $P_4$-freeness with $O(\log n)$ bits~\cite{fraigniaud2023distributed}. Moreover, \cite{fraigniaud2023distributed} proved that every $\mathsf{MSO}_1$ property $\Pi$ can be certified with $O(\log^2 n)$ bits and locality 1 if all graphs satisfying $\Pi$ have bounded clique-width. %Since $P_4$-free graphs have bounded clique-width, $P_4$-freeness can be certified using this meta-theorem. 
Since $P_k$-freeness can be expressed by $\mathsf{MSO}_1$, this suggests that the difficulty gap between certifying $P_4$-freeness and $P_5$-freeness can be attributed to the fact that $P_4$-free graphs have bounded clique-width, whereas $P_5$-free graphs do not.

\subsection{Our results}\label{subsec:our-results}

\subsubsection{Result 1: $P_k$-freeness for $k=4$ and $k=5$}
We first show the following upper bounds for $k=4$ and $k=5$.
\begin{theorem}\label{thm-4-path-upper-bound}
 There exists a randomized algorithm that solves $P_4$-freeness in the $\mathsf{broadcast\;CONGEST}$ model, running in $O\left(\frac{\log n}{\log\log n}\right)$ rounds. 
\end{theorem}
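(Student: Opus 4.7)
The plan is to reduce $P_4$-freeness to a simple local condition and verify it via randomized neighborhood fingerprinting in the $\mathsf{broadcast\;CONGEST}$ model. The reformulation I rely on is: $G$ contains an induced $P_4$ if and only if there exist a vertex $v$, two non-adjacent neighbors $a,b \in N(v)$, and a vertex $w \notin N[v]$ adjacent to exactly one of $\{a,b\}$. Such a configuration yields the induced $P_4$ $(w,a,v,b)$ (when $w\sim a$), and conversely any induced $P_4$ $(p_1,p_2,p_3,p_4)$ exhibits one by taking $v=p_2$, $a=p_3$, $b=p_1$, $w=p_4$. Equivalently, $G$ is $P_4$-free iff for every $v$ and every non-adjacent pair $a,b \in N(v)$, $N(a)\setminus N[v] = N(b)\setminus N[v]$.

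To verify this, every vertex $u$ independently samples a uniform random hash $h(u) \in \{0,1\}^{\Theta(\log n)}$ and broadcasts it in round~$1$; in round~$2$ every $u$ broadcasts its neighborhood fingerprint $F(u) = \bigoplus_{x \in N(u)} h(x)$. After these two rounds each $v$ knows $h(a)$ and $F(a)$ for every $a \in N(v)$, and the identity $F(a)\oplus F(b)=\bigoplus_{x\in N(a)\triangle N(b)} h(x)$ reduces the $(v,a,b)$ condition to checking whether this XOR coincides with the ``internal'' part $\bigoplus_{x \in (N(a)\triangle N(b))\cap N[v]} h(x)$; coincidence certifies $N(a)\triangle N(b) \subseteq N[v]$ with error at most $2^{-\Theta(\log n)}$ per triple.

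The remaining task is to let $v$ compute the internal XOR for every non-adjacent pair of its neighbors while respecting the $O(\log n)$-bit bandwidth. For this I would run $O(\log n/\log\log n)$ additional rounds in which every vertex broadcasts short $O(\log\log n)$-bit sketches of its adjacencies against a random family of pair-wise labels, refreshed each round; iterating with independent randomness lets every $v$ reconstruct, with error $n^{-\Omega(1)}$, the XOR of $h(x)$ over the common neighbors of $v$ and each $a \in N(v)$, and hence the internal XOR for every non-adjacent pair. A union bound over the $O(n^3)$ candidate triples $(v,a,b)$ controls the total error, and each vertex finally compares the two XORs for every such pair, outputting \textsf{Accept} iff all comparisons agree.

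The main obstacle I anticipate is designing the iterated hashing in the third step so that $O(\log n/\log\log n)$ rounds of $O(\log n)$-bit broadcasts jointly yield inverse-polynomial reconstruction error for all triples within the bandwidth budget. Balancing the per-round random-bit budget against the number of pairs each vertex must test is what drives the $\log n/\log\log n$ factor in the round complexity, and turning this balance into a rigorous analysis is the main algorithmic work behind the stated bound.
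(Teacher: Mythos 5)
Your reformulation of $P_4$-freeness is correct: $G$ has an induced $P_4$ iff there is a vertex $v$ with two non-adjacent neighbors $a,b$ and some $w\notin N[v]$ adjacent to exactly one of them, which is equivalent to $N(a)\triangle N(b)\subseteq N[v]$ failing for some such triple. Your first two rounds and the identity $F(a)\oplus F(b)=\bigoplus_{x\in N(a)\triangle N(b)}h(x)$ are also sound, and a short calculation shows the ``internal XOR'' you need reduces cleanly to $C(a,v)\oplus C(b,v)$ where $C(a,v)=\bigoplus_{x\in N(a)\cap N(v)}h(x)$. Up to here the plan is coherent.

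The gap is the third step, and it is not a technicality you can engineer away with cleverer sketches. For $v$ to compute $C(a,v)$ for every $a\in N(v)$, it must learn, for each pair $(a,x)\in N(v)\times N(v)$, whether $a\sim x$ --- i.e., the adjacency matrix of $G[N(v)]$, which is $\Theta(|N(v)|^2)$ bits in the worst case, and this information cannot be pre-compressed because when $G$ is \emph{not} $P_4$-free its 2-hop structure around $v$ is arbitrary. In $\mathsf{broadcast\;CONGEST}$, $v$ receives at most $O(|N(v)|\log n)$ bits per round, so any scheme in which $v$ locally reconstructs $C(a,v)$ for all $a$ needs $\Omega(|N(v)|/\log n)=\Omega(n/\log n)$ rounds on dense instances; no family of $O(\log\log n)$-bit per-round sketches iterated $O(\log n/\log\log n)$ times can carry enough bits. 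The paper avoids this bottleneck entirely by exploiting the cograph structure (connected $P_4$-free graphs have diameter $\le 2$ and maximum degree $\ge n/2$, both checkable in $O(1)$ rounds with early rejection): it builds a depth-2 spanning tree rooted at a max-degree node $r$ in which, by a balls-into-bins argument, every depth-1 node has only $O(\log n/\log\log n)$ children, and then routes the sketches of the Kari et al.\ distributed-sketching protocol for cograph recognition up to $r$, which plays the referee. The bound $O(\log n/\log\log n)$ in the paper is a \emph{routing} cost through that low-fanout tree, not a fingerprint-length or sketch-iteration trade-off, so the mechanism you would need to supply in your third step is fundamentally different from what makes the theorem go through.
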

\begin{theorem}\label{thm-5-path-upper-bound}
 There exists a randomized algorithm that solves $P_5$-freeness in the $\mathsf{CONGEST}$ model, running in $O\left(n\log^2 n\right)$ rounds. 
\end{theorem}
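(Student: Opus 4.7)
The plan is to iterate sequentially over all $n$ nodes of $G$, spending $O(\log^2 n)$ rounds per iteration, so that in the iteration assigned to $u$ we decide whether there is an induced $P_5 = v_1 v_2 v_3 v_4 v_5$ with $v_3 = u$. Because every induced $P_5$ has a unique center vertex, the global output is the OR of the per-iteration outputs, giving total complexity $O(n \log^2 n)$. A BFS tree, computed once up-front in $O(n)$ rounds, would be used to pipeline the iterations and let $u$ coordinate the single phase devoted to it.

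Fixing a pivot $u$, the subproblem reduces to finding $v_2, v_4 \in N(u)$ with $v_2 \not\sim v_4$, together with $v_1 \in N(v_2) \setminus (N[u] \cup N(v_4))$ and $v_5 \in N(v_4) \setminus (N[u] \cup N(v_2))$ satisfying $v_1 \not\sim v_5$. All constraints except the last can be handled in a constant number of rounds directly by $u$: since $v_2$ and $v_4$ are its neighbors, one round of neighborhood exchange lets $u$ know, for each pair $(v_2,v_4) \in N(u)^2$, whether $v_2 \not\sim v_4$ and, for each candidate $v_1 \in N(v_2)$, whether $v_1 \in N[u] \cup N(v_4)$, and symmetrically on the $v_4$-side. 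The inducedness non-edges $v_1 \not\sim v_4$, $v_2 \not\sim v_5$, $v_1 \not\sim v_3$, $v_3 \not\sim v_5$ are all short-range relative to $u$ and can be verified during the same exchange.

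The hard part, and the place where I would invoke the paper's announced novel technique for distinguishing induced from non-induced $5$-paths, is the cross non-edge $v_1 \not\sim v_5$: $v_1$ lives on the $v_2$-side, $v_5$ on the $v_4$-side, and a naive enumeration of candidate pairs would cost $\Omega(n^2)$ per pivot. My plan is to handle this via a randomized sketch. Each $v_2 \in N(u)$ compresses its admissible $v_1$-set into an $O(\log n)$-bit fingerprint that is forwarded to $u$, and symmetrically each $v_4 \in N(u)$ sketches its $v_5$-set; the sketches should be engineered so that, from a pair of them, $u$ can with constant probability detect the existence of a non-adjacent witness pair drawn from the two sets. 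Boosting over $O(\log n)$ independent seeds (with $O(\log n)$ bits per seed per neighbor of $u$) drives the error probability down to $1/\mathrm{poly}(n)$ per pivot and accounts for the $\log^2 n$ overhead. Designing the sketch so that the \emph{coupled} non-edge test between elements of two separately gathered sets is decodable at $u$ from only short messages is the main technical step I anticipate, and is likely where the structural features of $P_5$-free graphs alluded to in the introduction would enter.
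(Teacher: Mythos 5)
Your pivot-and-sketch plan is a genuinely different route from the paper, but it has a gap at exactly the step you flag as the one you did not fill in, and I do not think it can be filled in the form you describe. The obstruction is information-theoretic rather than merely technical. Fix a pivot $u$ of degree $\Delta$. After a global preprocessing phase in which every node broadcasts its neighbor list, $u$ knows its $2$-hop ball, but the predicate you need to decide lives on the \emph{$3$-hop} data: for each non-adjacent pair $(v_2,v_4)\in N(u)^2$, whether the bipartite graph between $S_{v_2}:=N(v_2)\setminus(N[u]\cup N[v_4])$ and $S_{v_4}:=N(v_4)\setminus(N[u]\cup N[v_2])$ has a non-edge. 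The adjacency between two $2$-hop nodes $v_1$ and $v_5$ is held only by $v_1$ and $v_5$ themselves (and their neighbors), and there are up to $\Theta(n^2)$ such bits. In $O(\log^2 n)$ rounds $u$ can receive only $O(\Delta\log^3 n)$ bits, so whatever fingerprints the $v_2$'s send must compress the bipartite complement structure between every pair of candidate sets. That is not a sketchable quantity in general: deciding whether the complement of a bipartite graph between two sets is non-empty requires, in the worst case, knowledge of $\Omega(|S_{v_2}|\cdot|S_{v_4}|)$ adjacency bits, and there is no small oblivious summary of $S_{v_2}$ and $S_{v_4}$ from which this can be read off, because the answer depends on an arbitrary graph on the union rather than on any cardinality- or membership-type statistic. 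You appeal to ``structural features of $P_5$-free graphs'' to rescue this, but your algorithm never actually invokes any such feature, and on a no-instance it would have to output correctly for every pivot simultaneously, so the structure would have to be exploited explicitly, not just alluded to.

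A secondary issue is that even the ``easy'' constraints are not quite local in the way you assert: $v_2$ does not know $N(v_4)$ (since $v_4\notin N(v_2)$), so it cannot form the admissible set $S_{v_2}$ by itself; the filtering would have to be done at $u$, which only makes the required sketch more constrained.

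For contrast, the paper avoids per-pivot search entirely. It first observes that a $P_5$-free graph has diameter at most $3$, so it may reject high-diameter inputs outright. In the diameter-$2$ case it picks the maximum-degree node $r$ (which has degree $\Omega(m/n)$), randomly partitions $V$ into $\sqrt{m/n}$ buckets so that each bucket pair carries $O(n)$ edges, and routes those edges through distinct neighbors of $r$; after $O(n)$ rounds $r$ knows the entire edge set and decides locally. This works for any fixed pattern, not just $P_5$. In the diameter-$3$ case $r$ cannot learn everything, so the paper identifies the ``blind spot'' $F_{\mathrm{bad}}\cup E_{3,3}$ explicitly, proves via a sequence of structural lemmas that any induced $P_5$ touching that blind spot is witnessed locally by some node, and then has $r$ correct its own count by subtracting ``dangerous'' $5$-node patterns that would appear as induced $P_5$'s only because the blind-spot edges were deleted. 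The structural properties of $P_5$-free graphs enter precisely in those lemmas, which is the content your sketch step is missing an analogue of.
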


\paragraph*{Technical challenges}
As mentioned above, there were no nontrivial algorithms for $P_k$-freeness for $k\geq 4$ in the $\mathsf{CONGEST}$ model beyond the general $O(n^{2-2/(3k+2)})$ upper bound~\cite{eden2022sublinear}. Moreover, the previous $\mathsf{CONGEST}$ algorithms for subgraph detection have been mostly focused on non-induced cases. Most techniques from these results seem to be useless for induced subgraph detecion. Typical examples are the \textit{expander decomposition} applicable to cliques~\cite{chang2019distributed,chang2019improved}, and the \textit{BFS search with threshold} applicable to (non-induced) cycles~\cite{censor2020distributed,fraigniaud2024meta}.
We thus need to exploit structural properties of $P_4$/$P_5$-free graphs, resulting our algorithms quite different from the other algorithms for subgraph detection. 
We believe that our approach hints new algorithms for other induced subgraphs.\vspace{2mm}

We turn our attention to the framework of local certification. We get the following certification scheme for $P_5$-freeness. 

\begin{theorem}\label{thm:local-certification}
There is a $\mathsf{PLS}_1$ that certifies $P_5$-freeness with certificates of size $O(n\log n)$.
\end{theorem}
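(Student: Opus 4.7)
The plan is to re-package the novel structural technique underlying Theorem~\ref{thm-5-path-upper-bound} as a static witness that the prover supplies to each node. The $\tilde{O}(n)$-round $\mathsf{CONGEST}$ algorithm for $P_5$-freeness works, at heart, by having each node accumulate an $\tilde{O}(n)$-bit summary --- one $O(\log n)$-bit ``tag'' per other node --- that together with the local view suffices to distinguish induced $5$-paths from non-induced ones. The idea of the proof-labeling scheme is to let the prover precompute exactly that summary and hand it to the nodes, so that verification reduces to local consistency checks.

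Concretely, the certificate $c(u)$ at each node $u$ has size $O(n\log n)$ and consists of two ingredients: (i) a standard spanning-tree certificate of $G$ together with the sorted list of identifiers of $V$; (ii) a vector $(\tau_u(v))_{v\in V}$ of $O(\log n)$-bit tags, namely those that the algorithm of Theorem~\ref{thm-5-path-upper-bound} would produce at $u$ concerning each other node $v$. The verification algorithm at $u$ inspects only the certificates of $u$ and of $N(u)$ and performs three local checks: (a) consistency of the spanning-tree and identifier-list portions via the classical $O(\log n)$-bit $\mathsf{PLS}_1$ for spanning trees; (b) for every neighbor $w\in N(u)$ and every $v\in V$, pairwise agreement $\tau_u(v)=\tau_w(v)$ (extended, when required, to the local transition rule used by the algorithm); (c) the restriction of $\tau_u$ to $\{u\}\cup N(u)$ faithfully reflects the actual adjacencies that $u$ sees. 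Finally, $u$ rejects iff its tag vector, interpreted through the acceptance condition of the algorithm, witnesses an induced $P_5$.

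Completeness is immediate: on a $P_5$-free graph $G$ the prover fills in the tags by simulating the algorithm of Theorem~\ref{thm-5-path-upper-bound}, and all three checks then pass at every node by design. The main obstacle, and where the bulk of the work lies, is the soundness argument: against any adversarial prover, an induced $P_5$ on some $v_1,\ldots,v_5$ must force at least one node to reject. The plan is a rigidity argument --- the pairwise consistency of (b), together with the local truthfulness of (c) and the spanning-tree structure of (a), should force the family $\{\tau_u\}_{u\in V}$ to coincide with the unique tag assignment consistent with the true graph $G$; once this rigidity is established, the correctness of the algorithm of Theorem~\ref{thm-5-path-upper-bound} on a graph containing an induced $P_5$ yields the needed local rejection at one of $v_1,\ldots,v_5$. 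The delicate technical point is engineering the local transition rule so that $1$-hop agreement is strong enough to propagate globally without inflating the certificate beyond $O(n\log n)$ bits, which is precisely where the structural features of $P_5$-free graphs isolated in Theorem~\ref{thm-5-path-upper-bound} must be exploited.
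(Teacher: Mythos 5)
Your high-level instinct is right --- the paper does turn the $\mathsf{CONGEST}$ algorithm of Theorem~\ref{thm-5-path-upper-bound} into certificates, and the paper itself remarks that this is unusual --- but the proposal substitutes a generic ``tag-plus-local-consistency'' scaffold for the specific machinery that actually makes the argument go through, and the soundness paragraph is exactly where the hard content should be and isn't.

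Concretely, there is no natural decomposition of the $\mathsf{CONGEST}$ algorithm's state into a vector $(\tau_u(v))_{v\in V}$ of per-pair tags whose $1$-hop pairwise agreement forces global correctness. The algorithm's critical state lives at the special root $r$, which gathers $E\setminus (F_{bad}\cup E_{3,3})$; the rest of the network holds neighbor lists, a node partition $\mathcal{V}$, and some counters. The paper's certificate is therefore not a tag vector but several structurally different pieces: the full neighbor list $\mathsf{Neighbors}(u)$ (giving each node a $2$-hop view), a diameter-certifying BFS tree, the spanning tree rooted at $r$, the explicit partition $\mathcal{V}$ with counts $\tilde m_{2,2}, m_{bad}, m_{2,3}, m_{3,3}$ and their per-node contributions, and the edge lists $\mathsf{Edges}(v_k)$ that $r$'s neighbor $v_k$ is supposed to have forwarded. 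Local $1$-hop agreement of ``tags'' is nowhere near sufficient to certify that $\mathsf{Edges}(v_k)$ is \emph{complete}: a cheating prover can simply omit an edge of $E_{2,3}$, and no pairwise consistency check between adjacent tag vectors would detect the omission. The paper closes this gap with a counting trick --- a $\mathsf{PLS}_1$ on the spanning tree that certifies a global sum (Lemma~\ref{lem:certification-on-BFS-tree}), so that $r$ can compare $|\mathsf{Edges}|$ against the independently certified sizes $|\tilde E_{2,2}|, |E_{2,3}|, \dots$ and reject if anything is missing. Without this ingredient, your rigidity claim in step (b)/(c) is unsupported.

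The second missing piece is the treatment of edges $r$ never sees. Even when the certificate is honest, $r$ only learns $E\setminus(F_{bad}\cup E_{3,3})$, and removing those edges can create spurious induced $P_5$'s. The paper handles this by (i) arguing (Lemmas~\ref{lem:Fbad-path-detection}, \ref{lem:E33-path-detection}) that any real induced $P_5$ touching $F_{bad}\cup E_{3,3}$ is detected locally by some node from its certificates, and (ii) having $r$ count ``dangerous'' induced copies of each five-node pattern $H\in\mathcal{H}$ (Lemma~\ref{lem:5-node-graph-counting-with-bad-edges}), again via the spanning-tree sum gadget, and subtract this from its own $P_5$ count. Your proposal collapses all of this into ``the acceptance condition of the algorithm,'' which leaves soundness unproven: a careful prover needs to certify not just a tag assignment but a partition, a set of edge lists, a family of global counts, and the fact that every locally detectable $P_5$ is actually reported. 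Those are the structural features the theorem actually relies on, and they do not fall out of a generic $1$-hop agreement argument.
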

%Here, $\mathsf{PLS}_1$ is the model of local certification with locality 1. For the formal definition of $\mathsf{PLS}_1$, refer to Definition~\ref{def:local-certification} in Section~\ref{sec:preliminaries}.
To obtain this result, we directly use the algorithm of Theorem~\ref{thm-5-path-upper-bound}. This is somewhat interesting, as $\mathsf{CONGEST}$ algorithms for subgraph freeness cannot be used for local certification of subgraph freeness in general (unless they are $\mathsf{broadcast\;CONGEST}$ algorithms). 
We also show that the size of our certificates is optimal, up to a subpolynomial factor: 
\begin{theorem}\label{thm:5-path-lower-bound}
For any $k\geq 4\ell + 1$, any $\mathsf{PLS}_{\ell}$ for $P_k$-freeness requires certificates of size $\Omega\left(\frac{n}{e^{O(\sqrt{\log n})}}\right)$.
%In the $\mathsf{broadcast\;CONGEST}$ model, any deterministic algorithm that solves $P_5$-freeness requires $\Omega\left(\frac{n}{e^{O(\sqrt{\log n})}}\right)$ rounds.
\end{theorem}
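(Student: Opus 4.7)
The plan is to apply the standard crossing-argument for proof-labeling schemes, with the hard instances derived from a Behrend-type $3$-AP-free set (which is the classical source of slack of the form $n/e^{O(\sqrt{\log n})}$ in combinatorial lower bounds). Concretely, I would exhibit a family $\mathcal{F}$ of $P_k$-free graphs on $n$ vertices that share a common ``middle'' subgraph $M$ of diameter at least $2\ell+1$ and with only $\mathrm{polylog}(n)$ vertices, such that for any two distinct $G,G'\in\mathcal{F}$ the hybrid graph obtained by taking the left side of $G$, the shared middle $M$, and the right side of $G'$ contains an induced $P_k$ crossing $M$. The standard indistinguishability lemma then forces any $\mathsf{PLS}_\ell$ to assign distinct certificate patterns on $N_\ell(M)$ to different members of $\mathcal{F}$, so some vertex of $N_\ell(M)$ must carry a certificate of size at least $\log|\mathcal{F}|/|N_\ell(M)|$. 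Targeting $|\mathcal{F}|=2^{\Omega(n/e^{O(\sqrt{\log n})})}$ and $|N_\ell(M)|=\mathrm{polylog}(n)$ then yields the theorem.

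The construction is parameterized by a Behrend $3$-AP-free set $B\subseteq\{1,\ldots,N\}$ of size $|B|=N/e^{O(\sqrt{\log N})}$ with $N=\Theta(n)$, so that $\mathcal{F}=\{G_T:T\subseteq B\}$ has size $2^{|B|}=2^{\Omega(n/e^{O(\sqrt{\log n})})}$. Each $G_T$ is obtained by gluing two ``layered'' gadgets $L_T$ and $R_T$ onto the same short middle subgraph $M$. Each gadget extends to depth roughly $\ell$ away from $M$ and consists of bipartite layers of vertices labeled by elements of an arithmetic range; an edge is placed between two labelled nodes in adjacent layers exactly when the difference of their labels lies in $T$. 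The middle $M$ has diameter $2\ell+1$ (which makes the swap invisible to every $\ell$-local verifier) and only $\mathrm{polylog}(n)$ vertices. The condition $k\geq 4\ell+1$ matches the geometry of the gadget: any induced $P_k$ witnessing the swap has an endpoint at depth $\ell$ in $L_T$, traverses $M$ (contributing at least $2\ell+1$ edges), and ends at depth $\ell$ in $R_{T'}$, for a total length of $4\ell+1$.

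For $T\neq T'$ and $b\in T\triangle T'$, the edge labelled by $b$ is present in exactly one side of the hybrid $L_T\cup M\cup R_{T'}$, and this asymmetry combined with fixed ``anchor'' vertices in $M$ exhibits an explicit induced $P_{4\ell+1}$ (extended to $P_k$ by attaching a short fixed pendant when $k>4\ell+1$). In the other direction, the absence of nontrivial $3$-APs in $B$ ensures that $G_T$ itself is induced $P_k$-free: any would-be induced path of the required length would force the label-differences of three consecutive edges to form an arithmetic progression, contradicting the choice of $B$.

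The main obstacle is the precise design of the layered gadget and the middle $M$ to simultaneously satisfy three competing requirements: (a) $G_T$ must be truly induced $P_k$-free, which requires us to rule out not only intra-side paths (handled by the Behrend property) but also mixed paths that leave one side, cross $M$, and re-enter; (b) every discrepancy $T\neq T'$ must force an \emph{induced} $P_k$ in the hybrid, not merely a non-induced one, which typically requires the gadget to be rigid enough (e.g., via bipartite color classes or forbidden chords) to prevent shortcut chords; and (c) the $\ell$-neighborhood of $M$ must stay polylogarithmic in size so that the loss between $\log|\mathcal{F}|$ and the per-vertex certificate size is absorbed into the $e^{O(\sqrt{\log n})}$ factor. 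Reconciling (a) and (b) is the delicate step; once the gadget is correctly tuned, the counting argument together with the indistinguishability lemma directly delivers the stated $\Omega(n/e^{O(\sqrt{\log n})})$ bound.
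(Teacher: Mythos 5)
Your proposal takes a genuinely different route from the paper: you propose a two-sided crossing/counting argument with a fixed separator $M$, a Behrend-type family $\mathcal{F}$, and the inequality $s\cdot|N_\ell(M)|\geq\log|\mathcal{F}|$. The paper instead reduces from the \emph{three-party} nondeterministic number-on-forehead communication complexity of set-disjointness, via the Ruzsa--Szemer\'edi tripartite graph (each edge in exactly one triangle, with $t=n^2/e^{O(\sqrt{\log n})}$ triangles) and a gadget $G^*$ that turns independent triples into induced $P_5$'s, with two hub vertices $x,y$; a $\mathsf{PLS}_\ell$ with certificate size $s$ is simulated with $O(ns)$ bits of three-party communication, giving $s=\Omega(t/n)$. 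The three-party structure is essential, and your two-sided crossing argument has a concrete obstruction at exactly the threshold $k=4\ell+1$.

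Here is the gap. For the swap of $R_T$ with $R_{T'}$ to be invisible, no node may have an $\ell$-ball meeting both $L$ and $R$; this forces the $M$-portion of any crossing induced $P_k$ to use at least $2\ell+1$ vertices. That leaves at most $k-(2\ell+1)=2\ell$ vertices to distribute between the $L$-side and the $R$-side. To \emph{detect} a discrepancy between $T$ and $T'$, the crossing $P_k$ must touch a variable edge (or non-edge) on \emph{both} sides, which requires at least two path vertices inside $L$ and at least two inside $R$ --- a total of at least $4$. For $\ell=1$ and $k=5$, only $2$ vertices are available outside $M$, so no crossing induced $P_5$ can witness the difference $T\neq T'$: any induced $P_5$ with a single $L$-node and a single $R$-node is present or absent independently of $T,T'$, and the hybrid is then $P_5$-free whenever $G_T$ and $G_{T'}$ are. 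The crossing scheme therefore cannot produce the required family at all for $\ell=1$, which is precisely the case establishing the headline result. (The three-party NOF simulation sidesteps this because Charlie does not see $X_C$: the ``cut'' is asymmetric in a way a two-sided swap cannot reproduce.)

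Two further inconsistencies compound this. First, you target $|N_\ell(M)|=\mathrm{polylog}(n)$ while simultaneously describing gadgets that ``extend to depth roughly $\ell$ away from $M$'' with layers labelled by an arithmetic range of size $\Theta(n)$; those two statements cannot both hold, and with $|N_\ell(M)|=\Theta(n)$ your inequality yields only $s=\Omega(1)$ from $\log|\mathcal{F}|=\Theta(n/e^{O(\sqrt{\log n})})$. Conversely, if $|N_\ell(M)|$ really is polylogarithmic, then (by the geometry above) only $\mathrm{polylog}(n)$ variable edges can ever be touched by a crossing $P_k$, so the effective family of distinguishable graphs has size $2^{\mathrm{polylog}(n)}$, again giving a trivial bound. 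Second, your appeal to Behrend sets to certify $P_k$-freeness of $G_T$ (``three consecutive edge label differences forming an AP'') is not substantiated and does not obviously correspond to the structure of an induced $P_k$; in the paper the Behrend/Ruzsa--Szemer\'edi property plays an entirely different role, namely ensuring that removing edges according to disjoint inputs destroys \emph{all} triangles, which, after the $G^*$ transformation, destroys all independent triples of the required form.
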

This result is proved via a combination of several known reduction techniques: we first reduce the nondeterministic three-party communication complexity of set-disjointness function to triangle freeness, which is then reduced to $P_5$-freeness.
%Here the subscript $\ell$ denotes the locality of verification.

\paragraph*{Independent and concurrent works} After submitting the first draft of this paper, we learned about the following independent and concurrent works: Bousquet and Zeitoun~\cite{bousquet2024subquadratic} proved that there is a $\mathsf{PLS}_1$ that certifies $P_5$-freeness with certificates of size $O(n^{3/2})$. Additionally, as noted in~\cite{bousquet2024subquadratic}, Chaniotis, Cook, Hajebi, and Spirkl obtained $\Omega(n^{1-o(1)})$ lower bound for $P_5$-freeness.

\subsubsection{Result 2: $P_k$-freeness for general $k$}

We provide the following lower bounds for $P_k$-freeness in the $\mathsf{CONGEST}$ model using the two-party communication complexity, which is the standard framework for distributed subgraph detection problems.
\begin{theorem}\label{thm:11-path-lower-bound}
Let $d$ be any positive integer.
\begin{itemize}
    \item For $d\leq 2$, $P_k$-freeness for $k\geq 11d$ require $\tilde{\Omega}(n^{2-1/d})$ rounds in the $\mathsf{CONGEST}$ model.
    \item  For $d\geq 3$, $P_k$-freeness for $k\geq 8d$ requires $\tilde{\Omega}(n^{2-1/d})$ rounds in the $\mathsf{CONGEST}$ model.
\end{itemize}
\end{theorem}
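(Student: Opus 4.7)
The plan is to apply the standard two-party communication complexity framework for $\mathsf{CONGEST}$ lower bounds. Alice and Bob are given the two halves of a set-disjointness instance of size $\tilde{\Theta}(n^{2})$, which has communication complexity $\tilde{\Omega}(n^{2})$. They jointly simulate an $r$-round $P_k$-freeness algorithm on a common graph $G(x,y)$ on $\Theta(n)$ nodes, in which the edges inside Alice's vertex set $V_A$ are determined by $x$, the edges inside Bob's set $V_B$ are determined by $y$, and the edges across the cut $E(V_A,V_B)$ are fixed, with $|E(V_A,V_B)| = \tilde{O}(n^{1/d})$. Each round then costs $O(|E(V_A,V_B)|\log n) = \tilde{O}(n^{1/d})$ bits of communication, so matching the disjointness bound yields $r = \tilde{\Omega}(n^{2}/n^{1/d}) = \tilde{\Omega}(n^{2-1/d})$.

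The crucial content is therefore the construction of a graph gadget $G(x,y)$ that (a) embeds both halves of a disjointness instance of size $\tilde{\Omega}(n^{2})$ into private edges, (b) has cut of size $\tilde{O}(n^{1/d})$, and (c) contains an induced $P_k$ if and only if the disjointness instance has an intersection. I would build it in a layered fashion: a candidate induced $P_k$ must traverse $\Theta(d)$ bottleneck ``bridges'' between $V_A$ and $V_B$, each separated by input-dependent segments on the two sides. This is what forces $k$ to grow linearly in $d$, since each crossing of the cut consumes a constant number of vertices along the path, and $\Theta(d)$ crossings are needed so that a single induced $P_k$ can test whether the two input sets share a coordinate. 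The improved constant ($8$ for $d\geq 3$ rather than $11$) should come from a uniform layered construction that becomes efficient only once several layers are stacked; for $d\in\{1,2\}$ the uniform construction is wasteful and extra vertices must be spent on padding to ensure inducedness.

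The hard part, distinguishing this argument from lower bounds for non-induced subgraph detection, is guaranteeing that the certifying $P_k$ is actually \emph{induced}: every pair of non-consecutive vertices along the path must be non-adjacent in $G(x,y)$. The bottleneck vertices that keep the cut small tend to be of high degree and therefore prone to creating chords, so I would arrange them into independent sets and insert ``spacer'' vertices whose adjacencies are fixed by the gadget and not by the inputs. Completeness would then be shown by exhibiting an explicit induced path through a shared coordinate, while soundness would follow from a case analysis arguing that whenever the two input sets are disjoint, every candidate induced path is blocked in at least one layer. The parameter bookkeeping in the soundness proof, together with the verification that no input-dependent chord spoils completeness, will be the main technical obstacle; once those are settled, the communication-simulation step is routine.
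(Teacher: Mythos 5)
Your overall framework is sound and matches the paper's: a two-party simulation over set-disjointness of size $\Theta(n^2)$, a graph on $\Theta(n)$ vertices with input-dependent edges held privately by Alice and Bob, and a fixed cut of size $\tilde{O}(n^{1/d})$, yielding $\tilde\Omega(n^{2-1/d})$ rounds. You have also correctly identified the two technical bottlenecks: controlling chords to enforce inducedness, and treating $d\le 2$ separately.

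What is missing, and would stall a writeup, is the actual mechanism forcing $k=\Theta(d)$. You attribute it to ``$\Theta(d)$ crossings\ldots so that a single induced $P_k$ can test whether the two input sets share a coordinate,'' but the constraint is really a \emph{coding} argument through the bottleneck. The cut has only $O(dn^{1/d})$ vertices, which is far too few to identify one of $n$ indices via a single cut edge; the paper (following the Le Gall--Miyamoto construction for induced cycles) fixes an injection $\sigma\colon [n]\to[dn^{1/d}]^{d}$ so that each index $i$ is encoded by a $d$-element subset $\mathrm{Code}(\cdot)$ of the cut vertices, and then proves a counting lemma showing any induced $P_{8d}$ must use \emph{exactly} $d$ vertices from each of the eight vertex groups $A_1,A_2,B_1,B_2,U_A,U_B,L_A,L_B$. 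It is this forced $d$-vertex passage through the code that synchronizes Alice's and Bob's tested coordinates; without it, your soundness case analysis has no anchor, since a short path could cross the cut once and never reach the input-dependent layer on the other side. Relatedly, the $d\le 2$ regime is not handled by ``padding'': the counting lemma genuinely fails there (e.g.\ for $d=2$ one can induce a $P_{8d+1}$ using $d+1$ vertices of $A_1$), so the paper builds structurally different gadgets with dedicated anchor vertices $p,q,r,s,x,y,z$ and separate inducedness case analyses for $d=1$ and $d=2$.
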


\noindent The $n^{1/d} = n^{1/\Theta(k)}$ factor is unavoidable, in light of the upper bound of $n^{2-\Theta(1/k)}$ established by~\cite{eden2022sublinear}. We will show these lower bounds using the standard reduction from two-party communication complexity. Although the framework used here is classic, there are several challenges that makes the proof highly non-trivial, which we discuss in detail later in Section~\ref{sec:P11-freeness}.

Additionally, our constructions apply to the nondeterminictic two-party communication setting, leading to the following. 
\begin{theorem}\label{thm:local-certification-lower-bound}
Let $\ell$ and $d$ be positive integers. 
\begin{itemize}
    \item For $d \leq 2$, any $\mathsf{PLS}_\ell$ for $P_k$-freeness requires certificates of size $\tilde{\Omega}(n^{2-1/d})$ for $k\geq 4d\ell + 7d$. 
    \item  For $d \geq 3$, any $\mathsf{PLS}_\ell$ for $P_k$-freeness requires certificates of size $\tilde{\Omega}(n^{2-1/d})$ for $k\geq 4d\ell + 4d$. 
\end{itemize}
\end{theorem}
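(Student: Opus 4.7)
The plan is to adapt the construction underlying Theorem~\ref{thm:11-path-lower-bound} to the \emph{nondeterministic} two-party communication complexity setting, exploiting the standard transfer from $\mathsf{PLS}_\ell$ to nondeterministic CC. Concretely, given a $\mathsf{PLS}_\ell$ of size $s$ for $P_k$-freeness and graphs $G(x,y)$ partitioned between Alice (holding $x$) and Bob (holding $y$), one obtains a nondeterministic protocol as follows: the prover supplies the certificate assignment as the nondeterministic witness; Alice and Bob exchange identifiers and certificates of all nodes lying within $\ell$ hops of the Alice/Bob cut (call this boundary $B_\ell$), using $O(s \cdot |B_\ell|)$ bits, and each simulates the verifier on her side. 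Completeness is immediate, and soundness holds because on any no-instance at least one node must reject, and its $\ell$-ball lies entirely in the view of some party once the boundary certificates have been exchanged.

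First I would reuse the graph families $\{G(x,y)\}$ already constructed in the proof of Theorem~\ref{thm:11-path-lower-bound}. These are $d$-fold product gadgets reducing from a disjointness-style source function, and they keep the cut and its $\ell$-hop inflation of size $|B_\ell| = \tilde O(1)$ (with a $\mathrm{poly}(\ell)$ factor absorbed into $\tilde O(\cdot)$). The source function and its $d$-fold product are equally hard in the nondeterministic CC model up to polylogarithmic factors, since the classical set-disjointness lower bounds apply in the nondeterministic setting as well (the $d$-fold product preserves this via a rectangle argument). Consequently $s \cdot |B_\ell| = \tilde\Omega(n^{2-1/d})$, yielding $s = \tilde\Omega(n^{2-1/d})$.

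The next step handles the discrepancy between $k \geq 11d$ (resp.\ $k \geq 8d$) in Theorem~\ref{thm:11-path-lower-bound} and $k \geq 4d\ell + 7d$ (resp.\ $k \geq 4d\ell + 4d$) here. Observe that for $\ell = 1$ the two bounds coincide ($4d + 7d = 11d$ and $4d + 4d = 8d$), so the $\mathsf{CONGEST}$ construction already serves as the base case. For general $\ell$, I would pad each of the $2d$ gadget interfaces (two per product dimension) by inserting a buffer path of length $2\ell$, for a total additive blow-up of $4d\ell$ in the required path length. The buffers guarantee that the verifier's $\ell$-hop view cannot bridge adjacent gadgets, so the local views of boundary nodes remain independent of the far side of the cut.

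The main obstacle is verifying that this buffering preserves the correspondence $G(x,y) \supseteq P_k \Longleftrightarrow f(x,y) = 1$: the inserted paths must be attached to the gadgets so that they neither destroy the intended induced $P_k$'s nor create spurious ones via unintended chords or alternative routings. Essentially, one must revisit the case analysis in the proof of Theorem~\ref{thm:11-path-lower-bound} with each gadget edge replaced by a path of length $2\ell + 1$ and confirm that every induced $P_k$ in the extended graph still arises from exactly the same combinatorial witness as before. Once this structural check goes through, combining the nondeterministic CC lower bound with the bound $|B_\ell| = \tilde O(1)$ and the $\mathsf{PLS}_\ell$-to-nondeterministic-CC reduction yields both claimed certificate-size lower bounds.
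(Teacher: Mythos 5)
Your overall strategy — reduce a $\mathsf{PLS}_\ell$ to nondeterministic two-party communication by exchanging the certificates of nodes near the cut, reuse the graph families from Theorem~\ref{thm:11-path-lower-bound}, and stretch the cut edges into buffer paths to handle larger locality — is exactly the route taken in the paper. However, your quantitative bookkeeping contains an error that just happens to cancel, so the argument as stated does not actually establish the bound.

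The central problem is the claim that the cut (and its $\ell$-hop inflation) has size $|B_\ell| = \tilde O(1)$. In the constructions of Theorem~\ref{thm:11-path-lower-bound} the cut consists of the perfect matchings between $U_A\cup L_A$ and $U_B\cup L_B$, which have $\Theta(n^{1/d})$ edges for $d\geq 2$ (and $\Theta(n)$ edges for $d=1$); after inserting length-$O(\ell)$ buffer paths, the set of nodes whose $\ell$-ball crosses the cut has size $\Theta(\ell\,n^{1/d})$, not $\tilde O(1)$. Correspondingly, the set-disjointness instance embedded in $G_{x,y}$ is on $\Theta(n^2)$ bits, and its nondeterministic communication complexity is $\Omega(n^2)$, not $\tilde\Omega(n^{2-1/d})$ as you wrote. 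The correct derivation is $s\cdot\tilde O(n^{1/d}) \geq \Omega(n^2)$, hence $s = \tilde\Omega(n^{2-1/d})$. As written, both your cut estimate and your communication lower bound are off by a factor of $n^{1/d}$, and these two mistakes silently cancel to produce the right final exponent — so the reasoning, not merely the exposition, is incorrect. Relatedly, describing $G_{x,y}$ for $d\geq 3$ as ``$d$-fold product gadgets'' whose hardness follows by ``a rectangle argument'' misrepresents the construction: it is a single gadget with a $d$-ary encoding map $\sigma:[n]\to[dn^{1/d}]^d$ (not $d$ independent copies), and the hardness comes directly from the $\Omega(n^2)$ nondeterministic complexity of $\mathsf{DISJ}_{n^2}$, with no direct-sum or rectangle argument involved.

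Two smaller points. First, your buffer length of $2\ell$ on each of the $2d$ cut crossings yields an additive increase of $2d(2\ell-1)=4d\ell-2d$ over the base case, giving $k\geq 4d\ell+6d$ rather than $4d\ell+4d$; to hit the theorem's threshold you need buffers of length $2\ell-1$. Second, the last paragraph correctly identifies that one must re-verify the $P_k$-freeness $\Leftrightarrow$ disjointness correspondence after subdivision, but this is left entirely as a check; the paper does carry out this verification (Lemmas~\ref{P22_lemma1}–\ref{P22_lemma2} and the analogous $d\geq 3$ lemmas), and it is the genuinely nontrivial part of the proof, so it should not be deferred as an afterthought.
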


%Finally we present a simple certification for $P_k$-freeness.
%\begin{theorem}\label{thm:local-certification-P6-freeness}
%Let $\ell\geq 2$ be a positive integer. There exists a $\mathsf{PLS}_\ell$ that certifies $P_k$-freeness with certificates of size $O(n\log n)$ for $k\leq 2\ell +2$.
%\end{theorem}

\paragraph*{Comparison with the recent results by Bousquet et al.~\cite{bousquet2024local}}
Let us now compare our results with those of Ref.~\cite{bousquet2024local}, who considered local certification of $P_k$-freeness and obtained the following: 

\begin{theorem}[\cite{bousquet2024local}]\label{thm:bousquet-algorithm}
    If the locality is $\ell \geq 1$, then:
    \begin{itemize}
        \item $P_{k}$-freeness for $k\leq 3\ell - 1$ can be certified with certificates of size $O(n\log^3n)$;
        %\item $P_{k}$-freeness for $k\leq 2\ell + 1$ can be certified with certificates of size $O(n\log n)$;
        \item $P_{k}$-freeness for $k\leq \lceil\frac{14\ell}{3}\rceil -1$ can be certified with certificates of size $O(n^{3/2}\log^2 n)$; 
        \item $P_{k}$-freeness for $k\geq 4\ell + 3$ requires certificates of size $\Omega(n/\ell)$.
    \end{itemize}
\end{theorem}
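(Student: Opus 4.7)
The theorem contains two certificate-size upper bounds (Parts 1 and 2) and one lower bound (Part 3), and I would handle them with three distinct arguments, since the regimes of $k$ versus $\ell$ are qualitatively different.

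For Part 1 ($k \leq 3\ell - 1$, size $O(n\log^3 n)$), the plan is a direct proof-labeling construction based on certifying a global BFS tree together with a per-node global description. The structural observation is that any induced $P_k = v_1 \cdots v_k$ with $k \leq 3\ell - 1$ can be split into three segments of length at most $\ell$ joined at two "anchor" vertices, so each segment is fully contained in the $\ell$-hop view of its anchor. I would give each node a certificate that lists, for every other node, a $\mathrm{polylog}(n)$-bit summary (e.g., depth in a certified spanning tree, plus a structural label encoding which local induced subgraphs are admissible), totaling $O(n\log^3 n)$ bits. The verifier at each node cross-checks its local view against the labels of its $\ell$-neighborhood and rejects upon any inconsistency or detection of a forbidden $P_k$. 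Consistency of the shared global data across nodes is enforced with standard spanning-tree certification; the main obstacle is designing a structural label that is both compact and rich enough to detect every induced $P_k$ that straddles the three segments.

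For Part 2 ($k \leq \lceil 14\ell/3 \rceil - 1$, size $O(n^{3/2}\log^2 n)$), the path length can exceed three times the locality, so the simple three-segment decomposition fails. The plan is to introduce a set of $O(\sqrt{n})$ "landmark" vertices, chosen by the prover so that every potential induced $P_k$ either fits in a single $\ell$-hop view or contains a landmark in a designated position. Each node receives a $\sqrt{n}\log n$-bit distance sketch to all landmarks and an $n\log n$-bit "path-signature" slice attesting to which segments can be extended through it, giving the overall $O(n^{3/2}\log^2 n)$ budget. Verification proceeds segment-by-segment, using a landmark whenever a segment endpoint is not locally visible. The key obstacles are proving that $O(\sqrt{n})$ landmarks suffice for the stated range of $k/\ell$ (probably by a counting argument on how many path-middles can coexist) and setting up a landmark-based consistency check that the prover cannot cheat within the given budget.

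For Part 3 ($k \geq 4\ell + 3$, size $\Omega(n/\ell)$), the plan is a reduction from the nondeterministic two-party communication complexity of set-disjointness on a universe of size $\Theta(n/\ell)$. I would construct a graph partitioned into an Alice half and a Bob half, separated by a thin edge cut, with gadgets encoding subsets $S_A, S_B$ of the universe so that an induced $P_k$ appears precisely when $S_A \cap S_B \neq \emptyset$. Each coordinate of the universe corresponds to a gadget of size $\Theta(\ell)$ on each side, which both yields a universe of size $n/\ell$ out of $n$ total vertices and forces $k \geq 4\ell + 3$ since the witnessing induced path must traverse a gadget of radius roughly $2\ell+1$ on each side plus a small connector. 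A standard simulation argument then converts any $\mathsf{PLS}_\ell$ of size $s$ into a nondeterministic disjointness protocol of cost $O(s)$, giving $s = \Omega(n/\ell)$. The main difficulty is designing the side-gadgets to be induced-$P_k$-free on their own while still admitting the intended $P_k$ exactly at an intersection; any spurious induced path inside a single gadget would break correctness of the reduction.
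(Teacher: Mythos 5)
First, note that this statement is not proven in the paper at all: it is quoted verbatim from Bousquet, Cook, Feuilloley, Pierron and Zeitoun~\cite{bousquet2024local} as background for comparison, so there is no in-paper proof to measure your attempt against. Judged on its own terms, your proposal has genuine gaps in all three parts. For Parts 1 and 2 you do not actually give a certification scheme: the entire content of those upper bounds is the design of a certificate that is simultaneously compact and sound (the prover is adversarial, so ``a structural label encoding which local induced subgraphs are admissible'' and a ``path-signature slice'' must be specified and proven unforgeable), and you explicitly defer exactly these points as ``obstacles.'' Moreover, the specific thresholds $k\le 3\ell-1$ and $k\le \lceil 14\ell/3\rceil-1$ never emerge from your sketches; your three-segment observation for Part 1 only says each segment lies in some anchor's view, but it does not explain how a locality-$\ell$ verifier can soundly certify \emph{non}-adjacency between vertices lying in different segments (which is what induced-$P_k$-freeness requires), nor why $O(\sqrt{n})$ landmarks suffice precisely up to the $14\ell/3$ regime in Part 2.

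For Part 3, the approach itself is off: the accounting in the standard $\mathsf{PLS}_\ell$-to-nondeterministic-communication simulation is not ``cost $O(s)$'' but $O(s\cdot B)$, where $B$ is the number of nodes within distance $\ell$ of the cut (their certificates must be shared for the near-cut nodes to be verified consistently by both players). With a universe of size $\Theta(n/\ell)$ you would need $B=O(1)$ to conclude $s=\Omega(n/\ell)$, yet your construction needs the witnessing induced path to cross the cut in a way that matches an Alice-side coordinate gadget with the \emph{same} Bob-side coordinate; enforcing this index equality through a cut whose $\ell$-neighborhood contains only $O(1)$ nodes is exactly the problematic step, and it is why constructions of this type (including the ones in this paper, Theorems~\ref{thm:5-path-lower-bound} and~\ref{thm:local-certification-lower-bound}) use either dense cuts or many-party/NOF arguments. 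Indeed, as this paper remarks in Section~\ref{sec:P11-freeness}, the $\Omega(n/\ell)$ bound of~\cite{bousquet2024local} is obtained by a counting argument over $P_k$-free graphs rather than by a two-party reduction, so the route you propose would need a substantially different (and currently unjustified) gadget design to go through.
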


They also conjectured the following.
\begin{conjecture}[\cite{bousquet2024local}]\label{conjecture:local certification}
    For all $\alpha > 0$, the optimal size for the local certification of $P_{\alpha \ell}$-free
graphs with locality $\ell$ is of the form $\Theta(n^{2-1/f(\alpha)})$, for some unbounded increasing
function $f$.
\end{conjecture}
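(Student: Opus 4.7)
The plan is to prove Conjecture~\ref{conjecture:local certification} by establishing matching upper and lower bounds of the form $\Theta(n^{2-1/f(\alpha)})$ for a common unbounded increasing function $f$. The two directions are attacked separately: the lower bound builds directly on Theorem~\ref{thm:local-certification-lower-bound}, while the upper bound requires fundamentally new certification schemes that go well beyond Theorem~\ref{thm:local-certification}.

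For the lower bound, I would instantiate Theorem~\ref{thm:local-certification-lower-bound} with $k=\alpha\ell$. For $d\geq 3$, that theorem gives certificates of size $\tilde{\Omega}(n^{2-1/d})$ whenever $\alpha\ell \geq 4d\ell + 4d$, so taking $d$ to be the largest integer satisfying $4d + 4d/\ell \leq \alpha$ yields $d = \Theta(\alpha)$ for $\ell$ sufficiently large relative to $\alpha$. This already establishes $f(\alpha) = \Omega(\alpha)$ on the lower-bound side. To obtain a clean statement uniform in $\ell$, I would refine the reduction, adapting the gadgets so that the small-$\ell$ regime is covered without losing the linear scaling of $d$ in $\alpha$.

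For the upper bound, the goal is a certification scheme of size $\tilde{O}(n^{2-1/g(\alpha)})$ per node with $g$ unbounded and of the same order as $f$. The starting point is the $\tilde{O}(n)$ scheme of Theorem~\ref{thm:local-certification} for $P_5$-freeness, in which the prover encodes, for each node, compact information that lets it distinguish induced $5$-paths from non-induced ones within a local view. To scale to $P_{\alpha\ell}$-freeness at locality $\ell$, I would pursue a hierarchical certification: the prover produces a ``balanced separator decomposition'' of depth $\Theta(\ell)$ so that any induced $P_{\alpha\ell}$ in $G$ must cross $\Theta(\alpha)$ certified separators, and each node stores succinct membership and boundary information for the separators it belongs to. The aim is to tune the separator widths so that the total certificate size per node is $\tilde{O}(n^{2-\Omega(1/\alpha)})$, with explicit verification performed by piecing together the structural information received from $\ell$-hop neighbors.

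The hard part is the upper bound. The $\tilde{O}(n)$ scheme for $P_5$-freeness crucially exploits specific structural properties of $P_5$-free graphs, and the standard tools such as clique-width and modular decomposition collapse beyond $P_4$. For general $k$, the structural theory is much less developed, and it is not obvious that any sub-quadratic scheme exists in the regime where $k$ vastly exceeds $\ell$. Matching the lower bound $n^{2-1/f(\alpha)}$ requires $g$ and $f$ to grow at the same rate, which is likely to demand genuinely new structural insights into $P_k$-free graphs. A more tractable intermediate target is to first resolve the qualitative form of the conjecture---proving the existence of any unbounded $f$ on both sides---before attempting the tight asymptotic form.
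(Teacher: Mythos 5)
The statement you are addressing is a conjecture of Bousquet et al.\ that this paper does \emph{not} prove: the paper only claims a partial answer, namely that Theorem~\ref{thm:local-certification-lower-bound} forces any such $f$ to be at least linearly increasing. Your lower-bound half is essentially that partial answer: instantiating Theorem~\ref{thm:local-certification-lower-bound} with $d=\Theta(\alpha)$ (indeed $d=\lfloor \alpha/8\rfloor$ already satisfies $\alpha\ell \geq 4d\ell+4d$ for every $\ell\geq 1$, so the uniformity-in-$\ell$ caveat you raise is not actually needed) gives $\tilde{\Omega}(n^{2-1/\Theta(\alpha)})$, which is correct but is only one side of a $\Theta(\cdot)$ statement.

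The genuine gap is the upper bound, and your proposal does not close it — it only names a hoped-for construction. A ``balanced separator decomposition of depth $\Theta(\ell)$ such that any induced $P_{\alpha\ell}$ crosses $\Theta(\alpha)$ certified separators'' is not known to exist for general graphs, and even if it did, you give no verification procedure by which $\ell$-hop views plus boundary information would let nodes distinguish induced from non-induced long paths — which is exactly the obstacle that makes the paper's $P_5$ scheme (Theorem~\ref{thm:local-certification}) so delicate already at $k=5$, where it relies on diameter-specific case analysis and counting of the $16$ dangerous $5$-node patterns, none of which scales to $k=\alpha\ell$. The best known upper bounds (Theorem~\ref{thm:bousquet-algorithm}) are $\tilde{O}(n)$ for $k\leq 3\ell-1$ and $\tilde{O}(n^{3/2})$ for $k\leq \lceil 14\ell/3\rceil-1$, i.e.\ they stop at constant $\alpha<14/3$; for larger $\alpha$ no certification of size $o(n^2)$ is known at all, so the conjectured matching upper bound of size $n^{2-1/f(\alpha)}$ remains entirely open. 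As written, your argument establishes only what the paper already establishes (the lower-bound side), and the conjecture itself is not proved by either you or the paper; you should present your contribution as that partial answer, not as a proof of the conjecture.
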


We summarize our results and previous results in Table~\ref{table:our results}.
For $\ell = 1$, the results of Bousquet et al.~\cite{bousquet2024local} provide nontrivial lower bound for $P_7$-freeness and no nontrivial upper bound, so they left the complexities of remaining path lengths open. In Theorems~\ref{thm:local-certification} and~\ref{thm:5-path-lower-bound}, we show the nearly optimal bound of $P_5$-freeness for $\ell = 1$. 

For general $k$, Theorem~\ref{thm:local-certification-lower-bound} shows the first superlinear lower bounds, improving the previous linear lower bounds. Moreover, Theorem~\ref{thm:local-certification-lower-bound} partially answers Conjecture~\ref{conjecture:local certification}: $f$ is at least linearly increasing.
%Our result in Theorem~\ref{thm:local-certification-P6-freeness} provides the first non-trivial $\mathsf{PLS}_2$ upper bound for $P_6$-freeness. For several other values of $k$ (e.g., $k=8$), our upper bound saves a $\log^2 n$ factor compared to the previous $O(n\log^3 n)$ upper bound, and offers a simpler proof.

\begin{table}[tb]\centering
  \caption{Summary of our results and previous results on local certification of $P_k$-freeness. Here $n$ denotes the number of nodes in the network.} %Note that $\mathsf{dMA}_{\ell}$ is stronger than $\mathsf{PLS}_{\ell}$.}
  \begin{tabular}{cccc} \hline
    Path length & Certificate size & Model & Reference \\ \hline
    5 & $O(n\log n)$ &  $\mathsf{PLS}_1$ &Thm~\ref{thm:local-certification} \\ 
    $4\ell + 1$ & $\Omega(n^{1-o(1)})$  & $\mathsf{PLS}_{\ell}$ & Thm~\ref{thm:5-path-lower-bound} \\ 
    $8\ell + 14$ & $\tilde{\Omega}(n^{3/2})$ & $\mathsf{PLS}_{\ell}$ &Thm~\ref{thm:local-certification-lower-bound}\\
    $4d\ell + 4d$, $d\geq 3$ & $\tilde{\Omega}(n^{2-1/d})$ & $\mathsf{PLS}_{\ell}$ & Thm~\ref{thm:local-certification-lower-bound}\\
    %$2\ell$ + 2 & $O(n\log n)$ &  $\mathsf{PLS}_\ell$ &Thm~\ref{thm:local-certification-P6-freeness} \\ 
    $3\ell - 1$ & $O(n\log ^3 n)$  & $\mathsf{PLS}_{\ell}$ &\cite{bousquet2024local}\\
    $4\ell + 3$ & $\Omega(n)$  & $\mathsf{PLS}_{\ell}$  & \cite{bousquet2024local}\\
    $\lceil \frac{14\ell}{3}\rceil - 1$ & $O(n^{3/2}\log^2 n)$ & $\mathsf{PLS}_{\ell}$  &\cite{bousquet2024local}\\
  \end{tabular}\label{table:our results}

\end{table}

\subsubsection{Result 3: Ordered path detection and applications}
We then define and study the following problem called ordered $P_k$ detection.
\begin{definition}[ordered $P_k$ detection]
    Each node of the graph has a color from $\{1,\ldots,k\}$, and the goal is to detect an induced path that consists of edges $\{(p_i,p_{i+1})\}_{i\in \{1,\ldots,k-1\}}$ on $k$ nodes $\{p_i\}_{i\in \{1,\ldots,k\}}$ where $p_i$ is colored by $i$.
\end{definition} 
\paragraph*{Motivation}
The definition of ordered path detection may seem somewhat artificial; however, it can be motivated in a manner similar to that of multicolored path detection studied in~\cite{nikabadi2022beyond}. Algorithms for the multicolored/ordered variants of these problems with color-coding techniques~\cite{alon1995color} are often used to address the standard version. For example, the state-of-the-art $\mathsf{CONGEST}$ algorithm for detecting $2k$-node cycles~\cite{censor2020distributed,fraigniaud2024even} actually detects the ordered variant of $2k$-cycles. Consequently, lower bounds for the ordered variant reflect the limitations of algorithms that employ color-coding. Moreover, as demonstrated in Theorem~\ref{thm:circuit_complexity_barrier_3_path}, we will show that ordered path detection also provides some nontrivial insight for subgraph detection in the quantum $\mathsf{CONGEST}$ model, which is not restricted to algorithms using color-coding.    

\paragraph*{Contribution}

We first prove the following lower bound for $k=5$.
\begin{theorem}\label{thm:ordered-5-path}
 Any randomized algorithm that solves ordered $P_k$ detection for $k\geq 5$ requires $\tilde{\Omega}(n)$ rounds in the $\mathsf{CONGEST}$ model.
\end{theorem}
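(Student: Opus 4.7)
The approach is the standard reduction from two-party communication complexity to the $\mathsf{CONGEST}$ model. I reduce from two-party set disjointness $\mathrm{DISJ}_N$, whose randomized public-coin communication complexity is $\Omega(N)$. Given inputs $x, y \in \{0,1\}^N$ to Alice and Bob, the goal is to construct a $5$-colored graph $G_{x,y}$ on $\Theta(n)$ vertices with a vertex partition into Alice's side $V_A$ (whose induced edges depend only on $x$) and Bob's side $V_B$ (depending only on $y$), joined by a cut of $c$ edges, such that $G_{x,y}$ contains an ordered $P_5$ if and only if $\exists i:\ x_i = y_i = 1$. The standard simulation argument then yields $r \ge \Omega(N/(c\log n))$ rounds for any $r$-round $\mathsf{CONGEST}$ algorithm, so by balancing $N$ and $c$ (e.g.\ $N = \Theta(n)$ with $c = \mathrm{polylog}(n)$, or $N = \Theta(n^2)$ with $c = O(n)$), one obtains $r = \tilde{\Omega}(n)$.

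\textbf{Construction template.} The graph has a single color-$1$ vertex $a$ on Alice's side and a single color-$5$ vertex $b$ on Bob's side. For each $i \in [N]$ there is a color-$2$ vertex $u_i$ on Alice's side with $a\text{-}u_i$ present iff $x_i = 1$, and, symmetrically, a color-$4$ vertex $v_i$ on Bob's side with $v_i\text{-}b$ present iff $y_i = 1$. A color-$3$ ``alignment hub'' sits at the Alice/Bob interface and forces every ordered $P_5$ to take the form $a\text{-}u_i\text{-}w\text{-}v_i\text{-}b$ (with matching index on both sides), so that the existence of an ordered $P_5$ is equivalent to non-disjointness. For $k > 5$, the same gadget extends by attaching a tail $b\text{-}z_6\text{-}z_7\text{-}\cdots\text{-}z_k$ on Bob's side (with $z_j$ colored $j$); ordered $P_k$s in the extended graph biject with ordered $P_5$s in $G_{x,y}$, so the lower bound transfers without modification, and no additional cut edges are introduced.

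\textbf{Main obstacle.} The delicate step is to design the color-$3$ hub so that (i)~alignment is enforced (no spurious induced $P_5$s of the form $a\text{-}u_i\text{-}w\text{-}v_j\text{-}b$ with $i\neq j$), while (ii)~the cut $c$ is small enough to yield a nontrivial bound. A naive per-index hub with a dedicated color-$3$ vertex $w_i$ between $u_i$ and $v_i$ needs one cut edge per $i$, giving $c = \Theta(N)$ and only the trivial $O(1/\log n)$ round lower bound. The plan is to amortize the cut by exploiting the strict color pattern of ordered $P_5$: only one color-$3$ vertex and no repeated color can appear in any valid path, so alignment failures between indices can be blocked by internal Alice-side and Bob-side chords (between the $u_i$'s, between the $v_i$'s, or between color-$2$/color-$4$ vertices and strategically placed helper vertices) that create forbidden shortcuts on misaligned candidate paths without crossing the cut. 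Verifying that the resulting graph admits exactly the intended family of ordered $P_5$s, witnessing $x_i = y_i = 1$ at some $i$ and no others, is the most technically involved part of the argument; once this is established, the theorem follows directly from the $\Omega(N)$ communication complexity of $\mathrm{DISJ}_N$ and the simulation argument.
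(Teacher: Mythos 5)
There is a genuine gap: you correctly identify the overall framework (reduce from two-party set-disjointness, simulate the $\mathsf{CONGEST}$ algorithm across a sparse cut) and you correctly flag the central difficulty (enforcing index alignment without a $\Theta(N)$-edge cut), but the construction that would resolve that difficulty is precisely what you do not supply, and the template you commit to is headed in a direction that cannot easily be made to work. Two concrete problems with the template: first, with a single color-$1$ vertex $a$ and a single color-$5$ vertex $b$ plus $N$ color-$2$ and $N$ color-$4$ vertices, the graph has $\Theta(N)$ vertices; so to keep the host graph on $\Theta(n)$ vertices you are forced into $N = \Theta(n)$ and then need a cut of size $\mathrm{polylog}(n)$, which is a far stronger requirement than you argue for. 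Second, and more fundamentally, your color split puts colors $1,2$ on Alice and $4,5$ on Bob with color $3$ at the interface, so the two endpoints of the $P_5$ live on opposite sides of the cut; this rules out the one clean alignment mechanism that is available, namely enforcing alignment via a \emph{non-edge} between the two endpoints (a constraint that is free when both endpoints are on the same side).

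The paper's construction solves both problems at once with an asymmetric color split that you did not consider. Alice holds three size-$n$ classes $A_1, A_2, A_5$ (colors $1,2,5$) and Bob holds two size-$n$ classes $B_3, B_4$ (colors $3,4$), so both endpoints of the path are on Alice's side. The cut consists of just two perfect matchings, $(a_{2,j}, b_{3,j})$ and $(a_{5,j}, b_{4,j})$ for $j \in [n]$, giving $c = O(n)$. The disjointness instance has size $n^2$: indices $k \in [n^2]$ are identified with pairs $(j_1, j_2) \in [n]^2$, and the input bits control the edges $(a_{1,j_1}, a_{2,j_2})$ (from $x$) and $(b_{4,j_1}, b_{3,j_2})$ (from $y$), both internal to one side. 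The alignment is enforced by making $(a_{1,j}, a_{5,k})$ an edge for every $j \neq k$: since an induced path must have its two endpoints non-adjacent, any ordered $P_5$ $a_{1,j_1}\text{-}a_{2,j_2}\text{-}b_{3,j_2}\text{-}b_{4,j_4}\text{-}a_{5,j_5}$ is forced to have $j_1 = j_5 = j_4$, and the two matchings force the other index equalities, so the path exists iff $x_k = y_k = 1$ for $k$ determined by $(j_1,j_2)$. This is exactly the ``internal chord'' idea you gesture at, but it works only because the two endpoints share a side. You acknowledge that verifying the construction ``is the most technically involved part of the argument''; since that is where the proof actually lives, the proposal as written does not establish the theorem.
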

We then focus on ordered $P_3$ detection. Our next finding is that, similarly to triangle detection, proving non-trivial lower bounds for ordered $P_3$ detection is difficult. This result also shows a circuit complexity barrier for induced $C_4$ detection.
\begin{theorem}[Informal]\label{thm:circuit_complexity_barrier_3_path}
For any constant $\varepsilon>0$, proving any lower bound of the form $\Omega(n^{\varepsilon})$ for ordered $3$-path detection in the $\mathsf{CONGEST}$ model or $\Omega(n^{1/2+\varepsilon})$ for induced $C_4$ detection in the quantum $\mathsf{CONGEST}$ model implies super-linear lower bounds on circuit complexity for an explicit family of boolean functions.
\end{theorem}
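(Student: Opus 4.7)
The plan is to combine the standard $\mathsf{CONGEST}$-to-communication reduction with a graph encoding of an explicit boolean function, so that any hypothetical round lower bound of the stated form translates into a two-party (resp.\ quantum) communication lower bound strong enough to imply, via the classical circuit-to-communication connection, super-linear circuit lower bounds for an explicit boolean function family.

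First I would carry out the simulation step. Partitioning the vertex set into two halves $V_A, V_B$ with a cut of size $O(n)$, any $R$-round $\mathsf{CONGEST}$ algorithm (resp.\ quantum $\mathsf{CONGEST}$ algorithm) yields a two-party classical (resp.\ quantum) protocol of cost $\tilde{O}(Rn)$ bits (resp.\ qubits). Consequently, an $\Omega(n^{\varepsilon})$ $\mathsf{CONGEST}$ round lower bound for ordered $P_3$ detection forces an $\tilde{\Omega}(n^{1+\varepsilon})$ two-party communication lower bound, while an $\Omega(n^{1/2+\varepsilon})$ quantum $\mathsf{CONGEST}$ round lower bound for induced $C_4$ detection forces an $\tilde{\Omega}(n^{3/2+\varepsilon})$ quantum communication lower bound.

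Next, I would construct a gadget that encodes an explicit boolean function $f$ (for instance, a matrix-product-verification style function asking whether the boolean product of Alice's and Bob's input matrices has a designated pattern) into an instance of ordered $P_3$ detection (resp.\ induced $C_4$ detection) on a graph of $n = \Theta(\sqrt{N})$ nodes, where $N$ is the input length of $f$. The encoding should (i) place Alice's and Bob's portions on their respective sides of the cut, keeping the cross-cut linear, and (ii) enforce the induced (non-edge for $P_3$, non-chord for $C_4$) condition through an explicit mask, so that the detection problem on the encoded graph captures the evaluation of $f$. For ordered $P_3$, color-$1$ and color-$3$ vertices sit in opposite halves with color-$2$ indexing the inner product, and the non-edge between colors $1$ and $3$ encodes the mask; induced $C_4$ admits an analogous encoding where the non-chord constraint plays the same role. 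Matching the resulting $n^{1+\varepsilon}$ (resp.\ $n^{3/2+\varepsilon}$) communication lower bound against the standard circuit-to-communication upper bounds for $f$ yields a super-linear circuit size lower bound for an explicit boolean function family, which is a long-standing open problem in circuit complexity.

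The main obstacle is the gadget construction. One must simultaneously ensure that (i) only $O(n)$ edges cross the Alice/Bob cut so that the simulation bound is tight, (ii) the induced condition is correctly encoded without introducing spurious induced subgraphs, and (iii) the encoded function $f$ is sufficiently explicit that super-linear circuit lower bounds for it genuinely lie beyond current techniques (matching, e.g., the $\Omega(n \log n)$ frontier for explicit circuits). The quantum case additionally requires verifying that the encoded function itself admits a $\tilde{O}(n^{3/2})$-qubit Grover-based quantum protocol, so that the $n^{1/2+\varepsilon}$ round threshold is not foreclosed by a trivial upper bound. I expect the bulk of the technical work to lie in adapting subgraph-detection gadgets from the non-induced setting to the induced setting while maintaining tight cross-cut control.
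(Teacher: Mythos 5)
Your proposal has the logical direction of the argument reversed, which makes it unsound. The standard $\mathsf{CONGEST}$-to-two-party simulation (partition $V$ into $V_A,V_B$ with an $O(n)$-edge cut, simulate each round with $\tilde{O}(n)$ bits) shows: if there is an $R$-round $\mathsf{CONGEST}$ protocol, then there is an $\tilde{O}(Rn)$-bit two-party protocol. The contrapositive gives a round lower bound \emph{from} a communication lower bound. You claim the converse — that an $\Omega(n^\varepsilon)$ round lower bound ``forces'' an $\tilde{\Omega}(n^{1+\varepsilon})$ communication lower bound — but this implication is simply false; the round hardness could be due to congestion, locality, or identifier issues that evaporate in the two-party model. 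Even if this direction held, there is no generic ``circuit-to-communication connection'' that turns a two-party communication lower bound into a super-linear circuit-size lower bound: functions with maximal two-party communication complexity (such as inner product over $\mathbb{F}_2$) have linear-size circuits, and the Karchmer--Wigderson connection controls formula/circuit \emph{depth} via a very specific relational game, not size via total communication. So both steps of your chain are broken.

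The paper's actual argument is of a different shape entirely. It follows the Eden et al.\ framework: one reduces ordered $P_3$ detection to a pair of counting problems (an ``edge-colored triangle counting'' problem and ordered $P_3$ counting) on high-conductance graphs via a recursive expander decomposition, then invokes their simulation lemma (Lemma~26 of Eden et al.) which states that a boolean function with a low-depth, near-linear-size circuit can be evaluated quickly in $\mathsf{CONGEST}$ \emph{on an expander} using the graph's mixing property for routing. Thus a \emph{fast circuit} implies a \emph{fast $\mathsf{CONGEST}$ protocol} on the expander components, so a round lower bound there yields a circuit lower bound — the logical direction is the opposite of your cut-based reduction. For the quantum $C_4$ half, the paper does not build a new gadget at all: it shows that a $T$-round protocol for ordered $P_3$ detection yields, via distributed quantum search over the $n$ choices of a cycle-vertex, an $\tilde{O}(\sqrt{n}\,T)$-round quantum algorithm for induced $C_4$, so an $\Omega(n^{1/2+\varepsilon})$ quantum lower bound for $C_4$ immediately gives an $\Omega(n^\varepsilon)$ lower bound for ordered $P_3$, folding the second claim into the first. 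You would need to replace your two-party encoding scheme with this expander-decomposition-plus-circuit-simulation machinery for the argument to go through.
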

Note that our circuit complexity barrier for induced $C_4$ detection is in the \textit{quantum} $\mathsf{CONGEST}$ model~\cite{le2018sublinear}.
We then complement this result by showing a nontrivial upper bound for induced $C_4$.

\begin{theorem}\label{thm:induced-4-cycle}
In the quantum $\mathsf{CONGEST}$ model, induced $C_4$ detection can be solved in $O(n^{3/4})$ rounds with high probability.
\end{theorem}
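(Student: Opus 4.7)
My plan is to adapt the distributed quantum Grover search framework of Le~Gall and Magniez~\cite{le2018sublinear} to the problem of induced $C_4$ detection. The target round complexity $\tilde{O}(n^{3/4})$ matches the form $\sqrt{N\cdot T}$ with $N=n$ and $T=\sqrt{n}$, suggesting a quantum Grover search over a size-$n$ space in which each candidate is verified by a classical distributed subroutine of cost $\tilde{O}(\sqrt{n})$.

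First, I would perform a preprocessing phase that constructs a BFS spanning tree and sets up the routing primitives used by the Grover oracles; this takes $\tilde{O}(\sqrt{n})$ quantum rounds using standard techniques and provides the backbone needed to route pairwise adjacency queries between arbitrary nodes. Next, I would reformulate the existence of an induced $4$-cycle $a\text{-}b\text{-}c\text{-}d\text{-}a$ as a search problem over the ``diagonal'' pair $(a,c)$ of non-adjacent vertices: for each such pair, the task reduces to deciding whether there exist two common neighbors $b,d\in N(a)\cap N(c)$ with $bd\notin E$. Applying a nested quantum Grover search, with an outer layer over the pivot $a$ of size $n$ and an inner classical or Grover-accelerated verification of amortized cost $\tilde{O}(\sqrt{n})$ per candidate, yields the claimed $\tilde{O}(n^{3/4})$ total round complexity.

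The main obstacle I expect is implementing the verification subroutine within the tight per-oracle budget, since it must check non-edge constraints such as $(a,c)\notin E$ and $(b,d)\notin E$. Unlike non-induced cycle detection, where edges can be certified locally by their endpoints, induced-subgraph detection forces us to certify the absence of diagonals, which is non-local information in the $\mathsf{CONGEST}$ model: a node only knows its own incident non-edges, so ruling out a distant chord requires communication. I plan to address this by precomputing aggregated neighborhood information along the BFS tree so that any adjacency query between two specified nodes can be answered in $\tilde{O}(\sqrt{n})$ rounds, and by carefully amortizing this cost across the nested Grover layers so that no single layer exceeds the budget required to reach the final $\tilde{O}(n^{3/4})$ bound. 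A secondary subtlety will be ensuring that the Grover oracles are implemented reversibly and unitarily despite relying on classical communication subroutines, which I would handle with the standard purification techniques already developed for quantum $\mathsf{CONGEST}$ algorithms.
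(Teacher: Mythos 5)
Your complexity accounting does not balance, and the search structure is different from what the paper actually does. You invoke the formula $\sqrt{N\cdot T}$ with $N=n$, $T=\sqrt{n}$ to land on $n^{3/4}$, but the distributed Grover search primitive of Le Gall--Magniez costs $\sqrt{N}\cdot T$ rounds, not $\sqrt{N\cdot T}$. With an outer Grover over $n$ pivots and an $\tilde{O}(\sqrt{n})$-round verification oracle, the total is $\tilde{O}(\sqrt{n}\cdot\sqrt{n}) = \tilde{O}(n)$, which is no better than the classical lower bound you are trying to beat. Nothing in the proposal recovers the missing $n^{1/4}$ factor; ``amortizing across nested Grover layers'' does not change the fact that each outer iteration must invoke the oracle once.

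The paper's proof is structured quite differently and rests on two ideas you do not have. First, a degree threshold $\delta$ splits cycles into \emph{light} (some vertex of degree $\le\delta$) and \emph{heavy}. For light cycles, light nodes broadcast their neighbor lists in $O(\delta)$ rounds, after which every edge $(v,w)$ can, \emph{in parallel}, run a single $O(\sqrt{n})$-round Grover search for the missing fourth vertex; this gives $O(\delta+\sqrt{n})$, with no multiplicative $\sqrt{n}$ stacking. Second, for heavy cycles the paper does \emph{not} do a Grover search over pivots: instead, each node exposes a random $O(n/\delta)$-size slice of its incidence list (via a shared bucketing of neighbors into $\delta$ random classes), so that a fixed heavy $C_4$ $(u,v,w,x)$ is fully visible to $u$ with probability $\ge 1/\delta$. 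One run of this sampling-plus-$\sqrt{n}$-search costs $O(n/\delta+\sqrt{n})$ rounds, and \emph{amplitude amplification over the $1/\delta$ success probability} (Lemma~\ref{lem:amplitude-amplification}) multiplies by only $\sqrt{\delta}$, giving $O\bigl((n/\delta+\sqrt{n})\sqrt{\delta}\bigr)$. Balancing at $\delta=\sqrt{n}$ yields $O(n^{3/4})$. The crucial point you are missing is that the square-root speedup is applied to the \emph{success probability of a cheap randomized sampling step} (a $\sqrt{\delta}$ factor), not to a vertex search of size $n$ (a $\sqrt{n}$ factor). Your diagonal-pair reformulation and BFS-routing preprocessing do not address this; they leave you with an outer $\sqrt{n}$ layer that cannot be afforded.
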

Previously, no nontrivial upper bound for induced $C_4$ detection was known, and our bound for induced $C_4$ detection beats a classical $\tilde{\Omega}(n)$ lower bound~\cite{legallISAAC21}.

\subsection{Related work}\label{appendix:related work}

\paragraph*{Cliques and Cycles}

Among all pattern graphs, subgraph detection in the $\mathsf{CONGEST}$ model has particularly focused on detecting cliques and cycles.
For $p$-cliques, denoted $K_p$, it is known that the round complexity of $K_p$ listing is $\tilde{\Theta}(n^{1-2/p})$ for all constant $p\geq 3$~\cite{chang2021near,censor2021tight}. The $\tilde{\Omega}(\sqrt{n})$ lower bound for $K_p$ detection for any $p\geq 4$ is also known~\cite{czumaj2020detecting}. 
For cycles, the complexity varies depending on whether the cycle is induced or not. For non-induced $k$-cycles, it is known that the round complexity of $C_k$ detection is $\tilde{\Theta}(n)$ for all odd constant $k \geq 5$~\cite{drucker2014power}. When $k$ is even, the current best upper bounds are $\tilde{O}(n^{1-2/k})$ for all $k\geq 4$~\cite{CFGLLODISC20,fraigniaud2024even}, and the best lower bounds are $\tilde{\Omega}(\sqrt{n})$ for all $k\geq 4$~\cite{korhonen_et_al:LIPIcs:2018:8625}. Induced $k$-cycles are significantly harder to detect, with lower bounds of $\tilde{\Omega}(n^{2-1/\Theta(k)})$ for all $k\geq 4$~\cite{legallISAAC21}. The best upper bound for induced $H$ detection for any pattern graph $H$ on $k$-nodes is $\tilde{O}(n^{2-1/\Theta(k)})$~\cite{eden2022sublinear}.

While some of these results are tight, such as for $K_4$ detection and $C_4$ detection (both induced and non-induced), there remains a substantial gap between upper and lower bounds for other subgraphs. For instance, the current upper bound for triangle detection is $\tilde{O}(n^{1/3})$, but no better-than-constant lower bound is known. There are barriers to proving better lower bounds, which can be categorized as follows:
\begin{itemize}
    \item \textbf{Two-Party Communication Complexity Barrier:} It is known that improvements in lower bounds cannot be achieved via reductions from two-party communication complexity, which is a common technique for proving existing lower bounds for subgraph detection in the $\mathsf{CONGEST}$ model. It is known that none of the current lower bound $\tilde{\Omega}(\sqrt{n})$ of $p$-clique detection for $p\geq 5$~\cite{czumaj2020detecting}, the current lower bound $\tilde{\Omega}(\sqrt{n})$ of $6$-cycle detection~\cite{eden2022sublinear}, and the current lower bound $\tilde{\Omega}(n)$ of induced $k$-cycle detection for $k\in\{5,6,7\}$~\cite{legallISAAC21} can be improved by any polynomial factor through such reductions.
    \item \textbf{Circuit Complexity Barrier:} Improving lower bounds for subgraph detection problems may require breakthroughs in circuit complexity. Proving lower bounds of the form $\Omega(n^\varepsilon)$ for triangle detection~\cite{eden2022sublinear}, $\Omega(n^{1/2+\varepsilon})$ for $6$-cycle detection~\cite{CFGLLODISC20}, $\Omega(n^{1-1/361^2+\varepsilon})$ for $2k$-cycle detection for $k>360$~\cite{eden2022sublinear}, or $\Omega(n^{3/5+\varepsilon})$ for $p$-clique detection for $p\geq 5$~\cite{CFLLOITCS22} would imply super-linear lower bounds on polynomial-depth circuits with constant fan-in and fan-out gates.
\end{itemize}

\paragraph*{Further related work}

A substantial body of work has investigated subgraph detection in distributed settings, with a particular focus on non-induced cases~\cite{abboud2020fooling,censor2015algebraic,censor2020sparse,censor2022deterministic,dolev2012tri,fischer2018possibilities,10.1145/3662158.3662793,fraigniaud2023power,fraigniaud2019distributed,gonen2018OPODIS,doi:10.1137/20M1326040,10.1145/3460900}. There is a useful survay of distributed subgraph detection by Censor-Hillel~\cite{censorICALP2021invited} that provides a comprehensive overview of these studies. 

After 2020, research has extended to the quantum $\mathsf{CONGEST}$ model, where each node can perform arbitrary local quantum computation and each message can contain $O(\log n)$ quantum bits instead of a bit string. Izumi, Le Gall, and Magniez~\cite{izumi2020quantum}  introduced a quantum algorithm for triangle detection that runs in $\tilde{O}(n^{1/4})$ rounds, improving upon the classical upper bound of $\tilde{O}(n^{1/3})$. Subsequently, Censor-Hillel et al.~\cite{CFLLOITCS22} improved it to $\tilde{O}(n^{1/5})$, as well as showed $\tilde{O}(n^{1-2/(p-1)})$ upper bounds for $p$-clique ($p\geq 7$). For (non-induced) $k$-cycles, van Apeldoorn and de Vos~\cite{AVPODC22} showed $\tilde{O}(n^{\frac{1}{2}- \frac{1}{2k+2}})$ upper bounds for $k\in\{4,6,8,10\}$ which led to the first improvement from the current $\tilde{O}(n^{1-2/k})$ classical upper bounds. Recently, \cite{fraigniaud2024even} improved them to $\tilde{O}(n^{\frac{1}{2}- \frac{1}{k}})$ for any even $k\geq 4$.

In addition to subgraph detection, quantum algorithms have been developed for other problems in the quantum $\mathsf{CONGEST}$ model~\cite{AVPODC22,elkin2014can,fraigniaud2024even,le2018sublinear,wu2022quantum}, and there are results for other quantum distributed models and quantum local certification~\cite{akbari2024online,coiteux2024no,fraigniaud2021distributed,legallSTACS2019,izumi2019quantum,a16070332,le2022nontrivial,le2023MFCS,le2023STACS}.

\section{Preliminaries}\label{sec:preliminaries}
We write $[n]=\{1,2,\ldots,n\}$.
For a graph $G=(V(G),E(G))$, we say that $H=(V(H),E(H))$ is a subgraph of $G$ if there is an injective function $\phi:V(H)\rightarrow V(G)$ such that $(u,v)\in E(H) \Rightarrow (\phi(u),\phi(v))\in E(G)$ for any pair of nodes $u,v\in V(H)$.
We say that $H$ is an induced subgraph of $G$ if there is an injective function $\phi:V(H)\rightarrow V(G)$ such that $(u,v)\in E(H) \Leftrightarrow (\phi(u),\phi(v))\in E(G)$ for any pair of nodes $u,v\in V(H)$. Let $P_k$ be a path on $k$ nodes. We say that $G$ is $P_k$-free if $G$ does not contain $P_k$ as an induced subgraph. Throughout the paper, we assume that the input graph is connected, otherwise we can treat each connected component separately.

\paragraph*{The $\mathsf{CONGEST}$ model and variants}
In the $\mathsf{CONGEST}$ model~\cite{peleg2000distributed}, each node of the network $G=(V,E)$ has a distinct $O(\log n)$-bit identifier and can communicate with its neighbors in a synchronized manner. In each round each node (1) does some local computation and (2) sends an $O(\log n)$-bit message to each of its neighbors. In the initial state, each node only knows its own input and the identifiers of its neighbors. The $\mathsf{broadcast\;CONGEST}$ model is a weaker model in the sense that each node can only broadcast a $O(\log n)$-bit message in each round.
In the $\mathsf{congested\;clique}$ model~\cite{lotker2003mst}, we allow all-to-all communication in each round. That is, the communication topology is always the $n$-node clique, and the input graph $G$ is a subgraph of the clique. In the quantum $\mathsf{CONGEST}$ model~\cite{le2018sublinear}, each node can send $O(\log n)$-qubit quantum message to each of its neighbors, instead of $O(\log n)$-bit classical message.

%With this extended definition, a local certification given in Definition~\ref{def:local-certification} has the locality 1 ($\mathsf{PLS}_1$). 

%We also consider the unicast and randomized extension of $\mathsf{PLS}_{\ell}$, in which after receiving certificates of $f(n)$ bits, each node can run any $\ell$-round distributed algorithm with communication bandwidth $O(f(n))$. (If the algorithm is restricted to 1-round broadcast algorithm, the model is equivalent to $\mathsf{PLS}_1$.) We call this model $\mathsf{dMA}_{\ell}$, as it can be viewed as a generalization of distributed Merlin-Arthur decision protocol~\cite{fraigniaud2019SIROCCO}. Note that the condition that each node can see certificates at distance $\ell\geq 2$ is stronger than just allowing $\ell$-round algorithm with communication bandwidth $O(f(n))$ since in the former case nodes can see quadratic number of certificates. 

%In summary, our model $\mathsf{dMA}_{\ell}$ is the strongest among all local certification models we mention in this paper.

\section{Algorithm for $P_4$-freeness}\label{sec:P4-freeness}

In this section we prove Theorem~\ref{thm-4-path-upper-bound}. 

It is well known that a graph is $P_4$-free iff it is a cograph~\cite{corneil1981complement}.
A cograph is defined as a graph that is constructed using the following rules:
\begin{itemize}
    \item A single-node graph $K_1$ is a cograph;
    %\item If $G$ is a cograph, so is its complement;
    \item For cographs $G=(V_G,E_G)$ and $H=(V_H,E_H)$, its disjoint union $(V_G \uplus V_H, E_G \uplus E_H)$ is a cograph;
    \item For cographs $G=(V_G,E_G)$ and $H=(V_H,E_H)$, its join $(V_G \uplus V_H, E_G \uplus E_H \uplus (V_G\times V_H))$ is a cograph;
\end{itemize}
Using this characterization, we can construct a low-congestion spanning tree for any cograph as follows. A similar property on the spanning tree of cographs is also used in distributed (interactive) proofs for $P_4$-freeness~\cite{fraigniaud2023distributed,montealegre2021compact}. We use the following results on the balls into bins problem.

\begin{lemma}[Balls into bins problem~\cite{gonnet1981expected}]\label{balls-into-bins}
    Assume that there are $n$ bins and $n$ balls. Each ball uniform randomly selects one bin which it places into.
    Then, with probability at least $1-1/n$, all bins have at most $O\left(\frac{\log n}{\log \log n}\right)$ balls.
\end{lemma}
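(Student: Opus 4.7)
The plan is to bound the load of each bin individually and then apply a union bound over the $n$ bins. For a fixed bin, the probability that any specific set of $k$ balls all land in that bin is $(1/n)^k$; summing over the $\binom{n}{k}$ subsets of $k$ balls yields
$$\Pr[\text{bin contains at least } k \text{ balls}] \leq \binom{n}{k}\left(\frac{1}{n}\right)^k \leq \frac{1}{k!}.$$
Combining this with Stirling's bound $k! \geq (k/e)^k$ gives a per-bin failure probability of at most $(e/k)^k$.

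The next step is to pick $k$ so that this per-bin probability is at most $1/n^2$, which will make the union bound over bins yield total failure probability at most $1/n$. Setting $k = c\log n / \log\log n$ and taking logarithms, I would estimate
$$k\log(k/e) \;=\; \frac{c\log n}{\log\log n}\bigl(\log\log n - \log\log\log n - O(1)\bigr) \;\geq\; 2\log n$$
for any constant $c > 2$ and all sufficiently large $n$. This gives $(e/k)^k \leq e^{-2\log n} = 1/n^2$, as required. A union bound over the $n$ bins then shows that, with probability at least $1 - n \cdot n^{-2} = 1 - 1/n$, no bin contains more than $k - 1 = O(\log n / \log\log n)$ balls.

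There is no serious obstacle here: the argument is entirely classical. The only mildly delicate point is controlling the constant $c$ when estimating $\log k = \log\log n - \log\log\log n + O(1)$ via the iterated logarithm, but this is routine bookkeeping and is the reason the threshold cannot be pushed meaningfully below $\log n / \log\log n$ with this type of moment calculation.
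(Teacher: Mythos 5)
Your proof is correct. The paper itself does not give a proof of this lemma; it simply cites Gonnet's 1981 paper (which establishes the sharper $(1+o(1))\frac{\log n}{\log\log n}$ asymptotics via a more refined analysis). Your argument is the standard first-moment/union-bound derivation of the $O(\log n/\log\log n)$ bound: bound the probability that a fixed bin receives at least $k$ balls by $\binom{n}{k}n^{-k}\le 1/k!\le (e/k)^k$, choose $k=\Theta(\log n/\log\log n)$ with a constant large enough to push this below $n^{-2}$, and union-bound over the $n$ bins; all the steps, including the estimate $\log k = \log\log n - \log\log\log n + O(1)$, are handled correctly.
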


\begin{lemma}
There exists a $\mathsf{broadcast\;CONGEST}$ algorithm that runs in $O(1)$ rounds and performs the following:
\begin{itemize}
\item If some node rejects, then $G$ is not a connected cograph.
\item Otherwise, with high probability, it outputs a rooted spanning tree of depth 2, where each node, except the root, has at most $O\left(\frac{\log n}{\log \log n}\right)$ children.
\end{itemize}
\end{lemma}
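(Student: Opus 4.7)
The plan is to exploit two structural properties of a connected cograph $G$ on $n\ge 2$ nodes. First, because the top of the cotree must be a join, we can write $G=G_1*G_2$ with $a := |V_{G_1}|\ge b := |V_{G_2}|$ and $a+b=n$. Second, every vertex in $V_{G_i}$ is adjacent to all of $V_{G_{3-i}}$, which immediately gives diameter at most $2$ and maximum degree at least $a\ge\lceil n/2\rceil$. These two facts let us elect a dense root in $O(1)$ broadcast rounds and place all remaining vertices at depth at most $2$.

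The algorithm will run in a constant number of rounds. In rounds $1$--$2$ every node broadcasts $(\deg(v),\mathsf{id}(v))$ and then the lexicographic maximum of $(\deg(\cdot),\mathsf{id}(\cdot))$ it has so far seen; if two neighbors disagree on the resulting winner, they reject (this signals that the diameter exceeds $2$, so that $G$ cannot be a connected cograph). Otherwise every node agrees on a common vertex $r$, which is declared to be the root. In round $3$, every neighbor of $r$ broadcasts the flag ``I am a neighbor of $r$''; each non-neighbor $v$ of $r$ can then read off which of its own neighbors lie in $N(r)$. In round $4$, a neighbor of $r$ sets $p(v)=r$ and a non-neighbor $v$ selects $p(v)$ uniformly at random from its set of depth-$1$ neighbors---rejecting if this set is empty, which cannot happen in a connected cograph---then broadcasts $p(v)$. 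In round $5$, every depth-$1$ node counts how many nodes have chosen it as parent and rejects if this count exceeds $C\log n/\log\log n$ for a sufficiently large constant $C$.

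To bound the maximum number of children assigned to a non-root node, the critical observation is that in both possible placements of $r$ the number of depth-$2$ nodes and the number of depth-$1$ choices available to each of them are controlled in lockstep. If $r\in V_{G_2}$, then $N(r)\supseteq V_{G_1}$, so every depth-$2$ node lies in $V_{G_2}\setminus\{r\}$ (at most $b-1$ of them) and sees at least $a\ge b$ depth-$1$ neighbors in $V_{G_1}$. If instead $r\in V_{G_1}$, then $\deg(r)\ge a$ forces $\deg_{G_1}(r)\ge a-b$, so there are at most $b-1$ non-neighbors of $r$ in $V_{G_1}$, and each of them is adjacent to all of $V_{G_2}$, i.e.\ to at least $b$ depth-$1$ neighbors. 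In either case we are throwing at most $b-1$ balls into bins such that each ball is uniform on a set of at least $b$ bins, so for any fixed bin $u$ the expected load is at most $(b-1)/b<1$. A Chernoff bound of the same flavour used for Lemma~\ref{balls-into-bins} will then give that the load exceeds $C\log n/\log\log n$ with probability $1/n^{\Omega(1)}$, and a union bound over the at most $n$ bins yields the required $O(\log n/\log\log n)$ maximum load with high probability.

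The main obstacle I foresee is the case $r\in V_{G_1}$: a depth-$2$ vertex there loses access to the large pool of $V_{G_1}$-neighbors and is only guaranteed $b$ depth-$1$ choices within $V_{G_2}$, where $b$ can be as small as a constant. The saving grace is that the same inequality $\deg(r)\ge a$ that pushes the maximum-degree vertex to the larger side also forces $\deg_{G_1}(r)\ge a-b$, thereby shrinking the pool of depth-$2$ vertices to at most $b-1$; balls and bins shrink together and the balls-into-bins analysis will carry through at essentially the same quality as in the standard uniform regime.
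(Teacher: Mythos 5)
Your proposal is correct and takes essentially the same approach as the paper: elect the maximum-degree vertex $r$ as the root (its degree is at least $|V_{G_1}|\ge\lceil n/2\rceil$ in a connected cograph), hang $N(r)$ at depth $1$, have each remaining vertex pick a uniformly random parent in $N(r)$, and run the balls-into-bins analysis with a case split on whether $r$ lies on the larger or smaller side of the top-level join; in the harder case $r\in V_{G_1}$ you use exactly the paper's observation that $\deg(r)\ge a$ forces $|V_{G_1}\setminus(N(r)\cup\{r\})|\le b-1$, so the number of balls shrinks in lockstep with the number of available bins. One small caveat worth fixing: your round-$5$ rejection (a depth-$1$ node rejecting when it receives more than $C\log n/\log\log n$ children) can fire on a genuine connected cograph with probability $n^{-\Omega(1)}$, which contradicts the first bullet of the lemma as stated (rejection must deterministically certify that $G$ is not a connected cograph); the paper avoids this by not rejecting at that step and instead letting the ``with high probability'' qualifier absorb the rare bad event, and you should do the same.
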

\begin{proof}
We first reject if the diameter of the graph is not $O(1)$, since every connected cograph has diameter at most 2. This can be accomplished in $O(1)$ rounds by running a BFS search from the node with minimum identifier as the depth of the BFS tree provides a $2$-approximation of the diameter.
From the definition of cographs, a connected cograph $G=(V,E)$ is constructed from the join operation on two distinct node sets $V_1$ and $V_2$ where $V=V_1\uplus V_2$ and $|V_1|\geq |V_2|$. 
Let $r$ be the node with the maximum degree in $G$. Note that $r$ can be found in $O(1)$ rounds since the diameter of the graph is constant. If $\mathrm{deg}(r) < n/2$, the algorithm rejects (if $G$ is a connected cograph, then the maximum degree is at least $n/2$ as each node in $V_2$ has degree at least $\abs{V_1}\geq n/2$). Consider the tree rooted at $r$ where all neighbors of $r$ are at depth 1. As below, we can construct the desired spanning tree with high probability if the graph is a connected cograph. 
\begin{itemize}
    \item {Case 1: $r\in V_2$.} Define $V_2' = V_2 \backslash (N(r)\cup \{r\})$. Observe that $V_1 \times V'_2 \subseteq E$ and $|V'_2| \leq |V_1|$. Each $v \in V_2'$ uniform randomly selects one node from its neighbors as its parent in the tree. Now, each depth-1 node $u$ (bin) is selected as the parent of $v$ (ball) with probability at most $1/\abs{V'_2}$ as $\mathrm{deg}(v)\geq \abs{V_1}$. Then from Lemma~\ref{balls-into-bins}, all depth-1 nodes have at most $O\left(\frac{\log n}{\log \log n}\right)$ children with high probability. 
    %Consider that each $u \in V_1$ selects each $v \in V_2'$ with probability $\frac{\log n}{|V'_2|}$ and each $v \in V_2'$ randomly selects one of such $u$'s as its parent. Then the number of children of $u \in V_1$ is at most $\tilde{O}(1)$ with probability $1-1/\mathrm{poly}(n)$ and each $v \in V_2'$ is selected by at least one node $u \in V_1$ with probability $1-1/\mathrm{exp}(n)$. Thus the constructed tree is a spanning tree with high probability.
    \item {Case 2: $r\in V_1$.} Define $V_1' = V_1 \backslash (N(r)\cup \{r\})$. Since $\mathrm{deg}(r)\geq |V_1|$, we have
    \begin{align*}
        |V'_1| = |V_1| - (\mathrm{deg}(r) - |V_2|) - 1 \leq |V_2|.
    \end{align*}
    Each node $v$ in $V'_1$ selects one of its neighbors as its parent uniform randomly.  Now, each depth-1 node $u$ (bin) is selected as the parent of $v$ (ball) with probability at most $1/\abs{V_2}$ as $\mathrm{deg}(v)\geq \abs{V_2}$. Then from Lemma~\ref{balls-into-bins}, all depth-1 nodes have at most $O\left(\frac{\log n}{\log \log n}\right)$ children with high probability. 
\end{itemize}

\end{proof}

\begin{proof}[Proof of Theorem~\ref{thm-4-path-upper-bound}]
 Ref.~\cite{kari2015solving} demonstrated a protocol in the randomized multiparty simultaneous messages model (also known as the distributed sketching model), where each node sends a message of size $O(\log n)$ to a referee. The referee can then determine if $G$ is a cograph with high probability. We can simulate this protocol in the constructed depth-2 spanning tree, with the root $r$ acting as the referee. Since each depth-1 node has at most $O\left(\frac{\log n}{\log \log n}\right)$ children, sending all messages to $r$ is completed in $O\left(\frac{\log n}{\log \log n}\right)$ rounds.  
\end{proof}

\section{Algorithm and certification for $P_5$-freeness} \label{sec:P5-freeness-upper-bound}
%\subsection{$P_5$-freeness}\label{sec:P5-freeness-upper-bound}
In this section we prove Theorem~\ref{thm-5-path-upper-bound} and Theorem~\ref{thm:local-certification}.

\paragraph*{A High-Level Overview of Our $\mathsf{CONGEST}$ Algorithm}
Our algorithm aims to efficiently detect whether a given graph is $P_5$-free, focusing on cases where the graph has a diameter of 2 or 3 (observe that if the diameter is at least 4, then it must not be $P_5$-free). We treat these cases separately in Section~\ref{section:diameter-2-case} and Section~\ref{section:diameter-3-case} due to their distinct properties.\\

\noindent\textbf{Case 1: Diameter 2.} 
If the input graph has a diameter of 2, our approach leverages the fact that a certain node can gather complete information about the graph in $\tilde{O}(n)$ rounds. Specifically, the node with the maximum degree, denoted as $r$, can collect all the edges in the graph. This is feasible because:
\begin{enumerate}
    \item The high degree of $r$ ensures sufficient communication bandwidth to receive $\Omega(m)$ edges in $O(n)$ rounds.
    \item By using the balanced edge partition technique~\cite{chang2019distributed,censor2021tight}, we can distribute $O(n)$ different edges to each neighbor of $r$.
\end{enumerate}
In the first $O(n)$ rounds, each neighbor of $r$ receives $O(n)$ different edges thanks to the diameter constraint. In the subsequent $O(n)$ rounds, all the edges in the graph are sent to $r$, allowing it to perform a complete $P_5$ detection. A noteworthy point here is that this procedure can be used for $H$-freeness for any pattern graph $H$, and the condition that the diameter is at most 2 seems to be crucial -- for any constant $\varepsilon > 1$, there is a pattern graph $H$ of constant size such that $H$-freeness requires $\Omega(n^{2-\varepsilon})$ rounds even when the diameter of the input graph is 3~\cite{fischer2018possibilities}.\\

\noindent\textbf{Case 2: Diameter 3.}
In the case of a graph with diameter 3, the larger diameter necessitates more tricky definitions and case analysis, resulting in a more involved proof. We outline the main ideas of our proof here before giving details.

Let $V_i$ be the set of nodes at distance $i$ from $r$, for $i \in \{1, 2, 3\}$. The edges (except the ones incident to $r$) are partitioned as $E_{i,j} = E \cap (V_i \times V_j)$ for $i \leq j$. We employ a procedure $\mathcal{P}_{\mathsf{collect}}$ (Algorithm~\ref{algorithm_collect}) that runs in $O(n)$ rounds, enabling $r$ to gather $E_{1,1}$ and $E_{1,2}$. After this procedure, if $r$ detects any missing edge in $(E_{2,2}\backslash F_{bad}) \cup E_{2,3}$—where $F_{bad}$ is a carefully defined subset of $E_{2,2}$—it can safely conclude that the graph is not $P_5$-free (Lemma~\ref{lem:reject-condition}). Therefore, if $r$ does not reject, it learns $E\backslash (F_{bad} \cup E_{3,3})$. 

For any $P_5$ containing an edge from $F_{bad} \cup E_{3,3}$, detection is performed by nodes connected to the endpoints of these edges (Lemma~\ref{lem:Fbad-path-detection} and Lemma~\ref{lem:E33-path-detection}). For any $P_5$ composed solely of edges from $E\backslash (F_{bad} \cup E_{3,3})$, detection is performed by node $r$, which now has access to $E\backslash (F_{bad} \cup E_{3,3})$. The key challenge here is distinguishing between proper $P_5$s (induced by the entire edge set $E$) and improper $P_5$s (induced by $E\backslash (F_{bad} \cup E_{3,3})$ but not by $E$). We address this by having $r$ count the improper patterns and subtract this count from the total number of $P_5$s it finds (Lemma~\ref{lem:5-node-graph-counting-with-bad-edges}). This yields the correct count of proper $P_5$s, ensuring accurate detection.

\subsection{Subgraph freeness in graphs with diameter two}\label{section:diameter-2-case}
Here we assume that the graph $G=(V,E)$ with $n$ nodes and $m$ edges has diameter 2.
Let $r\in V$ be the node with maximum degree. Then $\mathrm{deg}(r) = \Delta = \Omega(m/n)$.
Divide the node set $V=\{r\}\cup V_1 \cup V_2$ where $V_i$ is the set of nodes whose distance to $r$ is $i$. We first consider that every node $u$ broadcasts the list of its neighbors in $O(n)$ rounds. After that $r$ learns all the edges connected to $\{r\}\cup V_1$.
In the remaining part we show that, with high probability, $r$ can learn the edge set $E\cap (V_2\times V_2)$ in another $O(n)$ rounds. Therefore, $r$ can locally decide if the graph is $P_5$ free. 

Assume that we divide $V_2$ into $\sqrt{m/n}$ subsets $V_2^i$ for $i\in [\sqrt{m/n}]$ as follows (assuming $\sqrt{m/n}$ is an integer): each node $v\in V_2$ chooses an integer $i\in [\sqrt{m/n}]$ uniformly at random and joins $V_2^i$. We then label the set $V_1$ as $\{v_j|j\in [\Delta] \}$. All the integers $i$ and the label of $V_1$ nodes are informed to all the nodes in the network in $O(n)$ rounds.
We use the following lemma.% which is obtained by slightly modifying the existing lemma shown in~\cite{chang2019distributed}.
\begin{lemma}[\cite{chang2019distributed}]\label{lemma_partition}
    Consider a graph with $\bar{m}$ edges and $\bar{n}$ vertices. We generate a subset $S$ by letting each
vertex join $S$ independently with probability $p$.
Suppose that the maximum degree is $\Delta\leq \bar{m}p/20\mathrm{log}\bar{n}$ and $p^2\bar{m}\geq 400\mathrm{log}^2\bar{n}$. 
Then, with probability $1-1/n^c$ for sufficiently large $c$, the number of edges in the subgraph induced by $S$ is at most
$6p^2 \bar{m}$.
\end{lemma}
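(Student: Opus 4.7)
The expected value is immediate: by linearity, $\mathbb{E}[X] = p^{2}\bar{m}$ where $X = \sum_{e=\{u,v\}} Y_{u}Y_{v}$ counts the edges in the induced subgraph $G[S]$, and each $Y_{v}$ is the indicator that $v \in S$. The entire content of the lemma is in the concentration of $X$ around its mean up to a constant multiplicative factor, with failure probability inverse polynomial in $\bar{n}$.

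The plan is to exploit the dependency structure. The indicators $X_{e} = Y_{u}Y_{v}$ are pairwise independent whenever the two edges share no vertex, and each Bernoulli $Y_{v}$ appears in at most $\mathrm{deg}(v) \le \Delta$ of the $X_{e}$'s. Consequently $X$ is a degree-$2$ polynomial in independent Bernoullis whose hypergraph of dependencies has bounded ``read-frequency'' $\Delta$. I would apply a concentration tool tailored to this situation: for instance, the Chernoff bound for read-$k$ families of Gavinsky--Lovett--Saks--Srinivasan with $k = \Delta$; equivalently, decompose the edges into $\Delta + 1$ matchings via greedy edge coloring and apply a standard Chernoff bound within each matching (since edges of a matching share no endpoint, the associated $X_{e}$'s are genuinely independent) before summing over the matchings. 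As a sharper alternative, Kim--Vu polynomial concentration is designed exactly for low-degree polynomials in independent Bernoullis and would also go through.

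With the tail inequality in hand, the remaining work is to plug in the two hypotheses $\Delta \le \bar{m}p/(20\log \bar{n})$ and $p^{2}\bar{m} \ge 400 \log^{2}\bar{n}$. For the constant-factor deviation $X \ge 6 p^{2}\bar{m}$, a Chernoff-type exponent of order $\mathbb{E}[X]/\Delta = p^{2}\bar{m}/\Delta$ simplifies, via the first hypothesis, to at least $20 p \log \bar{n}$; combining with the second hypothesis (which bounds $p$ from below) and a Bernstein-style refinement that uses the variance $\mathrm{Var}(X) \le p^{2}\bar{m} + 2 p^{3} \Delta \bar{m}$ yields a tail of the form $\exp(-\Omega(\log^{1+\varepsilon}\bar{n}))$, and hence $\bar{n}^{-c}$ for any large constant $c$ after tuning the hidden constants.

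The main obstacle I anticipate is the regime where $p$ is close to its lower bound $20 \log \bar{n}/\sqrt{\bar{m}}$: here the naive read-$k$ Chernoff exponent is tight but not obviously strong enough on its own, so one has to either push through the higher-moment method on the degree-$2$ polynomial or exploit the variance bound explicitly. The numerical factors $20$ and $400$ in the hypotheses are calibrated precisely so that this variance-aware argument closes the gap, and I expect the bulk of the proof to verify this calibration rather than perform any deep combinatorial step.
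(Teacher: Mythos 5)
The paper does not prove this lemma; it cites it (as Lemma~4.2 of Chang, Pettie, and Zhang, the reference \cite{chang2019distributed}), so there is no internal proof to compare against. Evaluated on its own merits, your proposal has a genuine gap, and it is not the mere ``calibration'' issue you flag at the end.

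Every tool you list produces a tail exponent that is, up to constants, at most $p^{2}\bar{m}/\Delta$: the GLSS read-$k$ bound with $k=\Delta$ gives exponent $D_{\mathrm{kl}}(6p^{2}\|p^{2})\,\bar{m}/\Delta \approx 5.75\, p^{2}\bar{m}/\Delta$, the matching decomposition inherits the same quantity through the per-matching Chernoff bounds, and Kim--Vu gives at best $\lambda = O\left((p\bar{m}/\Delta)^{1/4}\right)$ before paying the built-in $n^{k-1}=n$ factor, which wipes out any subpolynomial exponent. Hypothesis~1 only guarantees $p^{2}\bar{m}/\Delta \ge 20p\log\bar{n}$, and that lower bound genuinely vanishes within the allowed parameter range. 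Take $G = K_{\bar{n}}$ and $p = 40\log\bar{n}/\bar{n}$: then $\Delta = \bar{n}-1$, hypothesis~1 holds with equality, $p^{2}\bar{m}\approx 800\log^{2}\bar{n}$ so hypothesis~2 holds, yet $p^{2}\bar{m}/\Delta \approx p^{2}\bar{n}/2 = 800\log^{2}\bar{n}/\bar{n} \to 0$. In this regime GLSS, the matching decomposition, and Kim--Vu all return tail bounds of the form $\exp(-o(1))$, i.e.\ nothing. The ``Bernstein-style refinement'' does not rescue the argument either: any Freedman/Bernstein bound must use the true almost-sure increment cap, which is $\Delta$, and the resulting denominator $\mathrm{Var}(X)+\Delta t/3$ still leaves a factor of $p$ in the exponent. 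Your claimed $\exp(-\Omega(\log^{1+\varepsilon}\bar{n}))$ tail is not produced by these tools under the stated hypotheses.

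The lemma \emph{is} true in that example, but for a reason your decomposition cannot see: in $K_{\bar{n}}$ one has $|E(G[S])|=\binom{|S|}{2}$, and the deviation $|E(G[S])|\ge 6p^{2}\bar{m}$ forces $|S|\ge\sqrt{6}\,p\bar{n}$, an event of probability $\exp(-\Theta(p\bar{n})) = \bar{n}^{-\Theta(1)}$. The useful exponent is $\Theta(p\bar{n})$, not $\Theta(p^{2}\bar{m}/\Delta)$, and the two can differ by an unbounded factor. A correct proof therefore has to first control coarser, linear quantities --- $\sum_{v\in S}\deg(v)$ concentrates with exponent $\Omega(p\bar{m}/\Delta)=\Omega(\log\bar{n})$ by Bernstein, and each $d_{S}(v)$ concentrates per vertex --- and then combine those bounds with a two-stage exposure/decoupling of the quadratic count, rather than applying a single off-the-shelf concentration inequality directly to the degree-2 polynomial $X$. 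Your proposal skips exactly this structural step.
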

Here we set 
$\bar{n}=n$, $\bar{m}=m$, $p=\sqrt{n/m}$. Clearly, $p^2\bar{m}\geq 400\mathrm{log}^2\bar{n}$ holds. As $\Delta < n$, one can verify that $\Delta\leq \bar{m}p/20\mathrm{log}\bar{n}$ holds if $\bar{m}=m\geq 400 n\log^2 n$. We assume that $m\geq 400 n\log^2 n$ holds since otherwise a simple $O(m)=\tilde{O}(n)$-round algorithm suffices. Then, with high probability, $|E\cap V_2^i\times V_2^j| = O(p^2 m)=O(n)$ for all $i,j \in [\sqrt{m/n}]$. Consider the node $v_k \in V_1$ collects the edge set $E\cap V_2^i\times V_2^j$ for $k = i + (j-1)\sqrt{m/n}$. To do this, assume that $(u,v)\in E\cap V_2^i\times V_2^j$. Then there exists a node $w$ such that $(u,w),(w,v_k)\in E$ since the diameter of the graph is 2. $w$ knows the edge $(u,v)$ because of the previous $O(n)$ round communication and thus $w$ can send it to $v_j$. Since $|E\cap V_2^i\times V_2^j| = O(n)$, this is done in $O(n)$ rounds. Finally, $v_k$ sends $E\cap V_2^i\times V_2^j$ to $r$ in $O(n)$ rounds. Now we have established the following theorem.

\begin{theorem}\label{thm-diameter-2-upper-bound}
 In graphs with diameter 2, for any subgraph $H$, (non-induced and induced) $H$-freeness can be solved in $O(n\log^2 n)$ rounds in the $\mathsf{CONGEST}$ model with high probability.
\end{theorem}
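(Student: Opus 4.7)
The plan is to reduce $H$-freeness in a diameter-$2$ graph to the task of gathering the entire edge set at a single node, which can then test for $H$ locally. Since $G$ has diameter $2$, a node $r$ of maximum degree $\Delta$ sees everything through a BFS tree of depth $2$, and moreover $\Delta \geq 2m/n$, so $r$ has enough incoming bandwidth to receive $\Omega(m)$ edges in $O(n)$ rounds. The rest of the work is to route $E$ to $r$ under the $O(\log n)$-bandwidth constraint.

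Concretely, I would proceed in two phases. In the first phase, every node pipelines its adjacency list along a BFS tree rooted at $r$ in $O(n)$ rounds; after this, $r$ (and every node in $V_1=N(r)$) knows all edges incident to $\{r\}\cup V_1$, so only edges internal to $V_2=V\setminus(\{r\}\cup V_1)$ remain. In the second phase, I would assume $m\geq 400n\log^2 n$ (otherwise plain flooding handles everything in $\tilde O(n)$ rounds) and apply the Chang--Pettie--Saranurak random partition trick: sample a uniform partition $V_2=V_2^1\sqcup\cdots\sqcup V_2^t$ with $t=\sqrt{m/n}$, broadcast the random assignments in $O(n)$ rounds, and index the $\Delta\geq t^2$ nodes of $V_1$ as $\{v_{i,j}\}_{(i,j)\in[t]^2}$. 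Applying Lemma~\ref{lemma_partition} with $p=1/t=\sqrt{n/m}$ yields $|E\cap(V_2^i\times V_2^j)|=O(n)$ for every pair $(i,j)$ with high probability. Using the diameter-$2$ property, for each edge $(u,v)\in E\cap(V_2^i\times V_2^j)$ there is a node $w$ adjacent to both $u$ and the designated collector $v_{i,j}$; since $w\in N(u)$ already knows $(u,v)$ from phase one, $w$ forwards it to $v_{i,j}$, and finally $v_{i,j}$ relays its $O(n)$ accumulated edges to $r$ along its direct edge to $r$ in a further $O(n)$ rounds.

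The main obstacle will be controlling per-link congestion in the two rerouting steps. The choice $p=\sqrt{n/m}$ is calibrated so that three quantities are simultaneously $O(n)$: the number of edges per bucket $V_2^i\times V_2^j$ (by Lemma~\ref{lemma_partition}), the number of buckets $t^2=m/n$ funnelled through distinct neighbors of $r$, and hence the number of edges each $v_{i,j}$ must send to $r$. Checking that the hypotheses of Lemma~\ref{lemma_partition}, namely $\Delta\leq pm/(20\log n)$ and $p^2 m\geq 400\log^2 n$, both follow from $m\geq 400n\log^2 n$ is routine, and the polylogarithmic slack in the lemma together with broadcasting the random labels pushes the total round count to $O(n\log^2 n)$. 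Once $r$ has reconstructed $E$, it decides $(induced~or~non\text{-}induced)$ $H$-freeness by brute-force local search on the known graph, establishing the theorem uniformly in $H$.
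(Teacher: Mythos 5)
Your proposal is essentially the paper's proof: maximum-degree collector $r$, adjacency-list broadcast so that every node learns its two-hop view, the Chang--Pettie--Saranurak random partition of $V_2$ with $p=\sqrt{n/m}$, routing each bucket $E\cap(V_2^i\times V_2^j)$ to a dedicated neighbor $v_{i,j}$ of $r$ via a diameter-$2$ intermediary, and a brute-force fallback when $m<400n\log^2 n$. One small imprecision: Phase 1 should be each node broadcasting its adjacency list to \emph{all} its neighbors (not pipelining along a BFS tree), since Phase 2 relies on an arbitrary $w\in N(u)$ already knowing $u$'s incident edges; also, the $\log^2 n$ in the round bound comes from the dense-threshold fallback $O(m)=O(n\log^2 n)$, not from slack in the partition lemma.
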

It is noteworthy that this result does not hold when the diameter is greater than 2, since for any constant $\varepsilon>0$, there is a pattern graph $H$ such that $H$-freeness in graphs with diameter 3 requires $\Omega(n^{2-\varepsilon})$ rounds~\cite{fischer2018possibilities,legallISAAC21}.

\subsection{$P_5$-freeness in graphs with diameter three}\label{section:diameter-3-case}
Assume that the diameter of the input graph (with $n$ nodes and $m$ edges) is 3 and $m\geq 400 n\log^2 n$. Similarly to the previous case, we divide the node set into $V=\{r\}\cup V_1 \cup V_2 \cup V_3$ where $V_i$ is the set of nodes whose distance to $r$ is $i$. For all $i,j\in \{1,2,3\}$, $i\leq j$, we also define $E_{i,j}$ by the edge set between $V_i$ and $V_j$. We use a subprotocol $\mathcal{P}_{\mathsf{collect}}$ described in Algorithm~\ref{algorithm_collect}, and analyze it here.

\begin{algorithm}[ht!]
%Repeat the following procedure $R=\Theta(k^k) = \Theta(1)$ times:
$r$: the node with maximum degree.\\
$T$: the rooted spanning tree created by the BFS traversal from $r$.
\begin{description}
    \item[Step 1] Each node broadcasts the list of its neighbors in $O(n)$ rounds.
    \item[Step 2] Consider the partition of $V$ into $\sqrt{m/n}$ subsets $V^i$ for $i\in [\sqrt{m/n}]$ as follows: Each node $v\in V$ chooses an integer $i\in [\sqrt{m/n}]$ uniformly at random, joins $V^i$ and tells it to all other nodes. The node $r$ assigns an integer $j\in [\Delta]$ to each of its neighbor, that is, create labels $V_1 = \{v_j:j\in[\Delta]\}$ where $v_j$ is a neighbor of $r$ such that the assigned integer is $j$. Then these labels as well as the partition $\mathcal{V}=\{V^i\}_{i\in [\sqrt{m/n}]}$ are shared by all the nodes in $O(n)$ rounds using pipelining.
    \item  [Step 3]
    Define $\tilde{E}_{2,2}$ by the set of edges $(u,v)\in E_{2,2}$ satisfying at least one of the following: 
    \begin{enumerate}
        \item $N(v_k) \cap (N(u)\cup N(v))\neq \emptyset$ where $u\in V^i$ and $v\in V^j$ for the partition $\mathcal{V}$, and $v_k$ is a neighbor of $r$ such that the integer $k$ is assigned in Step 2 for $k = i + (j-1)\sqrt{m/n}$,
        \item $(N(u)\cup N(v))\cap V_3 \neq \emptyset$.
    \end{enumerate}
  
    We also define $F_{bad}$ by the set of edges $(u,v) \in E_{2,2} \backslash \tilde{E}_{2,2}$ satisfying $N(u)\cap V_1 = N(v)\cap V_1$. Each node $u$ tells the neighbors the list of its incident edges in $\tilde{E}_{2,2}$ and $F_{bad}$. %For each edge $(u,v)\in E_{2,2} \backslash \tilde{E}_{2,2}$ reject if $N(u)\cap V_1 \neq N(v)\cap V_1$. 

    \item[Step 4]
    For each edge $(v_k,w)$, $w$ sends the edge $(u,v)$ between $V^i$ and $ V^j$ to $v_k$ if $w$ received $(u,v)$ in Step 1. After that, each $v_k\in V_1$ sends these edges between $V^i$ and $ V^j$ to $r$. 

    \item[Step 5] $r$ computes  $\abs{E_{2,3}},\abs{\tilde{E}_{2,2}},\abs{F_{bad}}$. This is done by using the tree $T$. $r$ then rejects if the following holds:
    \begin{description}
        \item[Condition 1] There is an edge in $E_{2,3}\cup \tilde{E}_{2,2}$ that is not sent to $r$.
    \end{description}
    For each edge $(u,v)\in E_{2,2} \backslash \tilde{E}_{2,2}$, $u$ rejects if the following holds: 
    \begin{description}
        \item[Condition 2] $N(u)\cap V_1 \neq N(v)\cap V_1$.
    \end{description}
    For each edge $(u,v)\in F_{bad}$, $u$ rejects if the following holds: 
    \begin{description}
        \item[Condition 3] There is a node $w \in N(u)\cap V_2$ such that $N(u)\cap V_1 \nsubseteq N(w)$.
    \end{description}
    For each edge $(u,v)\in E_{3,3}$, $u$ rejects if the following holds: 
    \begin{description}
        \item[Condition 4] $N(u)\cap V_2 \neq N(v)\cap V_2$.
    \end{description}
\end{description}
		\caption{$\mathcal{P}_{\mathsf{collect}}$}\label{algorithm_collect}
\end{algorithm}

\begin{lemma}\label{lem:reject-condition}
    If any of the conditions in $\mathcal{P}_{\mathsf{collect}}$ (Conditions 1-4) are met, then $G$ contains an induced $P_5$.
\end{lemma}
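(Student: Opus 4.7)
The plan is to handle the four conditions separately, exhibiting an explicit induced $P_5$ in each case. Conditions~1--3 share a uniform construction built around the partition-assigned node $v_k$, while Condition~4 uses only the BFS layering.

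The easiest case is Condition~4. For an edge $(u,v) \in E_{3,3}$ with, WLOG, some $w \in N(u) \cap V_2 \setminus N(v)$, pick any $x \in V_1 \cap N(w)$, which exists because $w \in V_2$. The path $r - x - w - u - v$ is then induced: the required non-edges $ru, rv, rw, xu, xv$ are all forced by the BFS layering (using $w \in V_2$ and $u, v \in V_3$), and $wv \notin E$ is the defining choice of $w$.

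For Conditions~1--3 the construction is the path $v_k - r - x - u - z$, where $x$ is an arbitrary $V_1$-neighbor of $u$ (which exists because $u \in V_2$) and $z$ is a fifth vertex chosen according to the condition: $z = v$ for Condition~2 (taking $x = w$, the distinguishing $V_1$-neighbor of $u$ not adjacent to $v$); $z = w$ for Condition~3 (choosing $x$ not adjacent to $w$, as guaranteed by the hypothesis); $z = v$ for Condition~1 when $(u,v) \in E_{2,3}$; and $z = w'$ for Condition~1 when $(u,v) \in \tilde{E}_{2,2}$, where $w' \in V_3 \cap (N(u) \cup N(v))$, taken WLOG in $N(u)$. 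The structural backbone common to all three cases is the emptiness $N(v_k) \cap (N(u) \cup N(v)) = \emptyset$: for Conditions~2 and~3 this is explicit from the definition of $E_{2,2} \setminus \tilde{E}_{2,2}$, and for Condition~1 the forwarding mechanism of Step~4 delivers $(u,v)$ to $v_k$ exactly when some neighbor of $v_k$ has received it in Step~1, so a "not sent" diagnosis must imply the same emptiness (with Lemma~\ref{lemma_partition} ruling out congestion-based failures).

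Verification of each path is uniform. The five vertices $v_k, r, x, u, z$ are pairwise distinct: four lie in distinct BFS layers, and $v_k \neq x$ because otherwise $x \in N(u)$ would give $u \in N(v_k) \cap N(v)$, contradicting the emptiness. The four path-edges hold by construction. Non-edges from $v_k$ to each of $x, u, z$ are excluded by the emptiness property, since $x, u, z$ all lie in $N(u) \cup N(v)$; the non-edges $ru$ and $rz$ are excluded because $u \in V_2$ and $z \in V_2 \cup V_3$; the non-edge $xz$ with $z \in V_3$ is forced by the distance-$3$ status of $z$; and when $z \in V_2$ the only remaining non-edge (either $xz$ in Condition~3 or $wv$ in Condition~2) is precisely the hypothesis of the triggered condition. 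The main obstacle is Condition~1, whose trigger is a protocol-dependent event rather than a purely structural one, so tying "not sent to $r$" back to the empty-intersection property requires a careful reading of Step~4 together with Lemma~\ref{lemma_partition}; once this reduction is made, the remaining $P_5$-verification is the same as in the other cases.
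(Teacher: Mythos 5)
Your proof is correct and follows essentially the same approach as the paper: for each condition you exhibit an explicit induced $P_5$ using the BFS layering, with the key structural observation for Conditions~1--3 being the emptiness $N(v_k)\cap(N(u)\cup N(v))=\emptyset$ (which for Condition~1 you correctly derive from the semantics of Step~4, since an edge reaches $v_k$ precisely when some neighbor of $v_k$ learned it in Step~1). The paper's proof is phrased case-by-case with small variations in the fifth vertex, while you organize it around the common template $v_k - r - x - u - z$; this is a clean unification, not a substantively different argument. One minor remark: the parenthetical invocation of Lemma~\ref{lemma_partition} is not actually needed for the correctness argument here --- that lemma only controls the round complexity of Step~4, while the ``not sent'' diagnosis is purely about whether \emph{any} neighbor of $v_k$ saw the edge, independent of congestion.
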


\begin{proof}
    \begin{description}
    \item[Condition 1] Assume that there is an edge in $E_{2,3}$ that is not sent to $r$ in $\mathcal{P}_{\mathsf{collect}}$. Then for some $i,j,k$ satisfying $k = i + (j-1)\sqrt{m/n}$, there exists an edge $(u,v)$ between $V^i$ and $ V^j$ such that no node in $N(u)\cup N(v)$ is connected to $v_k$. W.l.o.g., we assume that $u\in V_2$ and $v\in V_3$. Then there exists a node $w\in N(u)\cap V_1$ that is not connected to $v_k\in V_1$. $\{v_k,r,w,u,v\}$ induces a $P_5$.

    Assume that there is an edge in $\tilde{E}_{2,2}$ that is not sent to $r$ in $\mathcal{P}_{\mathsf{collect}}$. First, observe that if an edge $(u,v)$ in $\tilde{E}_{2,2}$ satisfies the first condition of $\tilde{E}_{2,2}$, it is sent to $r$ in $\mathcal{P}_{\mathsf{collect}}$. Assume that $(u,v)$ satisfies the second condition of $\tilde{E}_{2,2}$ and is not sent to $r$ in $\mathcal{P}_{\mathsf{collect}}$. Then w.l.o.g., there are two nodes $w\in N(u)\cap V_3$ and $x\in N(u)\cap V_1$. Furthermore, assume that $u\in V^i$, $v\in V^j$, and $k = i + (j-1)\sqrt{m/n}$. $\{v_k,r,x,u,w\}$ induces a $P_5$.

    \item[Condition 2] Suppose that, for some $i,j,k$ satisfying $k = i + (j-1)\sqrt{m/n}$, there exists an edge $(u,v)\in E_{2,2} \backslash \tilde{E}_{2,2}$ between $V^i$ and $ V^j$ satisfying $N(u)\cap V_1 \backslash N(v)\cap V_1 \neq \emptyset$. Then there exist a node $w\in V_1 \cap (N(u)\backslash N(v))$ that is not connected to $v_k\in V_1$. We can conclude that $\{v_k,r,w,u,v\}$ is an induced $P_5$.

    \item[Condition 3] Let $(u,v)\in F_{bad} \cap (V^i \times V^j)$. For a node $w \in (N(u)\cup N(v))\cap V_2$ such that $N(u)\cap V_1 \nsubseteq N(w)$, if there is $y\in (N(u)\cap V_1)\backslash N(w)$, then $\{v_k,r,y,u,w\}$ is an induced $P_5$ for $k = i + (j-1)\sqrt{m/n}$.

    \item[Condition 4] Let $w\in (N(u)\backslash N(v)) \cap V_2$. Then there exists a node $x\in N(w)\cup V_1$. $\{r,x,w,u,v\}$ is a $P_5$.
    \end{description}
    
\end{proof}

Now, assuming that $\mathcal{P}_{\mathsf{collect}}$ does not reject, the following properties hold for each edge $(u,v)\in F_{bad}$ between $V^i$ and $V^j$.
\begin{description}
    \item[Property 1] $N(u)\cap V_1 = N(v)\cap V_1$,
    \item[Property 2] $N(v_k) \cap (N(u)\cup N(v)) = \emptyset$ for $k = i + (j-1)\sqrt{m/n}$,
    \item[Property 3] $(N(u)\cup N(v))\cap V_3 = \emptyset$,
    \item[Property 4] for each $w \in (N(u)\cup N(v))\cap V_2$, $N(u)\cap V_1 = N(v)\cap V_1 \subseteq N(w)$.
\end{description}

\begin{lemma}\label{lem:Fbad-path-detection}
    Assume that no node rejects in $\mathcal{P}_{\mathsf{collect}}$.
    If $G$ contains an induced $P_5$ with at least one edge from $F_{bad}$, then there is a node that detects an induced $P_5$.
\end{lemma}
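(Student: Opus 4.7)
The plan is to case-split on where the $F_{bad}$-edge $(u,v)$ appears on the induced path $P_5=x_1x_2x_3x_4x_5$. By the symmetry $x_i \leftrightarrow x_{6-i}$ there are only two cases to handle: an end-edge $(u,v)=(x_1,x_2)$ and a middle-edge $(u,v)=(x_2,x_3)$. In either case I first use Properties 1 and 3 to pin down BFS-levels. For the end-edge case, $x_3 \in N(v)\setminus N(u)$ together with Property 1 (common $V_1$-neighborhood of $u$ and $v$) rules out $x_3 \in V_1$ and Property 3 (no $V_3$-neighbors of $u,v$) rules out $x_3 \in V_3$; hence $x_3 \in V_2$, and Property 4 then gives $L:=N(u)\cap V_1=N(v)\cap V_1 \subseteq N(x_3)$. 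The same arguments in the middle-edge case place both $x_1$ and $x_4$ in $V_2$ with $L \subseteq N(x_1)\cap N(x_4)$. A further split on the BFS-level of $x_4$ (and, in the end-edge case, of $x_5$) finishes the structural step.

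After $\mathcal{P}_{\mathsf{collect}}$, each vertex knows the adjacency lists of its $1$-hop neighbors (from Step 1 of the protocol) and, for each such neighbor, which of its incident edges lie in $F_{bad}\cup \tilde E_{2,2}$ (from Step 3). I would then exhibit, in each subcase, a witness vertex whose $2$-neighborhood contains the entire induced $P_5$ and whose local data is enough to verify all four $P_5$-edges and all six non-edges. The middle-edge case is the easier one: since Property 4 gives $L \subseteq N(x_1)\cap N(x_2)\cap N(x_3)\cap N(x_4)$, any $y\in L$ has the four middle vertices in $N(y)$ and reaches $x_5$ through $x_4 \in N(y)$; because $y$ therefore knows $N(x_1),N(x_2),N(x_3),N(x_4)$, every edge and every non-edge of the $P_5$, including the long-range non-edges $(x_1,x_5),(x_2,x_5),(x_3,x_5)$, can be decided locally at $y$.

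The end-edge case is the more delicate one and the main obstacle. Here the natural witness is $v$ (which holds $N(u)$ and $N(x_3)$) or $x_3$ (which holds $N(v)$ and $N(x_4)$), but neither a priori holds $N(x_5)$ or $N(u)$ simultaneously, so the long-range non-edge $(x_1,x_5)$ is not directly testable. The $F_{bad}$ properties are precisely what close the gap: to verify $(x_1,x_5)\notin E$ I would split on the BFS-level of $x_5$. If $x_5\in V_1$, then $x_5\in N(u)$ iff $x_5\in L$ iff $x_5\in N(v)$ by Property 1, and $x_5\in N(v)$ is directly testable at $x_3$; if $x_5\in V_2\cap N(u)$, Property 4 forces $L\subseteq N(x_5)$, and $x_3$ can test this because $L\subseteq N(x_3)$ makes every $N(y)$ with $y\in L$ available at $x_3$; if $x_5\in V_3$, Property 3 rules out $x_5\in N(u)$ outright. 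Running the same trichotomy for $x_4$ and for every remaining non-edge, and in the few genuinely degenerate configurations substituting a vertex of $L$ or the root $r$ to exhibit an alternative induced $P_5$ whose edges are all collected by $r$ in Step 4, completes the case analysis.
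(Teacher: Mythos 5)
Your overall plan — split on the position of a bad edge, use Properties~1 and~3 to pin BFS levels, use Property~4 to find a $V_1$-witness $y$ adjacent to several consecutive path vertices, and close long-range non-edges via the $F_{bad}$ properties — is the paper's strategy, and your middle-edge case (both $x_1,x_4\in V_2$, $L\subseteq N(x_1)\cap N(x_4)$, so any $y\in L$ sees and verifies the whole path) is essentially correct. But the proposal has two genuine gaps. First, the lemma allows the $P_5$ to contain \emph{more than one} $F_{bad}$ edge, and your argument only ever refers to ``the $F_{bad}$-edge.'' The paper handles these configurations separately: two consecutive bad edges yield a common $V_1$-neighbor of three consecutive path vertices which by Property~4 extends to the fourth, and the non-consecutive patterns $(p_1,p_2),(p_3,p_4)\in F_{bad}$ and $(p_1,p_2),(p_4,p_5)\in F_{bad}$ each get their own short argument. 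Without this, you don't cover the statement as written.

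Second, your fallback detector for the ``genuinely degenerate configurations'' is wrong. You propose to substitute a vertex of $L$ and have $r$ detect the alternative induced $P_5$ because ``its edges are all collected by $r$ in Step~4.'' But $r$ learns only $E\setminus(F_{bad}\cup E_{3,3})$ and therefore cannot distinguish a genuine non-edge from a removed bad edge, so $r$ cannot certify that the alternative path is \emph{induced} in $G$ (as opposed to merely induced in the reduced graph) — this is exactly the improper-$P_5$ ambiguity that Lemma~\ref{lem:5-node-graph-counting-with-bad-edges} resolves by counting, not by $r$'s local view. The paper instead makes $x_3$ (i.e.\ $p_3$) the detector of the substituted path $\{p_1,y,p_3,p_4,p_5\}$ with $y\in L$: $x_3$ is adjacent to $y$ and $p_4$ so it sees all of the alternative path except the single non-edge $(p_1,p_5)$, and it infers $(p_1,p_5)\notin E$ from the fact that $(p_1,p_2)\in F_{bad}$ plus Property~4 plus $p_5\notin N(y)$. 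Also note your framing of ``degenerate'' is backwards: if $L\subseteq N(x_5)$ then any $y\in L$ is adjacent to $x_1,x_2,x_3,x_5$ and therefore sees every edge and non-edge of the path directly, so that case is the \emph{easy} one; the substitution is needed precisely when no $y\in L$ is adjacent to $x_4$ or $x_5$. Straightening out which subcase needs which detector is required before the end-edge case goes through.
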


\begin{proof}
    We write $\{(p_i,p_{i+1})\}_{i\in \{1,2,3,4\}}$ be the four edges constructing some induced $P_5$.
    Here we assume that exactly one edge of them is bad. First, consider that $(p_2,p_3)\in F_{bad}$. Fix one node $v\in N(p_1)\cap N(p_2)\cap V_1$. Then from Property 1 and Property 3, $p_4$ is in $V_2$. Moreover, $p_4\in N(v)$ from Property 4. Similarly, $p_1\in V_2 \cap N(v)$. Now observe that $v$ knows $\{(p_i,p_{i+1})\}_{i\in \{1,2,3,4\}}$ and $v$ can detect the path.

    We then consider that $(p_1,p_2)\in F_{bad}$. Similarly to the above case, we can assume that $p_3\in V_2$. We distinguish the following cases.
    \begin{description}
    \item[Case 1: $p_4\in V_2$.] If $p_5\in V_1$, $p_3$ can detect the path since $p_3$ knows $p_5\notin N(p_2)$ and thus $p_5\notin N(p_1)$ due to Property 1. 
    If $p_5\in V_3$, $p_3$ can detect the path due to Property 3.
    If $p_5\in V_2$, for each node $v\in N(p_1)\cap V_1$, $v\in N(p_2)\cap N(p_3)$. If $(p_4,v)\in E$ or $(p_5,v)\in E$, $v$ can detect the path. Otherwise $\{p_1,v,p_3,p_4,p_5\}$ is an induced $P_5$. $p_3$ can detect this path as follows. Since $p_3$ knows $(p_1,p_2)\in F_{bad}$ and $(v,P_5)\notin E$, $p_3$ can conclude that $(p_1,p_5)\notin E$ due to Property 4.
    
    \item[Case 2: $p_4\in V_1$.] If $p_5\in V_1$, then $p_3$ can detect the path due to Property 1. Assume $p_5\in V_2$. Fix arbitrary node $v\in N(p_1)\cap N(p_2)\cap N(p_3)\cap V_1$. If $p_5\in N(v)$, then $v$ can detect the path. Otherwise, $p_3$ knows that $p_5\notin N(v)$ and $(p_2,p_5)\notin E$. $p_3$ can detect the path since $p_3$ can verify that $(p_1,p_5)\notin E$ due to Property 4.
    
    \item[Case 3: $p_4\in V_3$.]  If $p_5\in V_3$, then $p_3$ can detect the path since $p_1,p_2$ do not have a neighbor in $V_3$. Assume $p_5\in V_2$. Fix arbitrary node $v\in N(p_1)\cap N(p_2)\cap N(p_3)$. If $p_5\in N(v)$, then $v$ can detect the path. Otherwise, $p_3$ knows that $p_5\notin N(v)$. From Property 4, we have $p_5\notin N(p_1)\cup N(p_2)$. $p_3$ can detect the path.
    \end{description}
    Therefore, any induced $P_5$ that contains exactly one edge from $F_{bad}$ can be detected by some node.

    Now we consider a path $\{(p_i,p_{i+1})\}_{i\in \{1,2,3,4\}}$ containing at least two edges from $F_{bad}$. If there are two consecutive bad edges in the path, $(p_i,p_{i+1}),(p_{i+1},p_{i+2})$, there is a node $u\in V_1\cap N(p_{i})\cap N(p_{i+1})\cap N(p_{i+2})$ from Property 1. $u$ is also connected to $p_{i+3}$ when $i\leq 2$ or $p_{i-1}$ when $i\geq 2$. $u$ can detect the path. 
The remaining case is the path containing two bad edges that are not consecutive.
We have two distinct cases.

\begin{description}
    \item[Case 4: $(p_1,p_2),(p_3,p_4)\in F_{bad}$.] From Property 1 and Property 4, there is a node $u\in N(p_1)\cap N(p_2)\cap N(p_3) \cap N(p_4)$ who can detect the path.
    \item[Case 5: $(p_1,p_2),(p_4,p_5)\in F_{bad}$.] This is similar to a subcase of Case 1. For each node $v\in N(p_1)\cap V_1$, $v\in N(p_2)\cap N(p_3)$. If $(p_4,v)\in E$ or $(p_5,v)\in E$, $v$ can detect the path. Otherwise $\{p_1,v,p_3,p_4,p_5\}$ is an induced $P_5$. $p_3$ can detect this path as follows. Since $p_3$ knows $(p_1,p_2)\in F_{bad}$ and $(v,P_5)\notin E$, $p_3$ can conclude that $(p_1,p_5)\notin E$ due to Property 4.
\end{description}

\end{proof}

\begin{lemma}\label{lem:E33-path-detection}
    Assume that no node rejects in $\mathcal{P}_{\mathsf{collect}}$.
    If $G$ contains an induced $P_5$ with at least one edge from $E_{3,3}$, then there is a node that detects an induced $P_5$.
\end{lemma}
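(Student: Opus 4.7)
The plan is to show that, conditional on no node having rejected in $\mathcal{P}_{\mathsf{collect}}$, any induced $P_5$ containing an edge from $E_{3,3}$ is in fact entirely contained in $V_3$, and that some common $V_2$-neighbor of its five vertices detects it from the local view gathered in Step 1 of $\mathcal{P}_{\mathsf{collect}}$.

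First I would extract from the absence of Condition 4 rejections the key invariant: $N(u) \cap V_2 = N(v) \cap V_2$ for every $(u,v) \in E_{3,3}$. Let $p_1 p_2 p_3 p_4 p_5$ be the induced $P_5$ and suppose $(p_i, p_{i+1}) \in E_{3,3}$. I would argue by outward propagation that every $p_j \in V_3$. The one-step claim is: if $p_i, p_{i+1} \in V_3$, then $p_{i-1}$ (and symmetrically $p_{i+2}$), whenever it exists, lies in $V_3$. Indeed, $p_{i-1} \notin V_2$, because otherwise $p_{i-1} \in N(p_i) \cap V_2 = N(p_{i+1}) \cap V_2$ would produce the chord $(p_{i-1}, p_{i+1})$, contradicting the induced-path hypothesis; and $p_{i-1}$ is adjacent to $p_i \in V_3$, so by the BFS layering (and the assumption that the diameter is $3$) $p_{i-1} \in V_2 \cup V_3$. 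This forces $p_{i-1} \in V_3$ and hence $(p_{i-1}, p_i) \in E_{3,3}$, on which the step can be iterated; propagating outward on both sides shows that $p_1, \ldots, p_5 \in V_3$ and every edge of the path lies in $E_{3,3}$.

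Next, I would produce a common $V_2$-neighbor $w$ of all five path vertices. Since $p_1 \in V_3$ has distance exactly $3$ from $r$, some $w \in V_2 \cap N(p_1)$ exists; chaining the invariant $N(p_j) \cap V_2 = N(p_{j+1}) \cap V_2$ along the four $E_{3,3}$-edges of the path gives $w \in N(p_j)$ for every $j \in \{1, \ldots, 5\}$.

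Finally, Step 1 of $\mathcal{P}_{\mathsf{collect}}$ transmits the full neighbor list of every node to each of its neighbors within $O(n)$ rounds, so after Step 1 the node $w$ knows the entire subgraph induced on $N(w) \supseteq \{p_1, \ldots, p_5\}$ and can therefore locally recognize this induced $P_5$. The main obstacle I anticipate is keeping the propagation argument airtight at the boundary: one must verify that for every position $i \in \{1, 2, 3, 4\}$ of the initial $E_{3,3}$-edge, each outward extension genuinely produces a new $E_{3,3}$-edge on which the invariant can be reapplied, without circularly invoking edges whose $E_{3,3}$-membership has not yet been established.
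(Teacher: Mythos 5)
Your proof is correct and follows essentially the same route as the paper's: both use Condition 4 to propagate $V_3$-membership outward from the given $E_{3,3}$ edge, conclude that all five path vertices lie in $V_3$, and then chain the invariant $N(p_j)\cap V_2 = N(p_{j+1})\cap V_2$ to exhibit a common $V_2$-neighbor that sees the whole path after Step 1. Your version is a little more careful than the paper's terse argument in making the one-step propagation claim explicit and checking that each new edge is genuinely in $E_{3,3}$ before the invariant is reapplied, but the core argument is identical.
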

\begin{proof}
    We write $\{(p_i,p_{i+1})\}_{i\in \{1,2,3,4\}}$ be the four edges constructing some induced $P_5$.
    Consider that $(p_2,p_3)\in E_{3,3}$ for instance. $p_1,p_4\in V_3$ since Condition 4 of $\mathcal{P}_{\mathsf{collect}}$ does not hold. Similarly, since $(p_3,p_4)\in E_{3,3}$ we have $p_5\in V_3$. Therefore there exists $v\in N(p_1)\cap N(p_2)\cap N(p_3)\cap N(p_4)\cap N(p_5)\cap V_2$ that can detect the path. The analysis of the other cases is done in the same way -- we can show that all path edges are from $E_{3,3}$.
\end{proof}
Given Lemmas~\ref{lem:Fbad-path-detection} and~\ref{lem:E33-path-detection}, we can detect any induced $P_5$ that involves edges from $F_{bad}\cup E_{3,3}$. 

Our focus now shifts to the detection of an induced $P_5$ solely composed of edges from $E\backslash (F_{bad}\cup E_{3,3})$. In this context, we need to ensure that any $P_5$ induced by $E\backslash (F_{bad}\cup E_{3,3})$ does not actually arise as an artifact of the removal of $F_{bad}\cup E_{3,3}$, but is genuinely induced by $E$. To address this, we consider all five-node induced subgraphs of $G$. As illustrated in Figure~\ref{fig:5-node-graphs-all}, there are exactly 16 distinct five-node patterns $\mathcal{H}=\{H_1,\ldots , H_{16}\}$, each containing a $P_5$ as a subgraph. See Figure~\ref{fig:5-node-graphs} for the illustration. 

We introduce the concept of a ``dangerous'' induced subgraph. An induced copy of $H\in \mathcal{H}$ in the input network $G=(V,E)$ is \textit{dangerous} iff it induces a $P_5$ in $E\backslash (F_{bad}\cup E_{3,3})$.
\begin{lemma}\label{lem:5-node-graph-counting-with-bad-edges}
Let $\mathcal{H} = \{H_i\}_{i\in [16]}$ be a set of five-node graphs that contain $P_5$ as a (non-necessarily induced) subgraph as in Figure~\ref{fig:5-node-graphs-all}. Then, for any $H\in \mathcal{H}$, the number of induced copies of $H$ in the graph that are dangerous can be counted (by the node $r$) in $O(n)$ rounds.
\end{lemma}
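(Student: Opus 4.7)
After $\mathcal{P}_{\mathsf{collect}}$ terminates without rejection, node $r$ knows the edge set $E' = E \setminus (F_{bad} \cup E_{3,3})$ completely, so it can locally enumerate every $5$-tuple $S$ with $G'[S] = P_5$. For such a tuple, with the path labelled $p_1 p_2 p_3 p_4 p_5$, the true induced graph $G[S]$ is obtained from $P_5$ by adding some non-path pairs as bad edges, and by the BFS layering a candidate bad pair must lie in $V_2 \times V_2$ (then in $F_{bad}$) or $V_3 \times V_3$ (then in $E_{3,3}$); any other non-path pair is automatically a $G$-non-edge, a fact already visible to $r$. Thus $n_H$ equals the number of such $5$-tuples whose set of actually present bad pairs is exactly $\mathrm{NonPath}(H)$, the set of non-path edges of $H$.

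To compute $n_H$, the plan is to aggregate from the bad edges themselves: for each $e = (u,v) \in F_{bad} \cup E_{3,3}$, the endpoint $u$ of smaller identifier locally counts the number of dangerous copies of $H$ in which $e$ occupies a fixed canonical non-path position of $H$, and every dangerous copy is charged to a unique bad edge (the one whose identifier pair is lexicographically smallest among the bad pairs of $S$). Summing these counts at $r$ via the BFS tree yields $n_H$, and since each node sends only $O(|\mathcal{H}| \log n) = O(\log n)$ bits, convergecasting through the depth-$3$ BFS tree takes $O(n)$ rounds by pipelining. The enumeration by $u$ uses the $1$-hop broadcast of neighbor lists performed in Step~1 of $\mathcal{P}_{\mathsf{collect}}$: $u$ knows $N_G(w)$ for every $w \in N_G(u)$, and hence the full induced structure on every $5$-tuple meeting $u$'s closed neighborhood.

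The main obstacle is that the canonical $5$-tuples witnessed by $u$ may contain non-path pairs with \emph{both} endpoints outside $N_G(u) \cup \{u\}$, whose bad/non-bad status is a priori invisible to $u$. I resolve this using Properties~1--4 of $F_{bad}$ (and the analogous rigidity for $E_{3,3}$ from Condition~4 of $\mathcal{P}_{\mathsf{collect}}$): any such secondary bad pair is forced to share a $V_1$- or $V_2$-neighborhood with $u$ or $v$, so its existence is equivalent to a test that $u$ can perform from its $1$-hop broadcast data together with the $G'$-structure that $r$ can pipe down along the canonical path during the same $O(n)$ convergecast. After the aggregation, $r$ collects all partial counts, applies the canonical ownership rule to combine them, and reads off $n_H$ for each $H \in \mathcal{H}$.
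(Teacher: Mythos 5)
Your overall framing is right—the count must be produced by charging each dangerous copy to a unique locally responsible node and convergecasting through the BFS tree—but the choice of local detector does not work, and the obstacle you flag is misdiagnosed.

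Take $H_1 = C_5$, the one member of $\mathcal{H}$ in which no node has degree $\geq 3$. A dangerous copy has path $c_1\text{--}c_2\text{--}c_3\text{--}c_4\text{--}c_5$ in $G'$ with a single bad chord $(c_1,c_5)\in F_{bad}$, and no other edges. Your detector is an endpoint of the bad edge, say $u=c_1$. From Step~1, $u$ knows $N_G(w)$ only for $w\in N_G(u)$, i.e.\ for $c_2$ and $c_5$ (and for $u$ itself). Contrary to your claim, every \emph{non-path} pair of the $C_5$ has at least one endpoint in $\{c_1,c_2,c_5\}$, so those are all visible to $u$. What $u$ cannot see is the \emph{path edge} $(c_3,c_4)$: both $c_3$ and $c_4$ lie outside $N_G(c_1)\cup\{c_1\}$, so $u$ has neither $N(c_3)$ nor $N(c_4)$. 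Hence $u$ cannot enumerate the valid choices of $c_3$, and the per-bad-edge count it would produce is not well defined. Your fallback—``pipe down the $G'$-structure along the canonical path''—is not a concrete protocol: there can be $\Theta(n^2)$ bad edges, and supplying each responsible endpoint with enough of $E'$ to close this gap is a $\Theta(n^2)$-bit distribution problem that does not obviously fit in $O(n)$ rounds over a BFS tree of bounded bandwidth. Properties~1--4 say nothing about the pair $(c_3,c_4)$ because it is an edge of $G'$, not a bad pair, so they do not rescue the argument. The same blind spot recurs for other $H_i$ whenever the canonically chosen bad-edge endpoint has pattern-degree $\leq 2$: e.g.\ for the triangle-plus-pendant-path pattern, if the lower-ID chord endpoint has pattern-degree $2$, it cannot certify the far pendant edge.

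The paper sidesteps this by choosing a detector that provably sees all $\binom{5}{2}$ pairs. For $C_5$ it shows (via Properties~1, 3, 4) that $c_1,\dots,c_5\in V_2$ with $c_2,c_4\in V_2$, and that every $V_1$-node in $N(c_1)\cap V_1 = N(c_5)\cap V_1$ is adjacent to $c_1,c_2,c_4,c_5$; the minimum-ID such $V_1$-node therefore knows $N(c_1),N(c_2),N(c_4),N(c_5)$ and hence the full induced subgraph. For $H_2,\dots,H_{16}$, some pattern node has degree $\geq 3$, and a degree-$\geq 3$ node in the pattern always sees all ten pairs from its Step-1 broadcast. If you want to keep a bad-edge-based charging scheme, you would at minimum need to change the responsible party to one of these high-visibility nodes (and still argue uniqueness of the charge), at which point you have essentially reconstructed the paper's proof.
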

\begin{proof}
    We first note that there is no induced copy of $H\in \mathcal{H}$ including edges from both $F_{bad}$ and $E_{3,3}$, since two endpoints of a bad edge do not have neighbors in $V_3$ due to Property 3 of $F_{bad}$.
    
    We consider the case of $H_1$ in Figure~\ref{fig:5-node-graphs-all}, i.e., 5-node cycles $C_5$. Let $$\{(c_1,c_2),(c_2,c_3),(c_3,c_4),(c_4,c_5),(c_5,c_1)\}$$ be five edges that form a dangerous $C_5$ where $(c_5,c_1)\in F_{bad}$ and $(c_i,c_{i+1})\in E\setminus (F_{bad}\cup E_{3,3})$ for $i\in\{1,2,3,4\}$. Then, from Property 1 and Property 3 of $F_{bad}$, we can assume that $c_2,c_4\in V_2$. Moreover, from Property 4 all the nodes in $N(c_1)\cap V_1 = N(c_5)\cap V_1$ are neighbors of $c_2$ and $c_4$. Let $u$ be the node in $N(c_1)\cap V_1 = N(c_5)\cap V_1$ with minimum identifier. Then $u$ can detect the cycle as it knows all the neighbors of $c_1,c_2,c_4,c_5$. 
    Each node counts the number of such cycles, and sends it to $r$. In this way, $r$ can count the number of dangerous $C_5$ in the graph. Note that it is impossible to have $(c_5,c_1)\in E_{3,3}$ due to the fact that Condition 4 of $\mathcal{P}_{\mathsf{collect}}$ does not hold: If $(c_5,c_1)\in E_{3,3}$, then $c_2,c_4\in V_3$. It contradicts the assumption that $(c_1,c_2)\notin E_{3,3}$.

    For any other pattern graph from $\{H_{2},\ldots,H_{16}\}$ illustrated in Figure~\ref{fig:5-node-graphs-all}, we can see that at least one node of the pattern is connected to at least three other nodes of the pattern who can detect the pattern. This completes the proof.
\end{proof}

\begin{figure}[htbp]
    \begin{center}
        \includegraphics[scale=0.6]{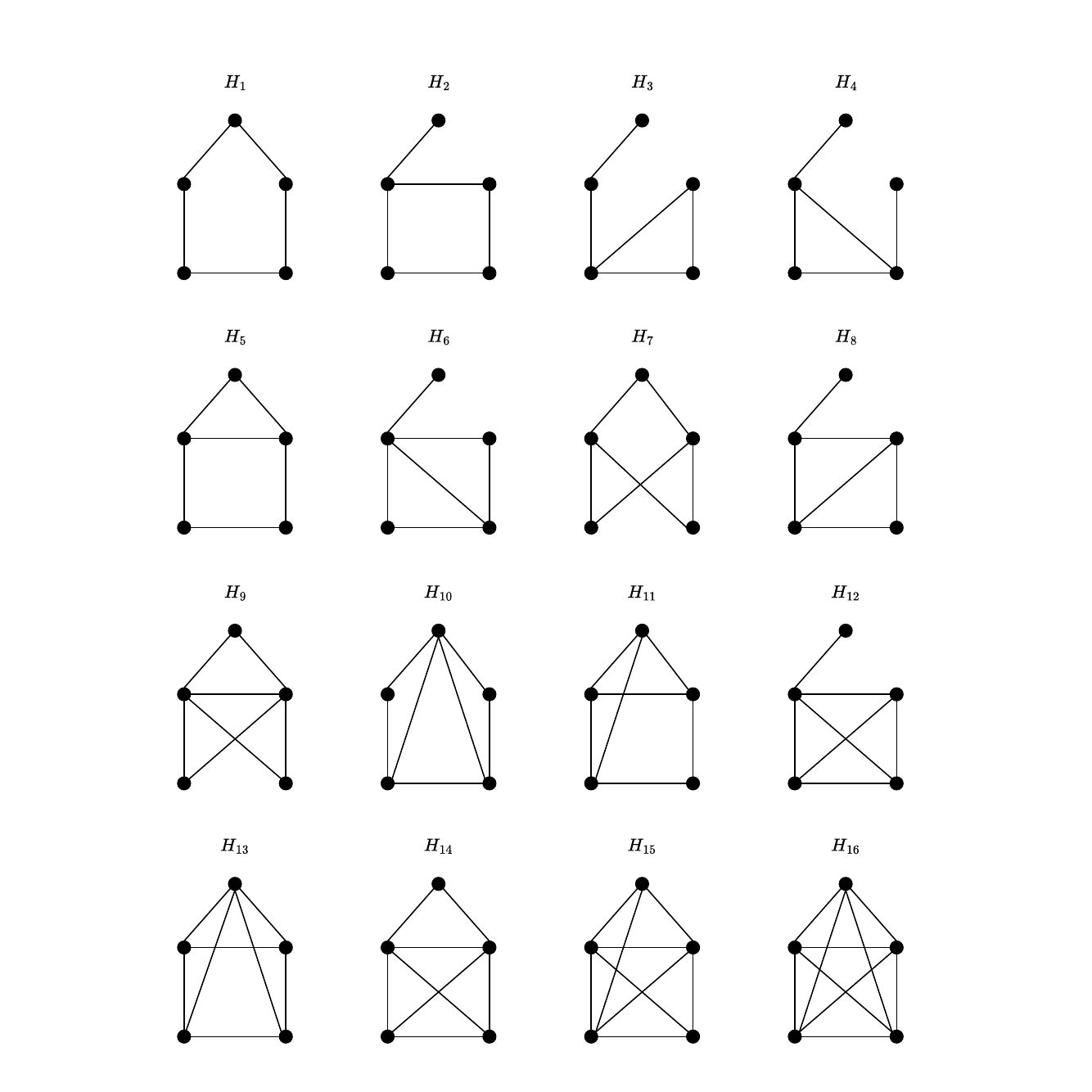}
    \end{center} 
\caption{All different 5-node graphs that contain a $P_5$ as a subgraph.}
\label{fig:5-node-graphs-all}
\end{figure}

\begin{figure}[tbp]
    \begin{center}
        \includegraphics[scale=0.6]{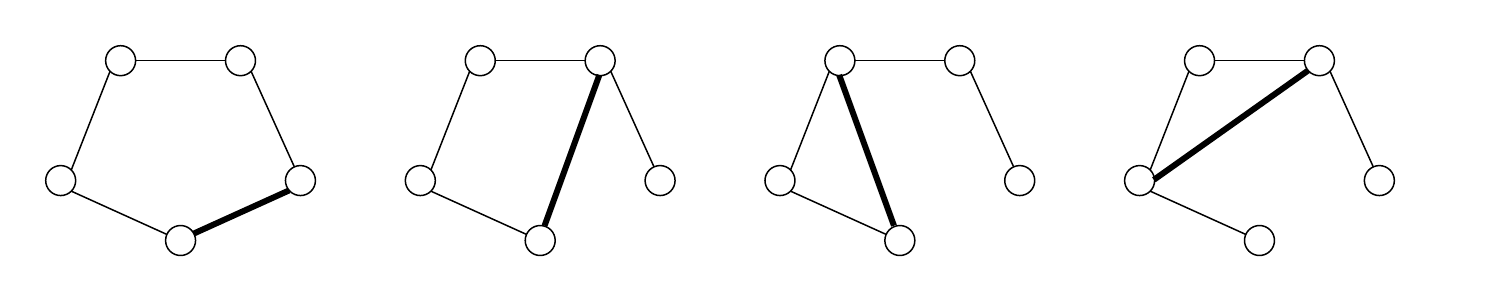}
    \end{center} 
\caption{An illustration of $H$ that contains exactly one bad edge considered in Lemma~\ref{lem:5-node-graph-counting-with-bad-edges}. Thick edges represent bad edges. For instance, an induced $C_5$ (the leftmost graph) contains exactly one edge from $F_{bad}\cup E_{3,3}$, and the remaining edges from $E\backslash (F_{bad}\cup E_{3,3})$, then removing $F_{bad}\cup E_{3,3}$ from the graph creates a new induced $P_5$.}
\label{fig:5-node-graphs}
\end{figure}

\begin{proof}[Proof of Theorem~\ref{thm-5-path-upper-bound}]
In $O(n)$ rounds, we can compute the diameter of the graph~\cite{holzer2012optimal}. If the diameter exceeds 3, the network immediately rejects. If the diameter is two, we apply Theorem~\ref{thm-diameter-2-upper-bound}, which establishes that $P_5$-freeness can be verified in $O(n\log^2 n)$ rounds with high probability. So we focus on the case where the diameter is exactly 3.

First, if the number of edges in the graph satisfies $m\leq 400n\log^2 n$, the node $r$ collects all edges in $O(m)$ rounds and decide if $G$ is $P_5$-free.
Otherwise, the network runs $\mathcal{P}_{\mathsf{collect}}$ in $O(n)$ rounds. Assuming $\mathcal{P}_{\mathsf{collect}}$ does not reject, $r$ knows all the edges in $E\backslash (F_{bad}\cup E_{3,3})$ as $E_{2,2}\backslash \tilde{E}_{2,2} = F_{bad}$.

Next, we assume that there is no induced $P_5$ that contains at least one edge from $F_{bad}\cup E_{3,3}$ as such a $P_5$ can be detected by some node due to Lemma~\ref{lem:Fbad-path-detection} and Lemma~\ref{lem:E33-path-detection}.

For each $H_i \in \mathcal{H}$, the node $r$ counts the number of induced copies of $H_i$ that are dangerous as in Lemma~\ref{lem:5-node-graph-counting-with-bad-edges}.
Let $t(H_i)$ be the count and let $t=\sum_{i\in[16]}t(H_i)$ be the sum of them.

Finally, $r$ counts the number of $P_5$'s induced by $E\setminus (F_{bad}\cup E_{3,3})$. If it is equal to $t$, we can conclude that $G$ is $P_5$-free as removing $F_{bad}\cup E_{3,3}$ from $G$ increases the number of induced $P_5$'s by $t$. Otherwise, if the count is strictly larger than $t$, $G$ is not $P_5$-free.
\end{proof}

\subsection{Local certification of $P_5$-freeness}

We use the following standard technique.

\begin{lemma}\label{lem:certification-on-BFS-tree}
    Let $G=(V,E)$ be an $n$-node connected graph, and $T$ be a spanning tree rooted at arbitrary node $t$. For each $u\in V$, let $a(u)\in [\mathrm{poly}(n)]$ be an integer assigned to $u$. Then, there is a $\mathsf{PLS}_1$ that computes $\sum_{u\in V} a(u)$ with certificates of size $O(\log n)$.
\end{lemma}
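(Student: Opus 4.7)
The plan is to give the prover the following certificate for each node $u$: (i) the identifier of the root $t$, (ii) the identifier of the parent $p(u)$ of $u$ in $T$, (iii) the distance $d(u)$ from $u$ to $t$ in $T$, (iv) the partial sum $s(u):=\sum_{v\in T_u} a(v)$, where $T_u$ is the subtree of $T$ rooted at $u$, and (v) the claimed value $S$ of the total sum. Since $a(u)\in[\mathrm{poly}(n)]$ and $|V|\le n$, both $s(u)$ and $S$ fit in $O(\log n)$ bits; the identifier and distance fields trivially do as well, so the whole certificate has size $O(\log n)$.

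The verification at each node $u$ consists of the following local checks, using only the identifiers and certificates of its $1$-hop neighbors. First, each node compares its copy of $t$ and of $S$ with those held by its neighbors and rejects on any mismatch; this forces consensus on the identity of the root and on the claimed sum. Second, $u$ checks the tree structure: if $u=t$, then $u$ verifies $d(u)=0$ and $p(u)=\bot$; otherwise $u$ checks that $p(u)$ is actually one of its neighbors and that $d(u)=d(p(u))+1$. This is the standard distance-labelling certification of a spanning tree, which guarantees (a) exactly one node claims to be the root, (b) following parent pointers terminates at this root, and hence (c) $T$ is indeed a spanning tree. Third, $u$ verifies the recursive sum identity
\[
s(u) \;=\; a(u) \;+\; \sum_{v \in N(u)\,:\, p(v)=u} s(v),
\]
which it can do because all needed values lie in its $1$-hop neighborhood. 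Finally, if $u=t$, it additionally checks $S=s(u)$.

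For correctness, if $G$ genuinely has total $\sum_{u}a(u)=S_\star$, the prover assigns the true values of $p(u),d(u),s(u)$ and sets $S=S_\star$; all checks pass. Conversely, if every node accepts, then the distance checks force $T$ to be a valid spanning tree with the declared root, so that the parent pointers induce a well-defined set of children at each node. An easy induction from the leaves upwards, using the sum identity, yields $s(u)=\sum_{v\in T_u}a(v)$ for all $u$, and the final check at the root gives $S=s(t)=\sum_{u\in V}a(u)$. Since all nodes agree on $S$, every node then knows the correct sum.

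The main obstacle, and the only subtle part, is certifying that the parent pointers really do describe a spanning tree rather than, say, several disjoint in-trees plus some cycles. This is handled entirely by the distance labels $d(u)$: the strict inequality $d(u)=d(p(u))+1$ rules out cycles, while the existence of a unique claimed root and the consistency propagated through the $t$-field rules out disconnected components with their own roots. This is the classical spanning-tree PLS (e.g. in the sense of Korman--Kutten--Peleg), which is known to work with $O(\log n)$-bit certificates, and we simply piggyback the additive subtree sums $s(u)$ and the global value $S$ on top of it without increasing the asymptotic certificate size.
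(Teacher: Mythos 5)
Your proof is correct and takes essentially the same approach as the paper: both certify the per-subtree partial sums $s(u)$ (the paper calls them $b(u)$), verify the recursive identity $s(u)=a(u)+\sum_{v\text{ child of }u}s(v)$ locally, and have the root compare the global value to $s(t)$. The only difference is that you additionally certify the spanning tree $T$ itself via root-ID and distance labels, whereas the paper's proof treats $T$ as given (the lemma statement hands it to the nodes, and in the paper's application $T$ is certified separately by the $\mathsf{SpanningTree}$ component); this extra care is harmless and does not change the $O(\log n)$ certificate size or the substance of the argument.
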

\begin{proof}
    Let $T_u$ be a subtree of $T$ rooted at $u$.
    The certificate to $u$ consists of two $O(\log n)$-bit values $b$ and $b(u)$. If $u$ is a leaf, $u$ rejects when $a(u)\neq b(u)$.
    If $u$ is not a leaf, $u$ rejects when $b(u)\neq a(u) + \sum_{v\in N(u) \cap T_u}b(v)$. The root node $t$ rejects when $b \neq b(t)$.
    If no node rejects, $b = \sum_{u\in V} a(u)$.
\end{proof}

\begin{proof}[Proof of Theorem~\ref{thm:local-certification}]
We first describe the certificates and then explain the verification process and its correctness.

The certification comprises the following components, each represented by $O(n\log n)$ bits. We assume that $\sqrt{m/n}$ is an integer (otherwise, we use $\lceil\sqrt{m/n}\rceil$).
\begin{itemize}
    \item $\mathsf{Neighbors}$: For each node $u$, $\mathsf{Neighbors}(u)$ consists of a set of neighbors of $u$.
    \item $\mathsf{Diameter}$: $\mathsf{Diameter}(u)$ consists of a BFS tree starting from $u$.
    \item $\mathsf{SpanningTree}$: $\mathsf{SpanningTree}(u)$ describes a spanning tree rooted at the node with maximum degree, denoted $r$. Thus, it is identical to $\mathsf{Diameter}(r)$. 
    \item $\mathsf{Partition}$: $\mathsf{Partition}(u)$ provides a partition of nodes. The prover divides $V$ into $\sqrt{m/n}$ subsets $\mathcal{V} = \{V^i\}_{i\in [\sqrt{m/n}]}$ so that the number of edges between $V^i$ and $V^j$ is $O(n)$ for all $i,j\in [\sqrt{m/n}]$. This partition exists due to Lemma~\ref{lemma_partition}. $\mathsf{Partition}(u)$ includes the partition $\mathcal{V}$ and a set containing all the neighbors of $r$ in ascending order with respect to their IDs.
    Additionally, $\mathsf{Partition}(u)$ includes encoded values in $O(\log n)$-bit: $\tilde{m}_{2,2}$, $m_{bad}$, $m_{2,3}$, $m_{3,3}$, $\tilde{m}_{2,2}(u)$, $m_{bad}(u)$, $m_{2,3}(u)$, $m_{3,3}(u)$. Finally, $\mathsf{Partition}(u)$ includes, for each $H\in \mathcal{H}$, the numbers $c(H)$ and $c_u(H)$.
    \item $\mathsf{Edges}$: Let $v_k$ be the neighbor of $r$ whose degree is the $k$-th largest among all the neighbors of $r$ for $k\in [m/n]$. $\mathsf{Edges}(v_k)$ then contains $O(n\log n)$-bit descriptions of all edges between $V^i$ and $V^j$ where $k = i + (j-1)\sqrt{m/n}$.
\end{itemize}

\paragraph*{Verification}
$\mathsf{Diameter}$ is used to certificate the graph diameter is at most 3. This is done by checking the structure of the trees as in~\cite{censor2020approximate}.\vspace{2mm}

\noindent\textbf{Certifying the partition of nodes and edges.} Each node $u$ checks that the partition of nodes and the set representing all the neighbors of $r$ in ascending order with respect to their IDs written in $\mathsf{Partition}(u)$, and the spanning tree written in $\mathsf{SpanningTree}(u)$ are the same as of its neighbors, so that all nodes agree the same partition, and the distance from $r$ for all nodes. 
Now, given a partition of nodes in $\mathsf{Partition}$ and the spanning tree in $\mathsf{SpanningTree}$, endpoints of each edge $e$ determine which subset of
$$E_{1,1}\cup E_{1,2} \cup \tilde{E}_{2,2} \cup F_{bad} \cup E_{2,3} \cup E_{3,3}$$
the edge $e$ belongs to. Next, each node computes the sizes of edge sets $\abs{E_{1,1}}$, $\abs{E_{1,2}}$, $\abs{\tilde{E}_{2,2}}$, $\abs{F_{bad}}$, $\abs{E_{2,3}}$, $\abs{E_{3,3}}$ using the protocol of Lemma~\ref{lem:certification-on-BFS-tree} and the values written in $\mathsf{Partition}$ as follows: For example, we can compute $\abs{E_{2,3}}$ by setting $b=\abs{E_{2,3}}$, $b(u) = m_{2,3}(u)$, and $a(u)$ be the number of edges in $E_{2,3}$ incident to $u$.\vspace{2mm}

\noindent\textbf{Certifying the assigned edges.} Let $v_k$ be the neighbor of $r$ that has the $k$-th minimum ID among all the neighbors of $r$. Since all nodes agree on the set representing all the neighbors of $r$ in ascending order with respect to their IDs, all nodes know which is $v_k$.
For each $v_k$, the validity of $\mathsf{Edges}(v_k)$ can be checked as follows. 
First, each $v_k$ checks if the following holds.
\begin{enumerate}
    \item $\mathsf{Edges}(v_k)\subseteq E\cap (V^i\times V^j)$ for $k = i + (j-1)\sqrt{m/n}$. $v_k$ can check this since all nodes agree on the partition $\mathcal{V}$.
    \item For each edge $e\in \mathsf{Edges}(v_k)$, at least one of its endpoints is at most distance 2 from $v_k$. $v_k$ can check this since $v_k$ knows nodes with distance 2 from $v_k$ by looking at $\mathsf{Neighbors}$ of its neighbors. 
\end{enumerate}
If these conditions do not hold then $v_k$ rejects.
Each neighbor $u$ of $v_k$ rejects if $\mathsf{Edges}(v_k)$ contains non-edge that includes $u'$ for some $u'\in N(u)$ by looking at $\mathsf{Neighbors}(u')$.  This ensures that the edge set $\mathsf{Edges}(v_k)$ is indeed the set of edges between $V^i$ and $V^j$ such that at least one of the endpoints are a 2-hop neighbor of $v_k$. Each node then checks whether the four conditions described in $\mathcal{P}_{\mathsf{collect}}$ hold. If any condition holds, the network rejects. This part of verification ensures that $r$ learns $E\backslash(F_{bad}\cup E_{3,3})$.
\vspace{2mm}

\noindent\textbf{Detecting an induced $P_5$ containing $F_{bad}\cup E_{3,3}$.}
Assuming Conditions 1-4 of $\mathcal{P}_{\mathsf{collect}}$ are not met, some node can now detect an induced $P_5$ that contains at least one edge from $F_{bad}\cup E_{3,3}$ if exists, as in Lemma~\ref{lem:Fbad-path-detection} and Lemma~\ref{lem:E33-path-detection} since all necessary information from $\mathcal{P}_{\mathsf{collect}}$ is included in its certificate or the certificates of its neighbors.
\vspace{2mm}

\noindent\textbf{Detecting an induced $P_5$ solely composed of
$E\backslash(F_{bad}\cup E_{3,3})$.}
The final components of $\mathsf{Partition}(u)$ are used to count the number of dangerous $H$ for $H\in \mathcal{H}$, as described in Lemma~\ref{lem:5-node-graph-counting-with-bad-edges}. This is accomplished using Lemma~\ref{lem:certification-on-BFS-tree} as follows. For simplicity, let us consider the case of induced $C_5$ ($H_1$ in Figure~\ref{fig:5-node-graphs-all}). In our $\mathsf{CONGEST}$ algorithm, as in the proof of Lemma~\ref{lem:5-node-graph-counting-with-bad-edges}, for each such cycle, the node in $V_1$ with minimum identifier that is incident to the cycle detects it, and report the number of detected cycles to the node $r$ through the BFS tree to compute the entire number of dangerous $C_5$'s in the graph. Setting $b=c(H_1)$, $b(u)=c_u(H_1)$, and $a(u)$ be the number of dangerous $C_5$'s detected by $u$ in Lemma~\ref{lem:certification-on-BFS-tree}, we can certify it.
$r$ then determines the number of $P_5$'s that increase by removing $F_{bad}\cup E_{3,3}$ (i.e., the number of dangerous copies of $H$ for all $H\in \mathcal{H}$) and compares this with the number of induced $P_5$'s in $(V,E\backslash(F_{bad}\cup E_{3,3}))$. Based on this comparison, $r$ can decide if $G$ is $P_5$-free.

\end{proof}

\begin{comment}

\subsubsection{Local certification for $P_6$-freeness}
In Appendix~\ref{sec:P6-freeness-upper-bound}, we prove Theorem~\ref{thm:local-certification-P6-freeness}. In $\mathsf{PLS}_2$, each node can see the certificates of nodes at distance 2. This is so powerful that some node can collect all the edges in the graph.

\end{comment}

\section{Lower bounds for $P_k$-freeness} \label{sec:P_5_lower_bound}

\subsection{Proof of Theorem~\ref{thm:5-path-lower-bound}}
Our proof leverages a reduction from the three-party nondeterministic communication complexity of the set-disjointness problem in the number-on-forehead (NOF) model. In this model, each player sees the inputs of the other two players but not their own. The players communicate by writing bit strings on a shared blackboard, and the communication cost of the protocol is the total number of bits written.

As a first step, we use the reduction from set-disjointness to triangle-freeness established by Drucker et al.\cite{drucker2014power}. The key graph used in their reduction is a tripartite graph $G_{RS}=(A\cup B\cup C, E)$ where (1) $\abs{A}=\abs{B}=n$ and $\abs{C}=n/3$ for arbitrary integer $n$; (2) $G_{RS}$ contains $t = \frac{n^2}{e^{O(\sqrt{\log n})}}$ triangles $\mathcal{T}=\{T_1,\ldots, T_t\}$; (3) each edge of the graph belongs to exactly one triangle. This graph construction relates to the Ruzsa–Szemerédi problem\cite{komlos1995szemeredi} in extremal graph theory and has been well studied.

For $X_A,X_B,X_C\in\{0,1\}^t$, we define $G(X_A,X_B,X_C)$ as the graph obtained by removing edges from $G_{RS}$ as follows. Let $T_i=(e_A,e_B,e_C)$ be a triangle such that $e_A$ is an edge between $B$ and $C$, $e_B$ is an edge between $A$ and $C$, and $e_C$ is an edge between $A$ and $B$. For $P\in \{A,B,C\}$, we remove the edge $e_P$ iff the $i$-th bit of $X_P$ is 0. 

We then construct another graph $G^*(X_A,X_B,X_C)$ as follows. This construction is obtained by tweaking the construction in~\cite{dalirrooyfard2019graph,nikabadi2022beyond}.
Let $\overline{G}(X_A,X_B,X_C)$ be the complement graph of $G(X_A,X_B,X_C)$.
For each node $u$ in $\overline{G}(X_A,X_B,X_C)$, we have a triangle $u_1,u_2,u_3$ in $G^*(X_A,X_B,X_C)$. We add edges $(u_i,v_i)$ for all $i\in\{1,2,3\}$. For each edge $(u,v)$ in $\overline{G}(X_A,X_B,X_C)$, we add edges $(u_i,v_j)$ for all $i,j\in\{1,2,3\}$ in $G^*(X_A,X_B,X_C)$. Additionally, there are two extra nodes $x,y$. $x$ is connected to $u_1$ and $u_2$ for all $u\in V(\overline{G}(X_A,X_B,X_C))$. $y$ is connected to $u_2$ and $u_3$ for all $u\in V(\overline{G}(X_A,X_B,X_C))$. Thus, the number of nodes in $G^*(X_A,X_B,X_C)$ is $O(n)$. %Let $G^{**}(X_A,X_B,X_C))$ be the graph obtained by removing $x,y$ and incident edges from $G^{*}(X_A,X_B,X_C))$.

%The following is a corollary of Ref.~\cite{dalirrooyfard2019graph}.

\begin{lemma}\label{lem:NOF-disj-reduction}
 $G^{*}(X_A,X_B,X_C))$ contains an induced $P_5$ if and only if $X_A$, $X_B$ and $X_C$ are not disjoint. %If $X_A$, $X_B$ and $X_C$ are disjoint, then $G^{**}(X_A,X_B,X_C)$ is $P_5$-free. 
\end{lemma}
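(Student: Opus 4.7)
Suppose the three sets are not disjoint, so that some index $i^*$ satisfies $X_A[i^*]=X_B[i^*]=X_C[i^*]=1$; then all three edges of the Ruzsa--Szemer\'edi triangle $T_{i^*}=(a,b,c)$ survive in $G$, i.e., $(a,b), (a,c), (b,c)\in G$ (equivalently, they are not edges of $\bar{G}$). I would exhibit the five vertices $a_1, x, b_2, y, c_3$ as inducing a $P_5$ in $G^*$ in this order. The four path edges come directly from the apex rules ($x\sim u_1, u_2$ and $y\sim u_2, u_3$ for every blob). For the six non-edges: $(x,y)$ is a non-edge by construction; $(a_1, y)$ and $(x, c_3)$ fail because $y$ avoids index-$1$ blob vertices and $x$ avoids index-$3$; and the inter-blob non-edges $(a_1, b_2), (a_1, c_3), (b_2, c_3)$ fail because each of the blob-pairs $(a,b), (a,c), (b,c)$ is an edge of $G$, so only same-index edges exist between those blobs, and none of the index-pairs $(1,2), (1,3), (2,3)$ match.

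\textbf{Backward direction: structural facts and main case.} Conversely, assume $G^*$ contains an induced $P_5$ on $p_1 p_2 p_3 p_4 p_5$. I will show that $G$ must contain a triangle, which by the Ruzsa--Szemer\'edi property places an index in the intersection of the three sets. The argument rests on three structural facts about $G^*$: (a) each blob $\{u_1, u_2, u_3\}$ is a triangle, so any two path vertices that share a blob must be at consecutive positions; (b) for distinct blobs $u, v$, two vertices $p\in\{u_1, u_2, u_3\}$ and $q\in\{v_1, v_2, v_3\}$ are non-adjacent in $G^*$ iff $(u,v)\in G$ and $p, q$ have different indices---the three same-index edges are always present, while the six off-diagonal edges appear precisely when $(u,v)\in\bar{G}$; (c) $x$ is adjacent only to index-$\{1,2\}$ blob vertices, $y$ only to index-$\{2,3\}$, and $x\not\sim y$. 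In the main case neither $x$ nor $y$ is on the path. Fact (a) forces $p_1, p_3, p_5$ into three pairwise distinct blobs $u^{(1)}, u^{(3)}, u^{(5)}$ (otherwise two of them would be blob-adjacent, contradicting non-adjacency in the $P_5$), and then the three non-edges $(p_1, p_3), (p_1, p_5), (p_3, p_5)$ combined with fact (b) force each pair $(u^{(i)}, u^{(j)})$ into $G$, yielding the required triangle.

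\textbf{Remaining cases and main obstacle.} When $x$, $y$, or both appear on the path, I would run a case analysis on their positions, using fact (c) to pin down the indices of their path-neighbors. Many sub-cases collapse immediately because of the always-present same-index inter-blob edges: for example, $x = p_1$ forces $p_3, p_5$ to be index-$3$ blob vertices in distinct blobs, which makes them adjacent via a same-index edge and contradicts the non-edge $(p_3, p_5)$. Analogous index clashes rule out $y = p_1$, $x$ or $y$ at position $p_3$ alone, and the pair-placements $(x, y) \in \{(p_1, p_3), (p_1, p_5), (p_3, p_5), (p_2, p_5)\}$ and their mirrors. The surviving configurations (such as $x = p_2$, $x = p_4$, and $(x, y) = (p_2, p_4)$) yield a triangle by the same mechanism as the main case: the non-edges of the induced $P_5$ between the three blob vertices at positions $\{p_1, p_3, p_5\}$ (or the analogous three) force those three blobs pairwise into $G$ via fact (b). The condition $x\not\sim y$ eliminates all adjacent placements of $x, y$, keeping the sub-case list finite. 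The main obstacle is simply the thoroughness of this bookkeeping---one must track blob membership and index assignment in every configuration---but in each case fact (b) combined with the rigid index restrictions of fact (c) rules out any spurious $P_5$ that does not originate from a triangle of $G$.
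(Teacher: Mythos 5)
Your forward direction matches the paper's exactly, and your structural facts (a)--(c) about $G^*$ are all correct, so the proposal is sound in spirit. The backward direction, however, takes a more laborious route than necessary. The paper does not split on whether $x$ or $y$ lie on the path at all; instead it looks directly at the size-3 independent set $\{p_1,p_3,p_5\}$ of the $P_5$ (which is well-defined regardless of where $x,y$ sit) and argues in one pass: (i) the set cannot contain both $x$ and $y$, since every other vertex of $G^*$ is adjacent to at least one of them; (ii) it cannot contain $x$, since the other two members would then have to be index-3 blob vertices, but any two distinct index-3 blob vertices are adjacent via the always-present same-index edge; (iii) symmetrically it cannot contain $y$; (iv) hence the three members are blob vertices in three distinct blobs with three distinct indices, and their pairwise non-adjacency forces all three blob pairs to be non-edges of $\overline{G}$, i.e., a triangle of $G$, which by the Ruzsa--Szemer\'edi property identifies a common index of $X_A,X_B,X_C$. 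This makes your entire case split on $x,y$-positions unnecessary: the "surviving configurations'' and the "collapsing sub-cases'' are all subsumed by the one observation about $\{p_1,p_3,p_5\}$. Your sketch also leaves small gaps (e.g., when you set $x=p_1$ you assert $p_3,p_5$ are index-3 blob vertices without first ruling out that one of them equals $y$), and you stop short of actually enumerating the cases you promise. None of this is fatal, but you should be aware that the argument can be made both shorter and airtight by centering it on the independent set rather than on the positions of the apex vertices.
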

\begin{proof}
    Suppose that there is an index $i$ such that the $i$-th bit of $X_A$, $X_B$ and $X_C$ are all 1. Then we have a triangle $T_i$ in the graph $G(X_A,X_B,X_C)$ and thus have an independent set $\{u,v,w\}$ of size 3 in $\overline{G}(X_A,X_B,X_C)$. $\{u_1,x,v_2,y,w_3\}$ induces $P_5$ in $G^*(X_A,X_B,X_C)$.

    Conversely, assume that $G^{*}(X_A,X_B,X_C)$ contains an induced $P_5$. Let us focus on an independent set of size 3 in the path. The set does not contain $x$ and $y$ at the same time, since each other node is connected at least one of $x$ and $y$. Moreover, the set cannot contain $x$, as otherwise the other two nodes are written $u_3$ and $v_3$, but it is impossible since $u_3$ and $v_3$ are always connected. A similar contradiction occurs if the set contains $y$. Therefore three nodes in the set are written $u_1$, $v_2$, $w_3$. From the construction $u,v,w$ in $\overline{G}(X_A,X_B,X_C)$ is an independent set. This means that $X_A$, $X_B$ and $X_C$ are not disjoint.
\end{proof}

We are now ready to prove Theorem~\ref{thm:5-path-lower-bound}.
\begin{proof}[Proof of Theorem~\ref{thm:5-path-lower-bound}]
    Suppose that $P_5$-freeness can be certified with certificates of size $s$. We construct a non-deterministic communication protocol for set-disjointness with communication complexity $O(ns)$. Suppose that Alice, Bob, and Charlie, who receive the inputs $X_A$, $X_B$, and $X_C$, respectively, simulate $G^*(X_A,X_B,X_C)$ as follows:
\begin{itemize}
    \item For all $u\in A$ in $\overline{G}(X_A,X_B,X_C)$, Alice simulates nodes $u_1,u_2,u_3$ in $G^*(X_A,X_B,X_C)$.
    \item For all $u\in B$ in $\overline{G}(X_A,X_B,X_C)$, Bob simulates nodes $u_1,u_2,u_3$ in $G^*(X_A,X_B,X_C)$.
    \item For all $u\in C$ in $\overline{G}(X_A,X_B,X_C)$, Charlie simulates nodes $u_1,u_2,u_3$ in $G^*(X_A,X_B,X_C)$.
    \item All players simulates $x$ and $y$.
\end{itemize}
Since the incident edges of simulated nodes by each player are independent from its input, each player can construct the inputs of its simulated nodes locally. Then the players simulate the certification for $P_5$-freeness. Since the certificate size is $s$ and each player simulates $O(n)$ nodes, the length of the certificates for each player is $O(ns)$. In order to determine the outputs of the simulated nodes, each player writes its certificates to the shared blackboard. 
From Lemma~\ref{lem:NOF-disj-reduction}, the players can compute set-disjointness of size $\frac{n^2}{e^{O(\sqrt{\log n})}}$. We now obtain $s= \Omega\left(\frac{n}{e^{O(\sqrt{\log n})}}\right) = \Omega(n^{1-o(1)})$ as desired by the nondeterministic multiparty communication complexity of set-disjointness in the number-on-forehead model~\cite{rao2015simplified}.
\begin{comment}\footnote{\cite{rao2015simplified} provided a linear lower bound on deterministic communication complexity. Analogously to the two-party case, we have $D_3(f)=O(N_3^0(f)N_3^1(f))$ where $D_3(f)$, $N_3^0(f)$, $N_3^1(f)$ are the deterministic, co-nondeterministic, and nondeterministic communication complexity of $f$ in $3$-party NOF model, respectively. The lower bound on $N_3^1(\cdot)$ follows from the fact that $N_3^0(\cdot)$ of $N$-bit disjointness is $O(\log N)$.}
\end{comment}
To extend the lower bound for larger locality, we just replace edges incident to $x$ and $y$ by longer paths. Thus we can increase the locality by 1 at the cost of increasing the path length by 4.
\end{proof}

\subsection{Lower bounds for $P_k$-freeness: general $k$}\label{sec:P11-freeness}

\paragraph*{Technical challenges and the general proof idea for Theorem~\ref{thm:11-path-lower-bound}}
%\begin{proof}[Proof sketch of Theorem~\ref{thm:11-path-lower-bound}]
We use a standard reduction from two-party set-disjointness function for the proof of Theorem~\ref{thm:11-path-lower-bound} (and Theorem~\ref{thm:local-certification-lower-bound}). 
Alice and Bob jointly construct some graph which depends on their inputs so that the constructed graph is $P_k$-free iff the output of some function (in our case, the set-disjointness function) is 1. 

The main challenge in this proof is to ensure that the cut of the constructed graph (i.e., the number of edges between nodes held by Alice and those held by Bob) is as sparse as possible. This was not an issue in the proof of Theorem~\ref{thm:5-path-lower-bound}, which uses the non-deterministic communication complexity of the NOF model or in the proof of the lower bound given in~\cite{bousquet2024local}, which uses counting arguments. In fact, the constructions used in these proofs do not achieve a sparse cut, making them unsuitable for the proof of Theorem~\ref{thm:11-path-lower-bound} that holds for the $\mathsf{CONGEST}$ model. 

Fortunately, in the simplest case where $d\geq 3$, we can utilize the construction from Ref.~\cite{legallISAAC21} with a slight modification, taking advantage of the graph structure that holds for all $d\geq 3$. However, for the cases where $d= 1$ and $d=2$, the construction from Ref.~\cite{legallISAAC21}, which is applicable to induced $k$-cycles for every $k\geq 4$, fails for induced $k$-paths. This is intuitively because we lose several symmetric properties of induce $k$-cycles by removing one edge from the cycle. We thus construct two other graph families from scratch. The main challenge in establishing these constructions is that both edges and non-edges must be carefully chosen to avoid creating unexpected induced $k$-paths. Our proof requires a detailed analysis to ensure this, as the general structure valid for $d\geq 3$ does not apply for $d\leq 2$.

%%%%%%%%%%%%%%%%%%%%%%%%%%%%%%%%%%%%%%%%%
\subsubsection{Lower bound for $d = 1$}
%%%%%%%%%%%%%%%%%%%%%%%%%%%%%%%%%%%%%%%%%

\paragraph*{The graph construction.}
We assume that $x$ and $y$ are bit strings of length $K=n^2$ for some integer $n$. Here we construct a specific graph $G_{x,y}=(V,E_{x,y})$ as follows. 
The node set $V$ of $G_{x,y}$ is the union of two subsets $V_A$ and $V_B$.
$V_A$ consists of three node sets $A_1,A_2$ of size $n$ and four specific nodes $p,q,r,s$. $V_B$ consists of two node sets $B_1,B_2$ of size $n$ and three specific nodes $x,y,z$. Each set is explicitly described as $A_i = \{a_{i,j}|j\in [n]\}$ and $B_i = \{b_{i,j}|j\in [n]\}$ for $i\in \{1,2\}$.

The following edges are contained for all $G_{x,y}$, regardless of the value of $x$ and $y$. See Figure~\ref{fig:graph-P11} for the illustration.
\begin{itemize}
    \item $A_1,A_2,B_1,B_2$ induce $n$-node cliques, respectively;
    \item $(a_{1,j},b_{1,j})$, $(a_{2,j},b_{2,j})$ for all $j\in [n]$;
    \item $q$ is connected all nodes in $A_1$;
    \item $r$ is connected all nodes in $A_2$;
    \item $x$ is connected all nodes in $B_1$;
    \item $z$ is connected all nodes in $B_2$;
    \item $p$ is connected to $q$, and $s$ is connected to $r$;
    \item there is a path $x,y,z$.
\end{itemize}
Furthermore, we add the following edges depending on the value of $x$ and $y$. 
For each $j_1,j_2\in [n]$, let $k = j_1 + (j_2-1)\cdot (n-1)$.  
\begin{itemize}
    \item If $x_k = 0$, we add the edge $(a_{1,j_1},a_{2,j_2})$.
    \item If $y_k = 0$, we add the edge $(b_{1,j_1},b_{2,j_2})$.
\end{itemize}
This finishes the construction of $G_{x,y}$.

\begin{figure}[tbp]
    \begin{center}
        \includegraphics[scale=1]{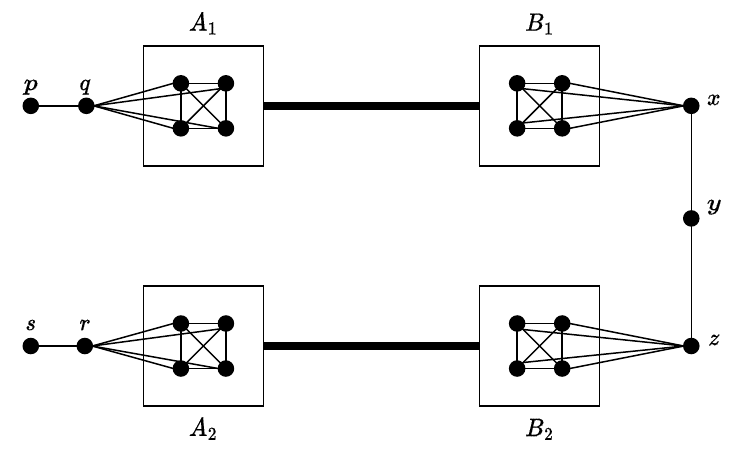}
    \end{center} 
\caption{An illustration of the fixed graph construction. Thick edges between $A_1$ and $B_1$, and $A_2$ and $B_2$ represents a perfect matching ($n$ edges between a pair of nodes with the same label).}
\label{fig:graph-P11}
\end{figure}

%\paragraph*{Intuition behind the graph construction.}
In the constructed graph $G_{x,y}$, any induced $P_{11}$ is forced to use the nodes $p,q,r,s,x,y,z$. Therefore the remaining four nodes should be $A_1$-node, $A_2$-node, $B_1$-node, and $B_2$-node. We also added shortcut edges between $A_1$ and $A_2$ and between $B_1$ and $B_2$ so that there is no shortcut edge in the path iff $\mathsf{DISJ}_{n^2}(x,y) = 0$. We can justify this intuition as the following lemma. 
\begin{lemma}\label{lem:11-path-graph-construction}
    $G_{x,y}$ is $P_{11}$-free iff $\mathsf{DISJ}_{n^2}(x,y) = 1$.
\end{lemma}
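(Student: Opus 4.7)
The plan is to prove the biconditional by treating the two directions separately; one is a direct construction, and the other is a structural forcing argument.

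For the easy direction, $\mathsf{DISJ}_{n^2}(x,y) = 0 \Rightarrow G_{x,y}$ contains an induced $P_{11}$, I would fix an index $k \in [n^2]$ with $x_k = y_k = 1$, write $k = j_1 + (j_2-1)(n-1)$, and exhibit the eleven-node sequence
\[
p,\; q,\; a_{1,j_1},\; b_{1,j_1},\; x,\; y,\; z,\; b_{2,j_2},\; a_{2,j_2},\; r,\; s.
\]
All consecutive pairs are edges by the fixed edge list. For inducedness, a scan of the neighborhood of each vertex shows that the only non-adjacent pairs which could possibly carry an edge are $(a_{1,j_1}, a_{2,j_2})$ and $(b_{1,j_1}, b_{2,j_2})$; these are precisely absent because $x_k = y_k = 1$. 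All other non-adjacencies are non-edges by direct inspection of the construction (for instance, $p$ has $q$ as its only neighbor).

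For the reverse direction, I must show that any induced $P_{11}$ in $G_{x,y}$ is forced to take the canonical form above, whereupon the absence of the two candidate chords yields an index $k$ with $x_k = y_k = 1$. The argument rests on three rigidity facts, each immediate from the construction. \textbf{(R1)} Each clique $A_1, A_2, B_1, B_2$ contributes at most two vertices to an induced path, and any two such vertices must be consecutive in the path. \textbf{(R2)} The degree-$1$ vertices $p$ and $s$, if used, must be endpoints adjacent to $q$ and $r$ respectively; the degree-$2$ vertex $y$, if used internally, forces $x-y-z$ to appear as a subpath. \textbf{(R3)} If $q$ is internal to the path, then one of its two path-neighbors must be $p$ and the other must lie in $A_1$, since any two $A_1$-neighbors of $q$ are joined by a clique edge and would create a chord; symmetric statements hold for $r$ with respect to $\{s\} \cup A_2$, for $x$ with respect to $\{y\} \cup B_1$, and for $z$ with respect to $\{y\} \cup B_2$.

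Using (R1)--(R3), a case analysis shows that any induced $P_{11}$ uses all seven special vertices and exactly one vertex from each of the four cliques. Dropping any clique forces its corresponding special vertex into an inconsistent endpoint configuration (for example, omitting $A_2$ leaves $r$ with no interior path-neighbor other than $s$, collapsing the path); dropping $p$ caps the achievable length at $10$ because the chord $(q, a_{1,j'})$ blocks any second-$A_1$ extension past $q$, and similarly for $s$; dropping $y$ turns $x$ or $z$ into an endpoint, contradicting the previous item. Once all specials are in place, (R2) pins positions $1, 2, 10, 11$ to $p, q, r, s$, position $3$ to some $a_{1,j_1}$, and position $9$ to some $a_{2,j_2}$; applying (R1) and (R3) to the remaining five middle slots forces the order $b_{1,j_1}, x, y, z, b_{2,j_2}$, with matching indices forced because the perfect matching joins $a_{i,j}$ to the unique $b_{i,j}$. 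Inducedness of the canonical path then demands the non-edges $(a_{1,j_1}, a_{2,j_2})$ and $(b_{1,j_1}, b_{2,j_2})$, i.e., $x_k = y_k = 1$, so $\mathsf{DISJ}_{n^2}(x,y) = 0$.

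The main obstacle will be the case analysis ruling out alternative eleven-vertex configurations that try to exploit the optional shortcut edges (either an $A_1$--$A_2$ jump with $x_{k'} = 0$ or a $B_1$--$B_2$ jump with $y_{k'} = 0$) to bypass one of the special vertices. In each such attempt, the bookkeeping must show that the bypass either re-enters a previously-used clique (creating a chord via a clique edge, since only one slot per clique is available) or exhausts its usable neighbors before reaching length $11$; this demands careful use of (R3) to preclude inconsistent insertions of $q, r, x, z$ given their universal adjacency to an entire clique.
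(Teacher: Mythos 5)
Your proposal is correct in outline but takes a genuinely different route from the paper, and the pivotal step in the reverse direction is left as a sketch where a clean argument is actually within reach. The paper establishes, via explicit path-extension trees, that any induced $P_{11}$ uses at most one vertex of $A_1$ and at most one of $A_2$; it then combines this with the trivial bound of at most two from each of $B_1,B_2$ and a $V_A/V_B$ split count to force exactly one $A_1$ and one $A_2$ vertex plus all of $\{p,q,r,s\}$, and finishes with more branching to pin down the canonical form. You instead posit rigidity facts (R1)--(R3) and assert that ``a case analysis shows'' all seven specials and exactly one vertex per clique appear, but the justifications you give (``omitting $A_2$ leaves $r$ with no interior path-neighbor other than $s$,'' ``dropping $p$ caps the length at $10$,'' ``dropping $y$ turns $x$ or $z$ into an endpoint'') each stop one or two steps short of a contradiction and leave the reader to reconstruct the count. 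Your own observations almost hand you the cleanest version of this step: since $q$ is adjacent to all of $A_1$, the set $\{q\}\cup A_1$ is a clique of size $n+1$, and similarly for $\{r\}\cup A_2$, $\{x\}\cup B_1$, $\{z\}\cup B_2$; an induced path meets any clique in at most two (consecutive) vertices, so these four cliques contribute at most $8$ in total, and the leftover vertices $\{p,s,y\}$ contribute at most $3$, giving $\leq 11$ overall. Equality throughout then forces $p,s,y$ to be used (hence $p,s$ are the two endpoints and $x$--$y$--$z$ is a subpath), forces exactly two from each of the four cliques, and since $p$ forces $q$ and $s$ forces $r$, each clique must contribute its special vertex plus exactly one interior vertex of $A_1,A_2,B_1,B_2$ respectively. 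That replaces your asserted case analysis with a one-line counting argument; your subsequent position-forcing and the matched-index observation then go through as you describe, and the shortcut-edge bypass attempts you flag at the end are ruled out immediately because each would require a third vertex in one of the four size-$(n+1)$ cliques. As written, though, the reverse direction does not constitute a complete proof: you should either carry out the clique-count argument above or fully expand the per-special case analysis (the paper's proof shows this takes several branching displays), since the reader cannot currently verify that no eleven-vertex configuration omitting, say, $p$ or one entire clique survives.
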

\begin{proof}
    We try to exploit an induced 11-path from the graph $G_{x,y}$. From the following analysis, we can see that any induced 11-path in $G_{x,y}$ contains at most one $A_1$ node and one $A_2$ node.
    \begin{itemize}
        \item \textbf{A path with at least two $A_1$ nodes.} Let $a_1$ and $a'_1$ be two such nodes. We cannot use $p,q$ since $\{a_1,a'_1,q\}$ forms a triangle. Suppose that $a_1$ is an end point of the path. Then the path can be extended at most 10 nodes as follows:
        $$ a_{1}-a'_{1}- (B_1)-x-y-z-(B_2)-(A_2)-r-s$$
        Here $(B_1)$ ,$(B_2)$, and $(A_2)$ means arbitrary node from the node set $B_1$, $B_2$, and $A_2$, respectively.
        Suppose that both $a_1$ and $a'_1$ are intermediate points of the path. Then without loss of generality, assume that the neighbors of $a_1$ is $B_1$-node and of $a'_1$ is $A_2$-node.
        On the $B_1$ side, the path will be extended as follows.
        \begin{align*}
            a_1 \rightarrow (B_1) \rightarrow
            \left\{
            \begin{array}{l}
             (B_1) \rightarrow (B_2)\\
             x \rightarrow y \rightarrow z \rightarrow (B_2)\\
             (B_2) \rightarrow (B_2) \\
             (B_2) \rightarrow z \rightarrow y
            \end{array}
            \right.
        \end{align*}
        On the $A_2$ side, the path will be extended as follows.
        \begin{align*}
            a'_1 \rightarrow (A_2) \rightarrow
            \left\{
            \begin{array}{l}
             r \rightarrow s \\
             (A_2) \rightarrow (B_2) \rightarrow (B_2) \\
             (A_2) \rightarrow (B_2) \rightarrow z \rightarrow y
            \end{array}
            \right.
        \end{align*}
        The path can be extended at most 10 nodes as follows:
        \begin{align*}
        &a_{1}-(A_{2})-r-s \\
        &\mid \\
        &a'_{1}-(B_1)-x-y-z-(B_2)
        \end{align*}
        \item \textbf{A path with at least two $A_2$ nodes.} From the symmetry of the graph, it is impossible to extract an induced 11-path with two $A_2$ nodes. 
    \end{itemize}
    Since $B_1$ and $B_2$ are cliques, we can use at most two nodes from these sets. Therefore we are able to choose at most $7$ nodes from $V_B= B_1\cup B_2 \cup \{x,y,z\}$. We thus need to choose at least 4 nodes from $V_A= A_1\cup A_2 \cup \{p,q,r,s\}$. Therefore exactly one node from $A_1$ and one node from $A_2$ should be contained in the path since $\{p,q\}$ is reachable from $V_B$ only through $A_1$ nodes, and $\{r,s\}$ is reachable from $V_B$ only through $A_2$ nodes. We write such nodes $a_1 \in A_1$ and $a_2 \in A_2$. Observe that it is impossible that both $a_1$ and $a_2$ are the endpoints of an induced 11-path. We next analyze the following two cases.
    \begin{itemize}
        \item \textbf{A path with an endpoint $a_1 \in A_1$ and an intermediate node $a_2 \in A_2$.} 
        If $a_1$ is connected to a $B_1$-node, the path will be extended as follows (including the case the path does not contain $a_2 \in A_2$):
        \begin{align*}
            a_1 \rightarrow (B_1) \rightarrow
            \left\{
            \begin{array}{l}
             x \rightarrow y \rightarrow z \rightarrow (B_2) \rightarrow a_2 \rightarrow r \rightarrow s \\
             (B_1) \rightarrow (B_2) \rightarrow 
             \left\{
             \begin{array}{l}
                z \rightarrow y \\
                a_2 \rightarrow r \rightarrow s
             \end{array}\right. \\
             (B_2) \rightarrow \left\{
             \begin{array}{l}
                z \rightarrow y \\
                a_2 \rightarrow r \rightarrow s
             \end{array}\right. 
            \end{array}
            \right.
        \end{align*}
        If $a_1$ is connected to $a_2$, the path will be extended as follows:
        \begin{align*}
            a_1 \rightarrow a_2 \rightarrow
            \left\{
            \begin{array}{l}
             (B_2) \rightarrow 
             \left\{\begin{array}{l}
                x \rightarrow y \rightarrow z \rightarrow (B_1)\\ 
                (B_2) \rightarrow (B_1) \rightarrow x \rightarrow y\\
                (B_1) \rightarrow x \rightarrow y
            \end{array}\right. \\
             r \rightarrow s 
            \end{array}
            \right.
        \end{align*} Therefore, there is no induced 11-path containing an endpoint $a_1 \in A_1$ and an intermediate node $a_2 \in A_2$.
        \item \textbf{A path with two intermediate nodes $a_1 \in A_1$ and $a_2 \in A_2$.} 
        If $a_1$ is connected to $q$, the path will be extended as follows: 
        \begin{align*}
            p \rightarrow q \rightarrow a_1 \rightarrow
            \left\{
            \begin{array}{l}
             (B_1) \rightarrow 
             \left\{\begin{array}{l}
                    x \rightarrow y \rightarrow z \rightarrow (B_2) \rightarrow a_2 \rightarrow r \rightarrow s\\ 
                    (B_2) \rightarrow \left\{\begin{array}{l}
                            z \rightarrow y \\
                            (B_2) \rightarrow a_2 \rightarrow r \rightarrow s\\
                            a_2 \rightarrow r \rightarrow s
                        \end{array}\right. \\
                    (B_1) \rightarrow (B_2) \rightarrow
                    \left\{\begin{array}{l}
                            (B_2) \rightarrow a_2 \rightarrow r \rightarrow s \\
                            a_2 \rightarrow r \rightarrow s
                        \end{array}\right.
                \end{array}\right. \\
             a_2 \rightarrow \left\{\begin{array}{l}
                    r \rightarrow s \\
                    (B_2) \rightarrow \left\{\begin{array}{l}
                        (B_2) \rightarrow (B_1) \rightarrow 
                        \left\{\begin{array}{l}
                            (B_1) \\ 
                            x \rightarrow y 
                        \end{array}\right. \\
                        (B_1) \rightarrow 
                        \left\{\begin{array}{l}
                            (B_1) \\ 
                            x \rightarrow y
                        \end{array}\right.
                    \end{array}\right. \\
                \end{array}\right. \\
            \end{array}
            \right.
        \end{align*}
    \end{itemize}
    We conclude that the only induced $P_{11}$ we are able to extract from the graph $G_{x,y}$ will be of the form
        $$
         p \rightarrow q \rightarrow a_1 \rightarrow (B_1) \rightarrow x \rightarrow y \rightarrow z \rightarrow (B_2) \rightarrow a_2 \rightarrow r \rightarrow s.
        $$
        Suppose that 
        $$
         p \rightarrow q \rightarrow a_{1,i} \rightarrow b_{1,k} \rightarrow x \rightarrow y \rightarrow z \rightarrow b_{2,j} \rightarrow a_{2,\ell} \rightarrow r \rightarrow s
        $$
        is an induced $P_{11}$ for some $i,j,k,\ell \in [n]$. From the construction of $G_{x,y}$ we have $i = k$  otherwise $a_{1,i}$ is not a neighbor of $b_{1,k}$. Similarly, we have $j=\ell$.
        Finally, we can observe that there is no edge between $a_{1,i}$ and $a_{2,j}$ and between $b_{1,i}$ and $b_{2,j}$ iff $x_{k}= y_{k} = 1$ for $k = i + (j-1)(n-1)$. Now the proof is completed.
\end{proof}

\noindent\textbf{Proof of Theorem~\ref{thm:11-path-lower-bound} for $d=1$:}
We utilize the standard framework of reductions from two-party communication complexity. Consider the two-party set-disjointness function $\mathsf{DISJ}_K:\{0,1\}^K\times\{0,1\}^K\rightarrow \{0,1\}$: for $x,y\in \{0,1\}^K$, $\mathsf{DISJ}_K(x,y) = 0$ if and only if there exists an index $i\in [K]$ such that $x_i=y_i=1$ where $x_i$ and $y_i$ are the $i$-th bit of $x$ and $y$, respectively. It is well known that if Alice has $x$ and Bob has $y$ as their inputs, the total amount of bits communicated between them in any two-party communication protocol that computes $\mathsf{DISJ}_K(x,y)$ with probability at least $2/3$ is $\Omega(K)$ bits~\cite{razborov1990distributional}. 

%We assume that $x$ and $y$ are bit strings of length $K=n^2$ for some integer $n$. Here we construct a specific graph $G_{x,y}=(V,E_{x,y})$ as follows.  The node set $V$ of $G_{x,y}$ is the union of two subsets $V_A$ and $V_B$. $V_A$ consists of three node sets $A_1,A_2,A_5$ of size $n$. $V_B$ consists of two node sets $B_3,B_4$ of size $n$.

Let $\mathcal{A}$ be an $r$-round algorithm that solves $P_{11}$-freeness. Assume that Alice simulates 
$$
V_A = A_1\cup A_2 \cup \{p,q,r,s\}
$$ and Bob simulates$$
V_B = B_1\cup B_2 \cup \{x,y,z\}.
$$
They can simulate 1-round of the algorithm $\mathcal{A}$ using $O(n\log n)$ bits of communication since there are $2n$ edges between $V_A$ and $V_B$. Therefore they can decide if $G_{x,y}$ is $P_{11}$-free using $O(rn\log n)$ bits of communication.
By Lemma~\ref{lem:11-path-graph-construction} it means they can compute the value $\mathsf{DISJ}_{n^2}(x,y)$ as well. Since the two-party communication complexity of $\mathsf{DISJ}_{n^2}$ is $\Omega(n^2)$~\cite{razborov1990distributional}, it follows that $r=\Omega(n/\log n)$, which completes the proof of Theorem~\ref{thm:11-path-lower-bound} for $d=1$.

%%%%%%%%%%%%%%%%%%%%%%%%%%%%%%%%%%%%%%%%%
\subsubsection{Lower bound for $d\geq 3$}
%%%%%%%%%%%%%%%%%%%%%%%%%%%%%%%%%%%%%%%%%

For $d\geq 3$, our graph construction is almost the same as of Ref.~\cite{legallISAAC21} for induced cycles. Below we repeat the construction for completeness.

\noindent\textbf{Vertices.}
We define the sets of vertices as follows:

\begin{itemize}
    \item $A_1=A_1^1\cup\cdots \cup A_1^{n}$, where
    $A_1^i=\left\{ a_1^{i,j} \middle| 0\leq j \leq d-1 \right\}$ for $i\in [n]$.
    \item $A_2=A_2^1\cup\cdots  \cup A_2^{n}$, where
    $A_2^i=\left\{ a_2^{i,j} \middle| 0\leq j \leq d-1 \right\}$ for $i\in [n]$.
    \item $B_1=B_1^1\cup\cdots\cup B_1^{n}$, where
    $B_1^i=\left\{ b_1^{i,j} \middle| 0\leq j \leq d-1 \right\}$ for $i\in [n]$.
    \item $B_2=B_2^1\cup\cdots\cup B_2^{n}$, where
    $B_2^i=\left\{ b_2^{i,j} \middle| 0\leq j \leq d-1 \right\}$ for $i\in [n]$.
    \item
    $U_A=\left\{ u_A^{i} \middle| 0\leq i \leq d n^{1/d} \right\}$.
    \item 
    $L_A=\left\{ l_A^{i} \middle| 0\leq i \leq d n^{1/d} \right\}$.
    \item 
    $U_B=\left\{ u_B^{i} \middle| 0\leq i \leq d n^{1/d} \right\}$.
    \item
    $L_B=\left\{ l_B^{i} \middle| 0\leq i \leq d n^{1/d} \right\}$.
\end{itemize}

\noindent Each $S\in \{A_1,A_2,B_1,B_2\}$ contains $d n$ vertices, and divided into $n$ subsets of size $d$.
Each $C\in \{U_A,U_B,L_A,L_B\}$ contains $d n^{1/d}$ vertices. The number of vertices $V=A_1\cup A_2\cup B_1\cup B_2 \cup U_A \cup L_A \cup U_B \cup L_B$ is $\Theta(\ell n + d n^{1/d}) = \Theta(n)$.\\

\noindent\textbf{Edges.}
First, we add $2d n^{1/d}$ edges $\left\{(u_A^i,u_B^i),(l_A^i,l_B^i)\middle| i\in [d n^{1/d}]\right\}$.
Then, we consider a map from $[n]$ to $[d n^{1/\ell}]^d$, where $[d n^{1/d}]^d$ is $d$ times direct product of the set $[d n^{1/d}]$.
Since
\begin{align*}
    \begin{pmatrix}
    d n^{1/d}\\
    d
    \end{pmatrix}
    = \frac{d n^{1/d}}{d} \cdot \frac{d n^{1/d}-1}{d-1} \cdots \frac{d n^{1/d}-d + 1}{1}
    \geq \left(\frac{d n^{1/d}}{d}\right)^{d} = n
\end{align*}
holds, there exists an injection $\sigma:[n]\rightarrow [d n^{1/d}]^{d}$.
We arbitrarily choose one of these injections.
For $i\in [n]$, we denote $\sigma(i)=\{k_1,\ldots ,k_{d}\}\in [d n^{1/d}]^{d}$.
For all $i \in [n],j\in [d]$, we add the following edges. 
\begin{itemize}
    \item We add an edge set $\left\{ (a_1^{i,j},u_A^{k_j}) \middle| i\in [n],j\in [d] \right\}$.
    
    \item We add an edge set $\left\{ (a_2^{i,j},l_A^{k_j}) \middle| i\in [n],j\in [d] \right\}$.
    \item We add an edge set $\left\{ (b_1^{i,j},u_B^{k_j}) \middle| i\in [n],j\in [d] \right\}$.
    \item We add an edge set $\left\{ (b_2^{i,j},l_B^{k_j}) \middle| i\in [n],j\in [d] \right\}$.
\end{itemize}
Now we can determine exactly $\ell$ vertices of $U_A$ that are adjacent to vertices of $A_1^i$. We denote them $Code(A_1^i)\subseteq U_A$. In the same way, we determine the vertex sets $Code(A_2^i)\subseteq L_A$, $Code(B_1^i)\subseteq U_B$, and $Code(B_2^i)\subseteq L_B$ by using the same $\sigma$.
Since $\sigma$ is an injection, it holds that $Code(A_1^i)\neq Code(A_1^j)$, $Code(A_2^i)\neq Code(A_2^j)$, $Code(B_1^i)\neq Code(B_1^j)$, and $Code(B_2^i)\neq Code(B_2^j)$ for $i\neq j$. 

\noindent In addition, we add the following edges.
\begin{itemize}
    \item For any $i,j\in [n]$, add edges between $u\in A_1^i,v\in A_1^j$ if and only if $i\neq j$.
    \item For any $i,j\in [n]$, add edges between $u\in A_2^i,v\in A_2^j$ if and only if $i\neq j$.
    \item For any $i,j\in [n]$, add edges between $u\in B_1^i,v\in B_1^j$ if and only if $i\neq j$.
    \item For any $i,j\in [n]$, add edges between $u\in B_2^i,v\in B_2^j$ if and only if $i\neq j$.
\end{itemize}

\paragraph*{Creating $G_{x,y}$}

Given two binary strings $x,y\in\{0,1\}^{n^2}$, we add the following edges:
\begin{itemize}
    \item For $i,j\in [n]$, add edges $\{(a_1^{i,k+1},a_2^{j,k}) | k\in [d-1]\}$, if and only if $x_{i,j}=1$.
    \item For $i,j\in [n]$, add edges $\{(b_1^{i,k},b_2^{j,k}) | k\in [d-1]\}$, if and only if $y_{i,j}=1$.
\end{itemize}

This concludes the description of $G_{x,y}$. The following lemmas are shown as exactly in Ref.~\cite{legallISAAC21}.

\begin{lemma}\label{C8l_lemma}
Any subset of nodes $\mathcal{C}\subseteq V$ of size $8d$ in $G_{x,y}$ which induces $P_{8d}$ contains at most $d$ nodes in $S$ where $S\in \{A_1,A_2,B_1,B_2\}$.
\end{lemma}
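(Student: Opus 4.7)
The plan is to establish the bound for one of the four sets, say $A_1$, with the other three cases following by symmetry. The crucial structural observation is that the subgraph of $G_{x,y}$ induced on $A_1$ is the complete $n$-partite graph with parts $A_1^1, \ldots, A_1^n$ of size $d$: by construction, two vertices of $A_1$ are adjacent in $G_{x,y}$ iff they belong to different groups $A_1^i$ and $A_1^j$.

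As a preliminary, I would observe that a complete multipartite graph contains no induced $P_4$. Indeed, if $p_1 p_2 p_3 p_4$ were such a path, then $p_1, p_3, p_4$ would be pairwise non-adjacent and hence would all share a common part; but $p_3 p_4$ would have to be an edge, a contradiction. Consequently, any segment of consecutive $A_1$-vertices appearing along an induced path has length at most three.

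Now let $i_1 < i_2 < \cdots < i_t$ be the positions, along an induced copy of $P_{8d}$, of the vertices belonging to $A_1$. Since the path is induced, any two such positions at distance $\geq 2$ correspond to non-adjacent vertices in $G_{x,y}$, and therefore to vertices of a common part $A_1^j$. I would then split according to the maximal contiguous runs among $i_1, \ldots, i_t$. If every maximal run has length one, then all $t$ vertices share one common part, giving $t \leq d$. Otherwise, some maximal run $R$ has length $s \in \{2,3\}$ by the preliminary observation, and a short case analysis shows that no $A_1$-vertex can occur outside of $R$: any outside position has gap $\geq 2$ to both endpoints of $R$ (by maximality of $R$), so the vertex there would have to share a part with both endpoints; but when $s = 2$ these endpoints lie in distinct parts $P \neq Q$, and when $s = 3$ they share a part $P$ while the middle vertex of $R$ lies in a distinct part $Q$, so in either case the outside vertex would be forced into two different parts at once. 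Hence in these remaining cases $t \leq 3 \leq d$, using the standing subsection hypothesis $d \geq 3$.

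The main obstacle is essentially the bookkeeping of the case split: one must be careful to take \emph{maximal} runs so that every outside position has gap at least two to both endpoints of the run, which is precisely what forces the ``lies in two different parts'' contradiction. The identical argument, reading $A_2$, $B_1$, or $B_2$ in place of $A_1$, handles the remaining three cases.
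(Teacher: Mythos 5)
Your argument is correct, and it captures the essential structure needed. The paper itself does not spell out a proof — it simply asserts that the lemma is ``shown as exactly in Ref.~\cite{legallISAAC21}'' — so a line-by-line comparison is not possible, but your reasoning is exactly the natural one for this construction and uses the same ingredients the paper signals are needed (in particular the paper's remark that the lemma fails for paths when $d \le 2$ corresponds precisely to the place where you invoke $d \ge 3$ to close the ``nontrivial run'' case).

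Two small points of precision. First, in your $P_4$-freeness claim, the three vertices $p_1, p_3, p_4$ of an induced $P_4$ are \emph{not} pairwise non-adjacent, since $p_3 p_4$ is an edge of the path; what you actually have is that $p_1 p_3$ and $p_1 p_4$ are non-edges, hence $p_1, p_3, p_4$ all lie in one part (non-adjacency in a complete multipartite graph is an equivalence relation), which then contradicts $p_3 p_4$ being an edge. The conclusion is right; only the phrase ``pairwise non-adjacent'' overstates what is given. Second, in the run-of-length-three case you should note explicitly that an outside $A_1$-position is at path-distance at least two not only from the two endpoints but also from the middle vertex of the run (it is in fact at distance $\ge 3$ from the middle, by maximality), since it is precisely the simultaneous constraints from an endpoint (part $P$) and the middle vertex (part $Q \ne P$) that produce the contradiction. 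With those clarifications the proof is complete, and the symmetry reduction to $A_2, B_1, B_2$ is valid because each of those four sets induces an isomorphic complete $n$-partite graph with parts of size $d$.
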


\begin{lemma}\label{C8l_lemma2}
Any subset of nodes $\mathcal{C}\subseteq V$ of size $8d$ in $G_{x,y}$ which induces $P_{8d}$ contains  $d$ nodes in $S$, where $S\in \{A_1,A_2,B_1,B_2,U_A,U_B,L_A,L_B\}$.
\end{lemma}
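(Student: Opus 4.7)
The plan is to establish $n_S := |\mathcal{C} \cap S| = d$ for each of the eight clusters $S$ by complementing the upper bound from Lemma~\ref{C8l_lemma} (which handles the four $A/B$ clusters) with analogous upper bounds on the four codeword clusters $U_A, U_B, L_A, L_B$, and then invoking $\sum_S n_S = 8d$ to force equality throughout.

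For the bound on $n_{U_A}$, I would use two structural observations. First, $U_A$ is an independent set, so each vertex of $\mathcal{C} \cap U_A$ forms a singleton run in the induced $P_{8d}$, and its (at most two) path-neighbors lie in $N(U_A) = A_1 \cup U_B$. Second, each vertex of $A_1 \cup U_B$ has a unique neighbor in $U_A$: the vertex $a_1^{i,j}$ is adjacent only to $u_A^{k_j}$ where $\sigma(i) = \{k_1, \ldots, k_d\}$, and $u_B^m$ is adjacent only to $u_A^m$ via the matching. Consequently, each vertex of $A_1 \cup U_B$ can serve as a path-neighbor of at most one vertex of $\mathcal{C} \cap U_A$, and counting the pairs gives
$$2 n_{U_A} - e_{U_A} \;\leq\; n_{A_1} + n_{U_B},$$
where $e_{U_A}$ counts path-endpoints in $U_A$. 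Symmetric inequalities hold for $U_B$, $L_A$, $L_B$. Summing the four inequalities, using $\sum_S n_S = 8d$ and $\sum_S e_S \leq 2$, yields $n_{U_A} + n_{U_B} + n_{L_A} + n_{L_B} \leq 4d + 1$, and combining with Lemma~\ref{C8l_lemma} gives $n_{U_A} + n_{U_B} + n_{L_A} + n_{L_B} \geq 4d$.

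The main obstacle, and where the proof requires care, is pinning down each individual $n_S = d$ from these aggregate bounds, specifically ruling out the slack case where $n_{U_A} + n_{U_B} + n_{L_A} + n_{L_B} = 4d+1$ or where the sum equals $4d$ but is unevenly distributed. In any such slack scenario, one of the derived inequalities is tight, which forces both path-endpoints to lie in $U/L$ clusters and every $A/B$ vertex of $\mathcal{C}$ to be a path-neighbor of some $U/L$ vertex. Exploiting the 8-cycle cluster-adjacency structure of $G_{x,y}$ together with the fact (underlying the proof of Lemma~\ref{C8l_lemma}) that every maximal run within an $A/B$ cluster $A_s$ contains at most three consecutive path vertices, since the subgraph induced on $A_s$ is a complete multipartite graph with parts of size $d \geq 3$ and hence contains no induced $P_4$, a finite case analysis on the walk of the induced $P_{8d}$ around the 8-cycle of clusters rules out such a configuration. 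This forces $n_{U_A} + n_{U_B} + n_{L_A} + n_{L_B} = 4d$ with each $n_{U_X} = d$, and by Lemma~\ref{C8l_lemma} each $n_{A_s} = d$ as well, completing the proof.
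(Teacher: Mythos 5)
The paper does not actually present a proof of this lemma; it says the two lemmas are ``shown as exactly in Ref.~\cite{legallISAAC21}.'' So your proposal must be judged on its own merits rather than against a proof in this paper.

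Your counting inequalities of the form $2n_{U_A} - e_{U_A} \leq n_{A_1} + n_{U_B}$ are correct, and summing them together with Lemma~\ref{C8l_lemma} does give $4d \leq n_{U_A}+n_{U_B}+n_{L_A}+n_{L_B} \leq 4d+1$. The genuine gap is that the ``finite case analysis'' you defer to is where essentially all the work lies, and your outline of it is both incomplete and partly incorrect. First, the aggregate bound and the improved per-cluster bound $3t_i \leq 2e_S + e_{S'}$ do \emph{not} eliminate all slack configurations: for instance $(n_{U_A},n_{U_B},n_{L_A},n_{L_B}) = (d+1,d+1,d-1,d-1)$ with one endpoint in $U_A$ and one in $U_B$ (and its $L$-side mirror) satisfies every one of your inequalities, and even the refined edge-balance counts around the eight clusters are consistent with it. Second, your claim that tightness forces ``every $A/B$ vertex of $\mathcal{C}$ to be a path-neighbor of some $U/L$ vertex'' is false in this configuration: only the tight inequalities (those for $U_A$ and $U_B$, say) force this, and only for $A_1$ and $B_1$, not for $A_2$ and $B_2$. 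Third, your structural observation --- that a complete multipartite graph is $P_4$-free, so runs within an $A/B$ cluster have length at most $3$ --- is true but not sharp enough to close the case. The fact you actually need is stronger: since two adjacent $A_1$-vertices lie in different parts $A_1^i, A_1^{i'}$, any third $A_1$-vertex of $\mathcal{C}$ would have to be non-adjacent to both and hence lie in both parts; so for $n_{A_1} = d \geq 3$ every run in $A_1$ has length exactly one, and consequently $\mathcal{C} \cap A_1$ occupies an entire part $A_1^i$, forcing each of its $d$ vertices to have one path-edge to $U_A$ and one to $A_2$. The contradiction that actually kills the slack case is then a specific feature of the construction that your argument never touches: the $x$-dependent edges are $(a_1^{i,k+1}, a_2^{j,k})$ with $k \in [d-1]$, so the vertex $a_1^{i,1}$ has \emph{no} neighbor in $A_2$ at all, yet it lies in $\mathcal{C} \cap A_1 = A_1^i$ and is required to have an $A_2$ path-edge. (The symmetric case uses $a_2^{i,d}$, which has no $A_1$-neighbor.) Without identifying this asymmetry in the construction and the ``length-one runs'' rigidity, the cluster-walk case analysis cannot be completed, so as written the proposal does not establish the lemma.
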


Note that Lemma~\ref{C8l_lemma} holds for $d \leq 2$ in the case of $C_{8d}$, but does not hold  for $d \leq 2$ in the case of $P_{8d}$. For example, for $d = 2$ we can see that there is an induced $P_{8d + 1}$ containing  $d + 1$ nodes in $A_1$. Now it is easy to see that $G_{x,y}$ is $P_{8d}$-free iff $x$ and $y$ are disjoint.

%%%%%%%%%%%%%%%%%%%%%%%%%%%%%%%%%%%%%%%%%%%%%%%%
\subsubsection{Lower bound for $d=2$}\label{appendix-P22}
%%%%%%%%%%%%%%%%%%%%%%%%%%%%%%%%%%%%%%%%%%%%%%%%

We first explain our graph construction.

\noindent\textbf{Vertices.}
We define the sets of vertices as follows:

\begin{itemize}
    \item $A_1=A_1^1\cup\cdots \cup A_1^{n}$, where
    $A_1^i=\left\{ a_1^{i,j} \middle|  j \in \{1,2\} \right\}$ for $i\in [n]$;
    \item $A_2=A_2^1\cup\cdots  \cup A_2^{n}$, where
    $A_2^i=\left\{ a_2^{i,j} \middle| j \in \{1,2\} \right\}$ for $i\in [n]$;
    \item $B_1=B_1^1\cup\cdots\cup B_1^{n}$, where
    $B_1^i=\left\{ b_1^{i,j} \middle| j \in \{1,2\} \right\}$ for $i\in [n]$;
    \item $B_2=B_2^1\cup\cdots\cup B_2^{n}$, where
    $B_2^i=\left\{ b_2^{i,j} \middle| j \in \{1,2\} \right\}$ for $i\in [n]$;
    \item
    $U_A = U_{A,1}\cup U_{A,2}$ where $U_{A,j}=\left\{ u_A^{i,j} \middle| i \in [2 n^{1/2}]  \right\}$ for $j\in\{1,2\}$;
    \item
    $L_A = L_{A,1}\cup L_{A,2}$ where $L_{A,j}=\left\{ l_A^{i,j} \middle| i \in [2 n^{1/2}]  \right\}$ for $j\in\{1,2\}$;
    \item
    $U_B = U_{B,1}\cup U_{B,2}$ where $U_{B,j}=\left\{ u_B^{i,j} \middle| i \in [2 n^{1/2}]  \right\}$ for $j\in\{1,2\}$;
    \item
    $L_B = L_{B,1}\cup L_{B,2}$ where $L_{B,j}=\left\{ l_B^{i,j} \middle| i \in [2 n^{1/2}]  \right\}$ for $j\in\{1,2\}$;
    \item $\{p,q,r,s,x,y\}$;
\end{itemize}

\noindent\textbf{Edges.}
We have edges $(p,q)$, $(r,s)$.
For each $i\in [n]$, we have edges $(q,a_1^{i,1})$, $(r,a_2^{i,1})$, $(x,b_1^{i,1})$, $(x,b_2^{i,1})$, $(y,b_1^{i,2})$, $(y,b_2^{i,2})$. 
We add $4 n^{1/2}$ edges $\left\{(u_A^i,u_B^i),(l_A^i,l_B^i)\middle| i\in [2 n^{1/2}]\right\}$.
Then, we consider a map from $[n]$ to $[2 n^{1/2}]^2$, where $[2 n^{1/2}]^2$ is $2$ times direct product of the set $[2 n^{1/2}]$.

For each $i\in [n]$, we write $i = i_1 +  (i_2-1)n^{1/{2}}$ where $i_1,i_2 \in [n^{1/2}]$. 
We add edges $(a_{1}^{i,1},u_{A}^{i_1,1})$, $(a_{1}^{i,2},u_{A}^{i_2,2})$, $(a_{2}^{i,1},l_{A}^{i_1,1})$, $(a_{2}^{i,2},l_{A}^{i_2,2})$, $(b_{1}^{i,1},u_{B}^{i_1,1})$, $(b_{1}^{i,2},u_{B}^{i_2,2})$, $(b_{1}^{i,1},u_{B}^{i_1,1})$, $(b_{1}^{i,2},l_{B}^{i_2,2})$.

Now we can determine exactly $2$ vertices of $U_A$ that are adjacent to vertices of $A_1^i$. We denote them $Code(A_1^i)\subseteq U_A$. In the same way, we determine the vertex sets $Code(A_2^i)\subseteq L_A$, $Code(B_1^i)\subseteq U_B$, and $Code(B_2^i)\subseteq L_B$.
It holds that $Code(A_1^i)\neq Code(A_1^j)$, $Code(A_2^i)\neq Code(A_2^j)$, $Code(B_1^i)\neq Code(B_1^j)$, and $Code(B_2^i)\neq Code(B_2^j)$ for $i\neq j$. 

\noindent In addition, we add the following edges.
\begin{itemize}
    \item For any $i,j\in [n]$, add edges between $u\in A_1^i,v\in A_1^j$ if and only if $i\neq j$.
    \item For any $i,j\in [n]$, add edges between $u\in B_2^i,v\in B_2^j$ if and only if $i\neq j$.
\end{itemize}
If $d \geq 2$, we add the following edges.
\begin{itemize}
    \item For any $i,j\in [n]$, add edges between $u\in A_2^i,v\in A_2^j$ if and only if $i\neq j$.
    \item For any $i,j\in [n]$, add edges between $u\in B_1^i,v\in B_1^j$ if and only if $i\neq j$.
\end{itemize}

\paragraph*{Creating $G_{x,y}$}

Given two binary strings $x,y\in\{0,1\}^{n^2}$, we add the following edges:
\begin{itemize}
    \item For $i,j\in [n]$, add an edge $(a_1^{i,2},a_2^{j,1})$, if and only if $x_{i,j}=1$.
    \item For $i,j\in [n]$, add edges $\{(b_1^{i,k},b_2^{j,k}) | k\in \{1,2\}\}$, if and only if $y_{i,j}=1$.
\end{itemize}

This concludes the description of $G_{x,y}$.

\begin{lemma}\label{P22_lemma1}
Any subset of nodes $\mathcal{P}\subseteq V$ of size $22$ in $G_{x,y}$ which induces $P_{22}$ contains at most $2$ nodes in $S$, where $S\in \{A_1,A_2,B_1,B_2\}$.
\end{lemma}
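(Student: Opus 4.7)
The plan is to argue by contradiction. Suppose $\mathcal{P}$ induces $P_{22}$ in $G_{x,y}$ and $|\mathcal{P}\cap A_1|\geq 3$; the other cases $S\in\{A_2,B_1,B_2\}$ will follow by analogous arguments after permuting the roles of $q,r,x,y$ and of the code sets $U_{\cdot},L_{\cdot}$. First I would observe that $A_1$ is isomorphic to the complete $n$-partite graph $K_{n\times 2}$ with parts $A_1^i$ of size $2$, and since $G[\mathcal{P}\cap A_1]$ is an induced subgraph of $G[\mathcal{P}]\cong P_{22}$ it must be a disjoint union of paths. A quick enumeration of the induced subgraphs of $K_{n\times 2}$ of size $\geq 3$ shows that the only such disjoint union of paths is a $P_3$ whose endpoints lie in a common part (any size-$\geq 4$ choice contains a triangle or a $C_4$). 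Hence $|\mathcal{P}\cap A_1|\leq 3$, and under the assumption of equality the three vertices have the form $\{a_1^{i,1},a_1^{j,k},a_1^{i,2}\}$ for some $i\neq j$ and $k\in\{1,2\}$; since they form an induced $P_3$ in $G[\mathcal{P}]$ they must occupy three consecutive positions $v_a=a_1^{i,1},v_{a+1}=a_1^{j,k},v_{a+2}=a_1^{i,2}$ of the path.

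Next I would exploit the sparse external neighborhoods $N(a_1^{i,1})\setminus A_1=\{q,u_A^{i_1,1}\}$ and $N(a_1^{i,2})\setminus A_1=\{u_A^{i_2,2}\}\cup\{a_2^{j',1}:x_{i,j'}=1\}$ provided by the $d=2$ construction. Any additional $A_1$ vertex being forbidden by the structural claim above, $v_{a-1}$ (if it exists) must lie in $\{q,u_A^{i_1,1}\}$, and $v_{a+3}$ in $\{u_A^{i_2,2}\}\cup\{a_2^{j',1}:x_{i,j'}=1\}$. Every subsequent step is then essentially forced: $q$ can only be followed by the degree-one vertex $p$ (terminating the path immediately on that side); $u_A^{i_2,2}$ is followed by its unique $U_A$--$U_B$ matching partner $u_B^{i_2,2}$ and thence by some $b_1^{\cdot,2}$ with matching second coordinate; and symmetrically, a left extension through $u_A^{i_1,1}$ is forced into the chain $u_A^{i_1,1}-u_B^{i_1,1}-b_1^{\cdot,1}-x-\cdots$. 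These rigidities rest on the facts that $U_A,U_B,L_A,L_B$ are independent sets, that the $U$--$U$ and $L$--$L$ matchings are bijective, and that $A_2,B_1,B_2$ are themselves each isomorphic to $K_{n\times 2}$, so that the structural claim applies to them too.

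The remaining step is a counting bound showing that both extensions, together with the three middle vertices, total strictly fewer than $22$ vertices. The main obstacle is the combinatorial branching at the intermediate vertices: a visited $b_1^{\cdot,2}$ may either advance through $y$ or detour within $B_1\cong K_{n\times 2}$, and analogous choices appear in $B_2$, in $A_2$, and at the $x$-controlled cross edges into $A_2$. I would handle this uniformly: any detour inside one of $A_2,B_1,B_2$ consumes additional part-representatives of the corresponding $K_{n\times 2}$ (capped at $3$ by the same structural claim), and each such detour forces a matching code vertex in $U_B,L_B$ or $L_A$ whose only exits lead back into $A_1,A_2,B_1,B_2$, where the per-set quotas are already saturated. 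Summing the bounded contributions across both extensions yields a total strictly below $22$, contradicting $|\mathcal{P}|=22$; the other three cases $S\in\{A_2,B_1,B_2\}$ follow from the same template after the indicated role permutation.
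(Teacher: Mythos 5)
Your opening observation is correct and matches the paper's: since each of $A_1,A_2,B_1,B_2$ is a cocktail-party graph $K_{n\times 2}$, any four vertices in one of them induce a $C_3$ or a $C_4$, so an induced path hits at most three, and three vertices can only be $\{a_1^{i,1},a_1^{j,k},a_1^{i,2}\}$ forming a $P_3$ with both endpoints in a common part. The paper's proof is exactly this, followed by an explicit enumeration showing that starting from such a triple the path extends to at most $20$ or $21$ nodes (for $A_1$, $A_2$) and at most $14$ or $19$ nodes (for $B_1$, $B_2$), illustrated in Figures~\ref{fig:graph-P20-P21} and~\ref{fig:graph-P14-P19}.

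The gap is in your third paragraph. You replace the enumeration with a ``uniform'' quota argument: detours in $A_2,B_1,B_2$ are each capped at $3$, and the code sets $U_A,U_B,L_A,L_B$ force re-entry into the already-saturated quotas. But this does not give a bound below $22$. Allowing $3$ vertices in each of $A_1,A_2,B_1,B_2$ gives $12$; the bound $z(U_A)+z(L_A)+z(U_B)+z(L_B)\leq z(A_1)+z(A_2)+z(B_1)+z(B_2)+2$ (the argument used in Lemma~\ref{P22_lemma2}) then permits up to $14$ more; and $\{p,q,r,s,x,y\}$ permits $6$ more, for a naive total of $32$. So the quota tally alone is consistent with a $P_{22}$, and the conclusion really does require tracing the forced continuations and showing that the specific triple geometry kills enough branches to land below $22$ — which is what the paper does case by case and you do not. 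Moreover, your claim that the $B_1,B_2$ cases ``follow from the same template after the indicated role permutation'' glosses over a genuine asymmetry: $a_1^{i,1}$ has only $\{q,u_A^{i_1,1}\}$ as external neighbors, while $b_1^{i,1}$ and $b_1^{i,2}$ attach to two distinct hubs $x$ and $y$, which is why the paper obtains different (and shorter) extension bounds for $B_1,B_2$ than for $A_1,A_2$. Without carrying out these counts, the proof is incomplete.
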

\begin{proof}
    Consider the case $A_1$ (and $A_2$). If $\abs{\mathcal{P}\cap A_1}\geq 4$, it induces $C_4$ or $C_3$. So assume that $\abs{\mathcal{P}\cap A_1} = 3$. In this case we have $\mathcal{P}\cap A_1 =\{a_1^{i,1},a_1^{i,2},a_1^{j,k}\}$ for some $i\neq j\in [n]$, $k\in\{1,2\}$, and $\mathcal{P}$ has two edges $(a_1^{i,1},a_1^{j,k}),(a_1^{i,2},a_1^{j,k})$. There are two cases as follows. See Figure~\ref{fig:graph-P20-P21} for the illustration.
    \begin{itemize}
        \item Suppose that $a_1^{i,1}$ is the endpoint in the path. In this case, we can see that the path can be extended to 20 nodes by a similar analysis as in the proof of Lemma~\ref{lem:11-path-graph-construction}.
        \item Suppose that $a_1^{i,1}$ and $a_1^{i,2}$ are not the endpoints in the path. In this case, we can see that the path can be extended to 21 nodes by a similar analysis as in the proof of Lemma~\ref{lem:11-path-graph-construction}.
    \end{itemize}

    Next, consider the case $B_1$ (and $B_2$). In this case we have $\mathcal{P}\cap B_1 =\{b_1^{i,1},b_1^{i,2},b_1^{j,k}\}$ for some $i\neq j\in [n]$, $k\in\{1,2\}$, and $\mathcal{P}$ has two edges $(b_1^{i,1},b_1^{j,k}),(b_1^{i,2},b_1^{j,k})$. There are two cases as follows. See Figure~\ref{fig:graph-P14-P19} for the illustration.
    \begin{itemize}
        \item Suppose that $b_1^{i,1}$ is the endpoint in the path. In this case, we can see that the path can be extended to 14 nodes by a similar analysis as in the proof of Lemma~\ref{lem:11-path-graph-construction}.
        \item Suppose that $b_1^{i,1}$ and $b_1^{i,2}$ are not the endpoints in the path. In this case, we can see that the path can be extended to 19 nodes by a similar analysis as in the proof of Lemma~\ref{lem:11-path-graph-construction}.
    \end{itemize}
\end{proof}

\begin{lemma}\label{P22_lemma2}
Any subset of nodes $\mathcal{P}\subseteq V$ of size $22$ in $G_{x,y}$ which induces $P_{22}$ contains exactly $2$ nodes in $S$, where $S\in \{A_1,A_2,B_1,B_2,U_A,U_B,L_A,L_B\}$.
\end{lemma}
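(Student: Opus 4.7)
The plan is to strengthen Lemma~\ref{P22_lemma1} by additionally establishing $|\mathcal{P} \cap S| \leq 2$ for each connector set $S \in \{U_A, U_B, L_A, L_B\}$. Once all eight $\leq 2$ bounds are in hand, the identity $|\mathcal{P}| = 22 = 2 \cdot 8 + 6$ together with the at-most-$6$ contribution of the terminal set $\{p, q, r, s, x, y\}$ forces every bound to be saturated, yielding exactly $2$ nodes in each of the eight sets as required.

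I focus on bounding $|\mathcal{P} \cap U_A|$; the corresponding bounds for $U_B, L_A, L_B$ will follow by analogous arguments exploiting the symmetries between the $A$-side and $B$-side and between the upper and lower halves of the construction. Suppose for contradiction that $|\mathcal{P} \cap U_A| \geq 3$. The key structural facts are: $U_A$ is an independent set in $G_{x,y}$; the neighborhood of $u_A^{k,j}$ is exactly $\{a_1^{i,j} : i_j = k\} \cup \{u_B^{k,j}\}$; and any two nodes $a_1^{i,j}, a_1^{i',j}$ with $i \neq i'$ lie in different blocks $A_1^i, A_1^{i'}$ and are therefore adjacent in $G_{x,y}$. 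Consequently no non-endpoint $U_A$ node on the induced $P_{22}$ can have both path-neighbors in $A_1$ without creating a chord, so every non-endpoint $U_A$ node contributes exactly one distinct $A_1$ path-neighbor together with one distinct $U_B$ path-partner; distinctness across $U_A$ nodes is enforced by the unique $U_A$-neighbor of each $A_1$ node and by the $U_A$--$U_B$ matching.

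Since $P_{22}$ has only two endpoints, a case analysis on how many of the three $U_A$ nodes are path-endpoints yields either (a)~$|\mathcal{P} \cap A_1| \geq 3$, directly contradicting Lemma~\ref{P22_lemma1}, or (b)~both path-endpoints lie in $U_A$. In case~(b), the terminals $p, q, r, s$ cannot belong to $\mathcal{P}$: $p$ and $s$ have degree $1$ and would themselves need to be endpoints, while $q$'s (resp.\ $r$'s) path-neighbors would be confined to $\{p\} \cup A_1^{*,1}$ (resp.\ $\{s\} \cup A_2^{*,1}$), and any two distinct across-pair members of $A_1^{*,1}$ (resp.\ $A_2^{*,1}$) are adjacent, forcing $q$ (resp.\ $r$) to also be an endpoint. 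With $p, q, r, s$ excluded from $\mathcal{P}$ and no path-endpoint remaining outside $U_A$, the analogous non-endpoint/neighbor-count analysis applied to $U_B, L_A, L_B$ forces each to contain at most $2$ nodes of $\mathcal{P}$: all their occurrences must be non-endpoints, each yielding a distinct $B_1$, $A_2$, or $B_2$ path-neighbor, of which there are at most $2$ by Lemma~\ref{P22_lemma1}. Meanwhile $|\mathcal{P} \cap U_A| \leq 4$ follows from the same non-endpoint/$A_1$ count (at most two non-endpoints plus the two endpoints in $U_A$). Summing gives $|\mathcal{P}| \leq 4 + 2 \cdot 7 + 2 = 20 < 22$, contradicting $|\mathcal{P}| = 22$. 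Hence $|\mathcal{P} \cap U_A| \leq 2$, and the symmetric arguments yield the same for $U_B, L_A, L_B$.

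The main obstacle is executing the case analysis cleanly: in the nonimmediate branch where no direct contradiction via Lemma~\ref{P22_lemma1} arises, the two global path-endpoints must be carefully tracked and the exclusion of $p, q, r, s$ from $\mathcal{P}$ propagated through the symmetric arguments to the other three connector sets, so that the global node budget collapses to strictly below $22$.
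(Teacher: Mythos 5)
Your overall strategy---prove $|\mathcal{P}\cap S|\le 2$ for all eight sets directly and then let $22 = 2\cdot 8 + 6$ force saturation---is a genuinely different route from the paper, which instead sums the inequalities $z(U_A)\le z(A_1)+z_{end}(U_A)$ (and its three siblings) into a single global budget constraint and then disposes of the residual cases by constraining $z_{end}$. Both routes lean on Lemma~\ref{P22_lemma1} and on the structural observations you correctly identify (each $A_1$ node has a unique $U_A$-neighbor, $U_A$--$U_B$ is a perfect matching, so a non-endpoint $U_A$ node contributes one fresh $A_1$ node and one fresh $U_B$ node). The paper's global accounting has the advantage that it never needs to isolate the bound $z(U_A)\le 2$ on its own before knowing anything about $U_B$; your approach, in contrast, must do a self-contained case analysis for a single connector set, and this is where it breaks.

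The dichotomy ``(a) $|\mathcal{P}\cap A_1|\ge 3$, or (b) both path-endpoints lie in $U_A$'' is incomplete. Take $|\mathcal{P}\cap U_A|=3$ with exactly one of the three a path-endpoint whose unique path-neighbor is its $U_B$-match rather than an $A_1$ node. The two non-endpoints contribute only two distinct $A_1$ nodes, so (a) does not trigger; and only one endpoint lies in $U_A$, so (b) does not trigger either. The contradiction does still exist, but it goes through a different chain than either (a) or (b): in this configuration $|\mathcal{P}\cap U_B|\ge 3$ (two matching partners of the non-endpoints plus the endpoint's partner, all distinct by the matching), which, to avoid contradicting $|\mathcal{P}\cap B_1|\le 2$ from Lemma~\ref{P22_lemma1}, forces the \emph{other} path-endpoint to lie in $U_B$, not $U_A$. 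One must then re-run your case~(b) exclusion of $p,q,r,s$ under the weaker hypothesis that the two endpoints lie in $U_A\cup U_B$ and redo the budget count (getting $|\mathcal{P}|\le 3+3+2\cdot 6+2=20<22$). Your proposal does not treat this case, so as written it is not a complete proof. It is fixable, but note the fix essentially reintroduces the kind of cross-set bookkeeping that the paper's summed inequality handles automatically.
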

\begin{proof}
For $S\subseteq V$, we denote $z(S)=|\mathcal{P}\cap S|$. We further define $z_{end}(S)$ by the number of nodes in $\mathcal{P}\cap S$ that are the endpoints in $\mathcal{P}$. Since $\mathcal{P}$ has two endpoints, we have $z_{end}(S)\leq 2$ for any $S$.
Observe that $z(U_A)\leq z(A_1) + z_{end}(U_A)$ since each node in $U_A$ has exactly one neighbor out of $A_1$. Similar observations show that $z(U_B)\leq z(B_1) + z_{end}(U_B)$, $z(L_A)\leq z(A_2) + z_{end}(L_A)$, and $z(L_B)\leq z(B_2) + z_{end}(L_B)$. Therefore,
\begin{align*}
    &z(U_A)+z(L_A)+z(U_B)+z(L_B)\\
    &\leq z(A_1)+z(A_2) + z(B_1)+z(B_2) + z_{end}(U_A) + z_{end}(L_A)+ z_{end}(U_B)+ z_{end}(L_B)\\
    &= z(A_1)+z(A_2) + z(B_1)+z(B_2) + z_{end}(U_A\cup L_A \cup U_B\cup L_B).
\end{align*}
Combining it with the statement of Lemma~\ref{P22_lemma1}, we have
\begin{align*}
    22 &= z(A_1)+z(A_2)+z(U_A)+z(L_A)+z(B_1)+z(B_2)+z(U_B)+z(L_B) + z(\{p,q,r,s,x,y\})\\
    &\leq 2(z(A_1)+z(A_2) + z(B_1)+z(B_2))  + z(\{p,q,r,s,x,y\}) + z_{end}(U_A\cup L_A \cup U_B\cup L_B)\\
    &\leq 16 + z(\{p,q,r,s,x,y\}) + z_{end}(U_A\cup L_A \cup U_B\cup L_B).
\end{align*}

If $z_{end}(U_A\cup L_A \cup U_B\cup L_B) = 0$, then $z(\{p,q,r,s,x,y\})=6$ and $z(A_1) = z(A_2) = z(B_1) = z(B_2) = 2$. This means that the two endpoints of $\mathcal{P}$ are $p$ and $s$, so we have $z(U_A) = z(U_B) = z(L_A) = z(L_B) = 2$.

If $z_{end}(U_A\cup L_A \cup U_B\cup L_B) = 1$, then $z(\{p,q,r,s,x,y\}) = 5$ since in this case both $p,q$ cannot be non-endpoints. Therefore $z(A_1) = z(A_2) = z(B_1) = z(B_2) = 2$. We now have $z(U_A)+z(L_A)+z(U_B)+z(L_B) = 9$. In case $z(U_A)=3$, we have $z_{end}(U_A)=1$. This endpoint cannot have any neighbor since the other two nodes in $\mathcal{P}\cap U_A$ must have two edges to $A_1$ and to $U_B$. This is a contradiction. The same contradiction occurs in the case of $z(U_B)=3$, $z(L_A)=3$, and $z(L_B)=3$. Furthermore, the similar contradiction occurs when $z_{end}(U_A\cup L_A \cup U_B\cup L_B) = 2$.
\end{proof}

From Lemma~\ref{P22_lemma2}, any induced $P_{22}$ uses exactly two nodes from each subset $S$ where $$S\in \{A_1,A_2,B_1,B_2,U_A,U_B,L_A,L_B\},$$ and nodes $\{p,q,r,s,x,y\}$. It is easy to see that $G_{x,y}$ is $P_{22}$-free iff $x$ and $y$ are disjoint.

\begin{figure}[tbp]
    \begin{center}
        \includegraphics[scale=0.7]{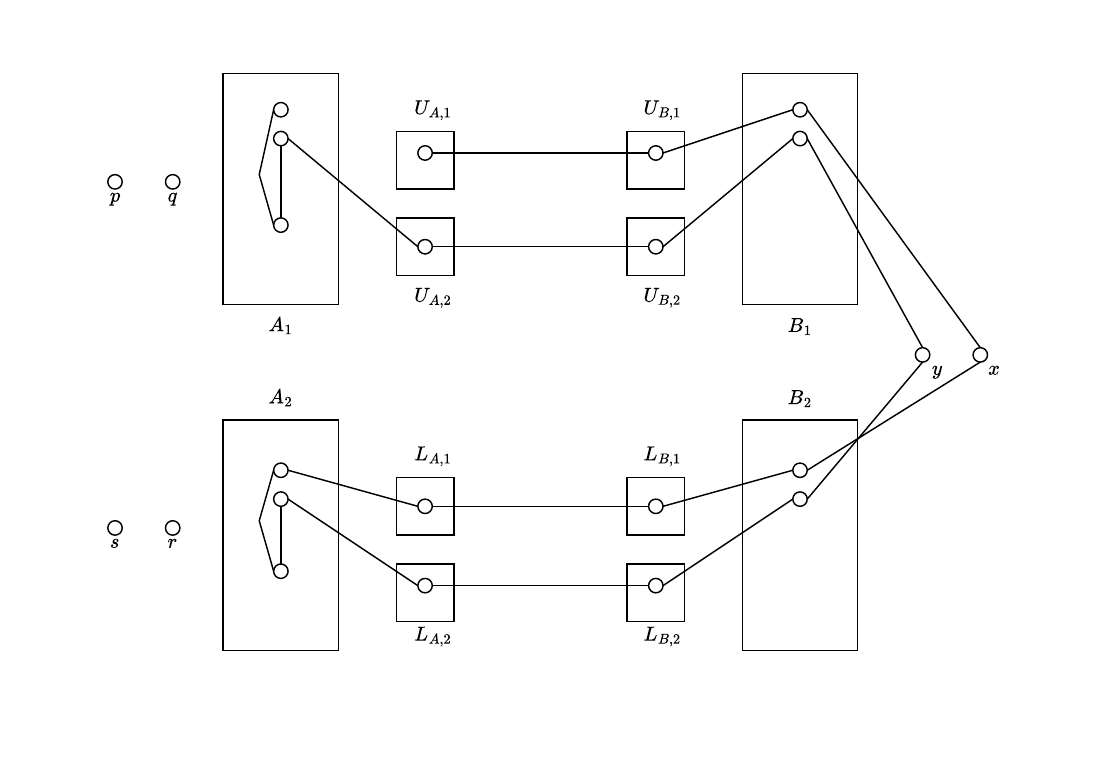}
        \includegraphics[scale=0.7]{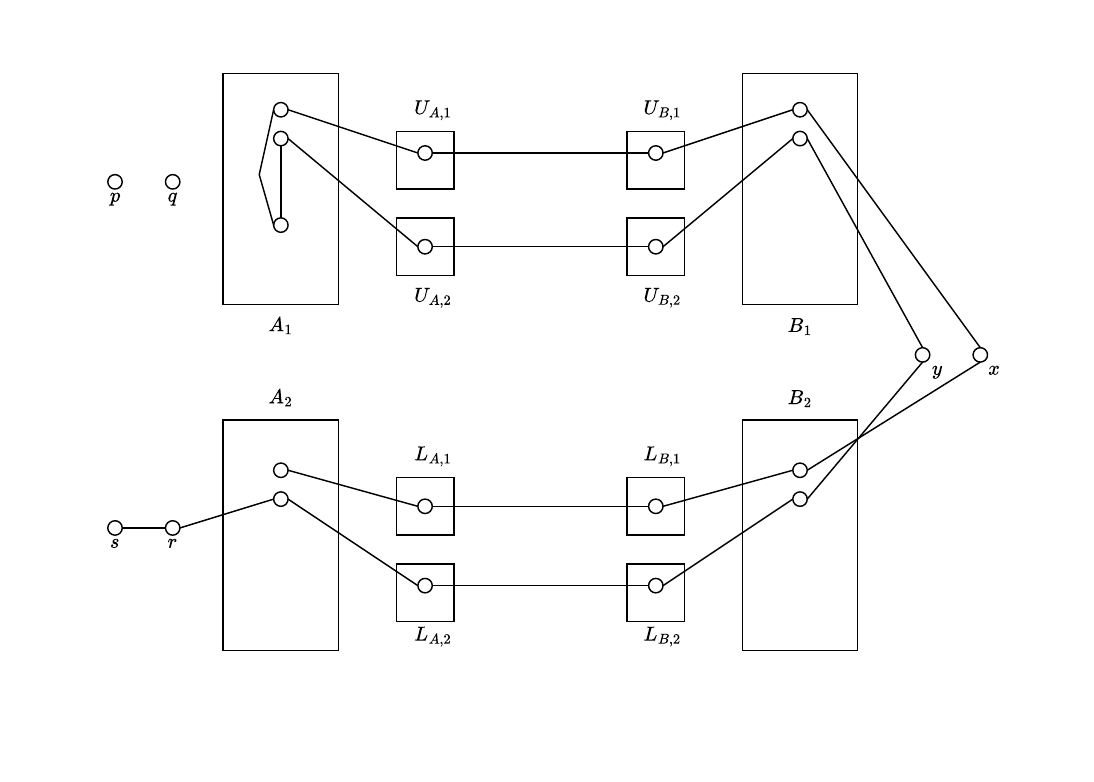}
    \end{center} 
\caption{An example of $P_{20}$ and $P_{21}$ in the proof of Lemma~\ref{P22_lemma1}.}
\label{fig:graph-P20-P21}
\end{figure}

\begin{figure}[tbp]
    \begin{center}
        \includegraphics[scale=0.7]{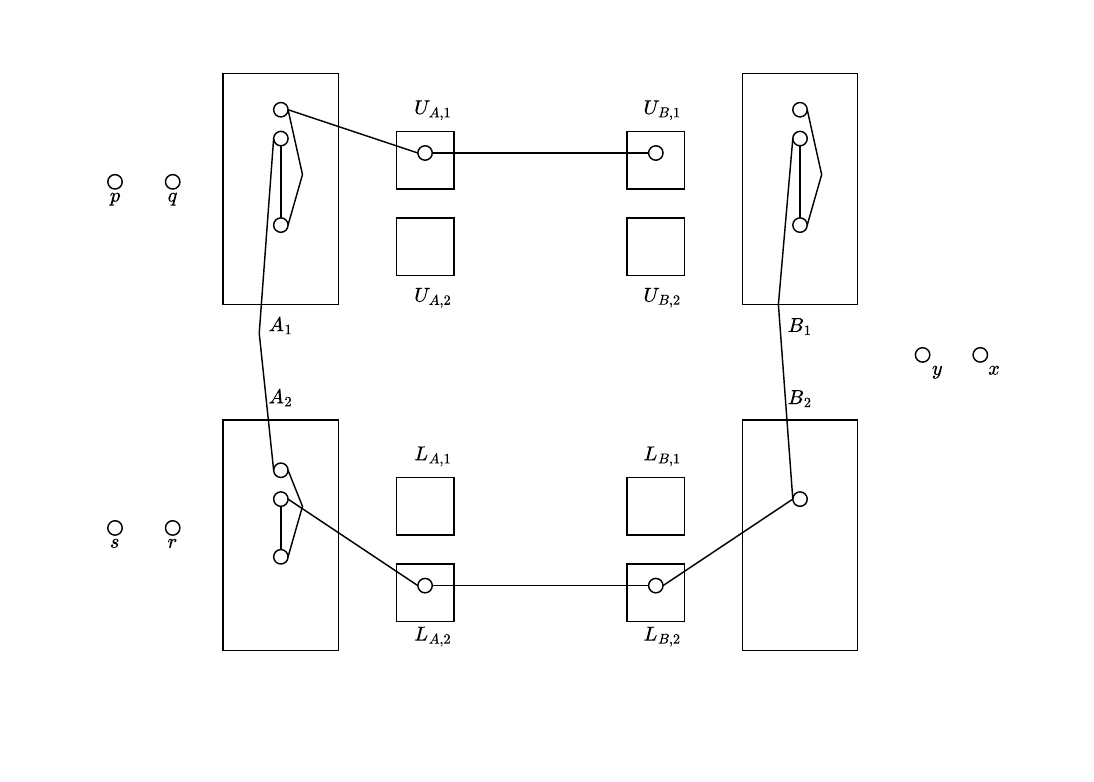}
        \includegraphics[scale=0.7]{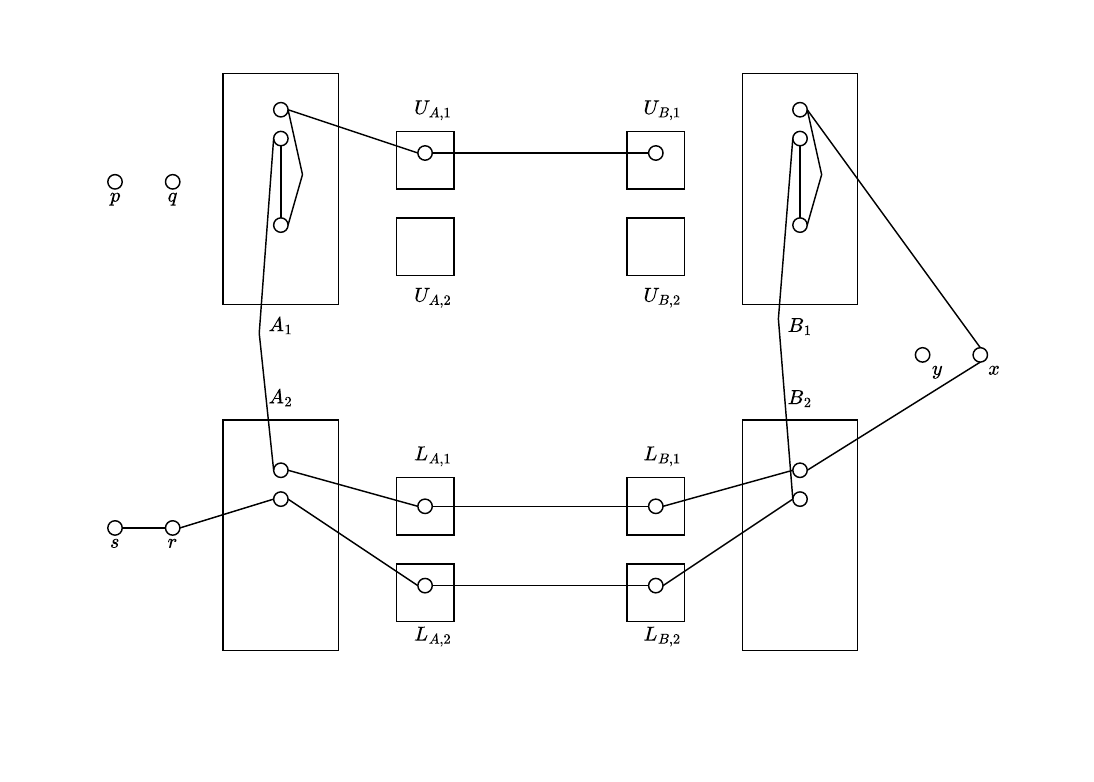}
    \end{center} 
\caption{An example of $P_{14}$ and $P_{19}$ in the proof of Lemma~\ref{P22_lemma1}.}
\label{fig:graph-P14-P19}
\end{figure}

\subsubsection{Lower bounds of the certificate size}

We can use the constructed graphs in this section to obtain lower bounds of the certificate size in the local certification of $P_k$-freeness stated in Theorem~\ref{thm:local-certification-lower-bound}. We use the linear lower bound on the nondetermistic communication complexity of set-disjointness as in~\cite{censor2020approximate}. Take a graph $G_{x,y}$ in Section~\ref{appendix-P22}, for instance, and replace each edge between $U_A\cup L_A$ and $U_B\cup L_B$ by a path of length 4. This makes the constructed graph $P_{28}$-free iff $\mathrm{DISJ}(x,y)=1$. All nodes in $A_1\cup A_2\cup U_A\cup L_A\cup \{p,q,r,s\}$ and the additional nodes in the replaced paths are simulated by Alice, and other nodes are simulated by Bob. 
Suppose that there is a local certification with locality 2 and certificate size $s$. Alice and Bob simulate it using $O(\sqrt{n}\cdot s)$ bits of communication. Using the fact that the nondetermistic communication complexity of set-disjointness on input size $\Theta(n^2)$ is $\Omega(n^2)$~\cite{Kushilevitz_Nisan_1996}, we get $s = \tilde{\Omega}(n^{3/2})$. Observe that, to increase the locality by 1, we have to add 8 additional nodes. Now we get the following statement: Any $\mathsf{PLS}_{\ell}$ that certifies $P_k$-freeness requires $\tilde{\Omega}(n^{3/2})$-bit certificate for $k\geq 8\ell + 14$. This corresponds to the statement of Theorem~\ref{thm:local-certification-lower-bound} for $d = 2$.

\section{Ordered path detection and applications}\label{sec:ordered-path}
In this section, we address the problem of detecting an ordered $P_k$. In this problem, each node of the graph is assigned a color from $\{1,\ldots,k\}$, and the objective is to detect an induced path $\{(p_i,p_{i+1})\}_{i\in \{1,\ldots,k-1\}}$ on $k$ nodes $\{p_1,\ldots p_k\}$, where each node $p_i$ is colored with $i$. Our first result is that detecting an ordered $P_5$ is already challenging, unlike the case of $P_k$-freeness where we only know nontrivial lower bound for $k\geq 11$ as in Theorem~\ref{thm:11-path-lower-bound}.

\begin{theorem}
 Any randomized algorithm that solves ordered $P_k$ detection for $k\geq 5$ requires $\tilde{\Omega}(n)$ rounds in the $\mathsf{CONGEST}$ model.
\end{theorem}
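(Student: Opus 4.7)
The plan is to reduce from the two-party set-disjointness problem, following the standard framework for $\mathsf{CONGEST}$ lower bounds via communication complexity. Given inputs $x,y \in \{0,1\}^K$ to Alice and Bob respectively, I will construct a graph $G_{x,y}$ on $\Theta(n)$ nodes equipped with a fixed $5$-coloring such that $G_{x,y}$ contains an induced ordered $P_5$ if and only if $\mathsf{DISJ}_K(x,y)=0$, and such that the edge set between Alice-owned and Bob-owned nodes (the cut) has bounded size $c$. Then any $T$-round $\mathsf{CONGEST}$ algorithm on $G_{x,y}$ can be simulated by a two-party protocol that exchanges $O(Tc\log n)$ bits, so combining with the $\Omega(K)$ lower bound on the randomized communication complexity of $\mathsf{DISJ}_K$ yields $T=\Omega(K/(c\log n))$. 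Choosing parameters with $K/c = \tilde{\Omega}(n)$ gives the desired $T = \tilde{\Omega}(n)$.

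The graph is built around a single color-$1$ source $p_1$ (Alice) adjacent to a color-$2$ layer $A=\{a_i\}$ of $\Theta(n)$ Alice-owned nodes, and symmetrically a single color-$5$ sink $p_5$ (Bob) adjacent to a color-$4$ layer $D=\{d_j\}$ of $\Theta(n)$ Bob-owned nodes. The middle color-$3$ layer is split into an Alice-side $M^A$ and a Bob-side $M^B$, connected across the cut by a fixed matching that accounts for most of the cut edges. The Alice-internal edges $a_i - m^A$ encode bits of $x$, the Bob-internal edges $m^B - d_j$ encode bits of $y$, and a few additional fixed cross-cut edges enable the ordered $P_5$ pattern $p_1 - a_i - m - d_j - p_5$ to be routed through either an Alice-side or a Bob-side middle node. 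The edges are designed so that the induced ordered $P_5$ corresponding to an index $\ell$ (encoded by the pair $(i,j)$) exists precisely when $x_\ell = y_\ell = 1$.

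The main obstacle is balancing $K$ against $c$ to obtain $K/c = \tilde{\Omega}(n)$. The color constraint is restrictive: every ordered $P_5$ must pass through exactly one color-$3$ node, so splitting the middle across the cut creates two possible routing cases (via an Alice-side or a Bob-side color-$3$ node), both of which must be analyzed to verify that an induced $P_5$ emerges exactly in the non-disjoint case. Avoiding spurious induced ordered $P_5$'s arising from the incidental structure of the indexing gadget, and confirming that the five chosen nodes induce exactly a path with no extra edges among them, requires case analysis based on which side of the cut the chosen color-$3$ node lies on and on the interaction of the Alice-internal and Bob-internal indexing edges. This is the technically delicate part of the proof.

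Finally, the extension to $k \geq 6$ is immediate by a padding argument: attach to $p_5$ a fixed $(k-5)$-node path of new nodes colored $6, 7, \ldots, k$ (each connected only to the previous one in the chain, and $p_6$ connected to $p_5$). Then an induced ordered $P_k$ in the enlarged graph corresponds bijectively to an induced ordered $P_5$ in $G_{x,y}$, so the same $\tilde{\Omega}(n)$ round lower bound transfers, noting that the enlarged graph still has $\Theta(n)$ nodes for any constant $k$.
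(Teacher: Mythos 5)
Your high-level plan is the right one and matches the paper's: reduce from two-party set-disjointness, construct a $5$-colored graph $G_{x,y}$ with a $\Theta(n)$-size Alice/Bob cut and an $n^2$-bit input, and invoke the $\Omega(n^2)$ randomized lower bound for $\mathsf{DISJ}_{n^2}$. The $(k\geq 6)$ padding argument is also fine. However, your concrete graph layout has a genuine gap that you flag but do not resolve, and in fact it cannot be resolved within the geometry you chose.

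The problem is your left-to-right color assignment: Alice holds colors $1,2$ plus the Alice-side of color $3$, Bob holds colors $5,4$ plus the Bob-side of color $3$. An ordered $P_5$ is $p_1 - a_i - m - d_j - p_5$ with a \emph{single} color-$3$ node $m$. If $m \in M^A$, the edge $a_i - m$ is Alice-internal and can encode $x$, but $m - d_j$ is a cut edge and must therefore be input-independent (otherwise the cut is no longer sparse). Symmetrically, if $m \in M^B$, the edge $m - d_j$ can encode $y$, but $a_i - m$ is a fixed cut edge. In either case only one of $x,y$ can influence whether the path exists, so the construction cannot encode $\mathsf{DISJ}$. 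Adding ``a few additional fixed cross-cut edges'' cannot fix this because fixedness is precisely the problem; adding a matching $M^A \leftrightarrow M^B$ would yield a $6$-node path $p_1-a_i-m^A-m^B-d_j-p_5$, not a $P_5$.

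The paper avoids this by not assigning colors in path order across the cut. Alice owns three layers $A_1, A_2, A_5$ (colored $1,2,5$) and Bob owns two layers $B_3, B_4$ (colored $3,4$), so the target path $a_{1,j_1} - a_{2,j_2} - b_{3,j_2} - b_{4,j_1} - a_{5,j_1}$ zig-zags Alice–Alice–Bob–Bob–Alice. Both input-dependent edges are internal ($A_1$–$A_2$ for $x$, $B_3$–$B_4$ for $y$), and both cut crossings ($A_2 \leftrightarrow B_3$ and $A_5 \leftrightarrow B_4$) are fixed perfect matchings of total size $2n$. A further ingredient you would also need, and which your single-$p_1$/$p_5$ design cannot supply, is a forcing mechanism: the paper connects $a_{1,j}$ to $a_{5,k}$ for all $j\neq k$, so the induced (non-adjacency) constraint between $p_1$ and $p_5$ forces the color-$1$ node to be $a_{1,j_1}$ once $a_{5,j_1}$ is chosen. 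Without some analogue, you have no control over which color-$1$ and color-$5$ nodes participate, and spurious ordered paths appear.
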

The proof of this theorem leverages a reduction from two-party communication complexity, with a construction that is notably simpler compared to the proof of Theorem~\ref{thm:11-path-lower-bound}. The detailed proof is provided in Appendix~\ref{appendix:ordered-P5}.

Our next insight regarding ordered path detection is the connection between proving non-trivial lower bounds for ordered $P_3$ detection and a long-standing open problem in circuit complexity.\footnote{This kind of connection is already known for several subgraphs such as triangles and $6$-cycles (see Section~\ref{appendix:related work}).} Moreover, this result has implications for the detection of induced $C_4$ in the quantum $\mathsf{CONGEST}$ model. 
Specifically, the following problem has remained unresolved for decades:

\begin{openproblem}\label{open-problem}
    Construct an explicit family of boolean functions $f_n:\{0,1\}^n\rightarrow \{0,1\}$ such that there exist constants $\delta_1,\delta_2 > 0$ such that any family of circuits $\{C_n\}_{n\in \mathbb{N}}$, where each circuit $C_n$ is depth $O(n^{\delta_1})$ and consists of $O(n^{1+\delta_2})$ gates with constant fan-in and fan-out, cannot compute $\{f_n\}_{n\in \mathbb{N}}$.
\end{openproblem}
The following theorem formalizes this relationship:

\begin{theorem}[The formal version of Theorem~\ref{thm:circuit_complexity_barrier_3_path}]\label{thm:ordered-3-path-circuit}
Let $\varepsilon>0$ be a constant. Then proving $\Omega(n^{\varepsilon})$ lower bound for ordered $P_3$ detection in the $\mathsf{CONGEST}$ model or $\Omega(n^{1/2+\varepsilon})$ lower bound for induced $C_4$ detection in the quantum $\mathsf{CONGEST}$ model solves Open Problem~\ref{open-problem}.
\end{theorem}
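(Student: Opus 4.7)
The plan is to leverage the tight correspondence between $\mathsf{CONGEST}$ round complexity and Boolean circuit complexity used to establish analogous barriers for triangle and $6$-cycle detection. The argument hinges on a pair of simulations: a $T$-round $\mathsf{CONGEST}$ algorithm on a sparse $n$-node graph compiles into a Boolean circuit of depth $\tilde O(T)$ and size $\tilde O(nT)$ with constant fan-in and fan-out (via a routine balanced-tree replication that preserves the bounds up to polylogarithmic factors), and conversely, a circuit of those parameters can be evaluated by a $\mathsf{CONGEST}$ protocol of round complexity polynomially related to its depth and amortized width per layer. Consequently, nontrivial round lower bounds and corresponding circuit lower bounds are polynomially equivalent, so proving the former resolves Open Problem~\ref{open-problem}.

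To formalize, I would fix the explicit family $\{f_N\}_{N\in\mathbb N}$ as the ordered $P_3$ freeness predicate on $n$-node graphs of polylogarithmic degeneracy together with a $\{1,2,3\}$-coloring, encoded in $N = \tilde\Theta(n)$ bits. This family is explicit, since decoding and evaluating the predicate are computable by a uniform polynomial-time Turing machine. Suppose toward a contradiction that, for some $\delta_1,\delta_2 < \varepsilon$, circuits of depth $O(n^{\delta_1})$ and size $O(n^{1+\delta_2})$ with constant fan-in and fan-out exist for $\{f_N\}$. The circuit-to-protocol simulation would then yield a $\mathsf{CONGEST}$ algorithm for ordered $P_3$ detection of round complexity $o(n^\varepsilon)$, contradicting the hypothesized lower bound. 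Hence a proof of the round lower bound furnishes the explicit family sought by Open Problem~\ref{open-problem}.

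For the quantum $\mathsf{CONGEST}$ statement on induced $C_4$ detection, the same strategy is executed with the simulations replaced by their quantum counterparts. The shift of the threshold from $n^\varepsilon$ to $n^{1/2+\varepsilon}$ reflects the quadratic overhead one pays when a classical circuit is forced to simulate a quantum $\mathsf{CONGEST}$ algorithm whose speedup exploits the constant-size witness of induced $C_4$; effectively, each quantum round contributes $\tilde O(\sqrt n)$ to the classical circuit size rather than $\tilde O(1)$. A round lower bound of $\Omega(n^{1/2+\varepsilon})$ in the quantum model thus still yields a super-linear classical circuit lower bound, in the form required by Open Problem~\ref{open-problem}.

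The main obstacle is the circuit-to-protocol direction of the simulation: turning a small Boolean circuit into a $\mathsf{CONGEST}$ protocol of comparable round complexity. The forward direction is folklore, but the reverse demands embedding the circuit into the network with controlled congestion, exploiting both the constant fan-in/fan-out of the circuit and the sparsity of the input class to route wire values without bottleneck. The quantum case compounds this difficulty by requiring a structural argument justifying the $\tilde O(\sqrt n)$ classical-simulation overhead---rather than the generic exponential cost of classically simulating a quantum algorithm---which leans on the existence of a local, constant-size witness for $C_4$. Once these two simulation lemmas are in place, the barrier theorem follows by combining them with the definition of $\{f_N\}$.
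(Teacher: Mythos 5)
Your proposal identifies the right high-level mechanism (the circuit-to-protocol simulation underlying the barriers of Eden et al.) but omits the central technical hurdle that makes the paper's proof nontrivial, and the quantum part relies on a mechanism that is not the one used and does not obviously work.

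The key obstacle is that Lemma~26 of~\cite{eden2022sublinear} (the circuit-to-protocol simulation you flag as ``the main obstacle'') only applies on graphs of small mixing time, because it must route wire values through the network with low congestion. Ordered $P_3$ detection on general graphs does \emph{not} directly fit this template: the input graph can have arbitrary conductance, and one cannot simply invoke the simulation. The paper resolves this by performing $K = O(\log n)$ rounds of expander decomposition, assigning each edge a $K$-bit color recording which recursion levels keep its endpoints in the same cluster, and then introducing an intermediate problem (\textsf{edge-colored triangle counting}) with a carefully designed weight function $z(T)$ so that the total weighted count equals the number of ordered $P_3$'s that are \emph{double-counted} across the decomposition levels. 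The algorithm then runs counting subroutines in parallel on each high-conductance cluster and compares the two resulting totals. Your proposal skips this reduction entirely; without it, the barrier argument simply does not go through because you cannot justify applying the circuit simulation to the problem on general topologies. Defining $f_N$ as ``ordered $P_3$ freeness on graphs of polylogarithmic degeneracy'' also does not help, since degeneracy bounds neither diameter nor mixing time.

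For the quantum statement, the mechanism you describe — that a classical circuit incurs $\tilde O(\sqrt n)$ overhead per quantum round because of a constant-size witness — is not what happens and I do not see how to make it rigorous. The paper instead reduces in the \emph{opposite} direction at the algorithmic level: assuming a $T$-round classical $\mathsf{CONGEST}$ algorithm for ordered $P_3$ detection, it constructs a $\tilde O(\sqrt n \cdot T)$-round \emph{quantum} $\mathsf{CONGEST}$ algorithm for induced $C_4$ detection via distributed quantum (Grover-type) search over vertices $u$ together with a random $\{1,3\}$-coloring of $N(u)$. Thus an $\Omega(n^{1/2+\varepsilon})$ quantum lower bound for $C_4$ forces an $\Omega(n^\varepsilon)$ classical lower bound for ordered $P_3$, at which point the first part applies. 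Your phrasing, if taken literally, would require classically simulating a quantum distributed algorithm with only polynomial overhead, which is not known and is not what is needed. Your writeup also asserts a ``polynomial equivalence'' between $\mathsf{CONGEST}$ round complexity and circuit complexity; this is stronger than what is available and is not used by the paper, which only needs the one direction (circuit $\Rightarrow$ protocol, on well-mixing graphs).
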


\begin{remark}
    One might think that ordered $P_3$ detection can be solved in $O(1)$ rounds as the case of $P_3$-freeness. However, it is easily shown that the round complexity of ordered $P_3$ detection in the $\mathsf{CONGEST}$ model is actually at least the round complexity of triangle detection in the $\mathsf{congested\;clique}$ model: Let $G'$ be the graph resulting by deleting all edges in the input graph $G$ between a node colored by 1 and a node colored by 3, and replacing non-edges between them by edges. Then any $\mathsf{CONGEST}$ algorithm for ordered $P_3$ detection is simulated in the $\mathsf{congested\;clique}$ model with the input graph $G'$ where each ordered $P_3$ in $G$ corresponds to a multicolored triangle in $G'$. Note that multicolored triangle detection can be reduced to (the standard) triangle detection in this model~\cite{nikabadi2022beyond}. 
\end{remark}

\begin{proof}[Proof sketch of Theorem~\ref{thm:ordered-3-path-circuit}]
    The first part of the statement builds on the known hardness results for triangle detection (and counting) established by~\cite{eden2022sublinear}. We demonstrate that ordered $P_3$ detection is at least as difficult as certain variants of triangle counting or ordered $P_3$ counting in high-conductance graphs. The full proof is deferred to Section~\ref{appendix:circuit-barrier}.

    For the second part, we use the framework of distributed quantum search~\cite{le2018sublinear}. Consider a function $f:X\rightarrow \{0,1\}$ defined on some finite set $X$, and assume a $T$-round $\mathsf{CONGEST}$ algorithm allows a fixed node $u$ to output $f(x)$ with constant probability, given input $x\in X$. In the quantum $\mathsf{CONGEST}$ model, we can find an element $x\in X$ such that $f(x)=1$ (if such an element exists) with constant probability in $\tilde{O}(\sqrt{|X|} \cdot T)$ rounds.

    Now, assume that ordered $P_3$ detection can be solved in $T$ rounds in the $\mathsf{CONGEST}$ model. We can show that induced $C_4$ detection can be solved in $\tilde{O}(\sqrt{n} \cdot T)$ rounds in the quantum $\mathsf{CONGEST}$ model. The statement then follows from the first part of the theorem, as an $\Omega(n^{1/2+\varepsilon})$ lower bound for induced $C_4$ detection implies an $\Omega(n^{\varepsilon})$ lower bound for ordered $P_3$ detection.

    Specifically, let $u$ be an arbitrary node, and define $M(u) = \{v: \text{dist}(u,v) = 2\}$. Consider the subgraph $G[N(u) \cup M(u)]$ induced by $N(u)$ and $M(u)$. Nodes in $M(u)$ are colored with color 2, while nodes in $N(u)$ randomly choose their color from 1 and 3. Note that $G[N(u) \cup M(u)]$ contains an ordered $P_3$ $(p_1,p_2,p_3)$ if $G$ contains an induced $C_4$ $(u, p_1, p_2, p_3)$. Since $p_1$ and $p_3$ randomly choose their colors, the probability that $(p_1,p_2,p_3)$ is properly colored is constant. Therefore, in $O(T)$ rounds, we can determine if $u$ belongs to an induced $C_4$. By using distributed quantum search for a function $f: V \rightarrow \{0,1\}$, which outputs $f(u) = 1$ if and only if $u$ belongs to an induced $C_4$, we can solve induced $C_4$ detection in $\tilde{O}(\sqrt{n} \cdot T)$ rounds.
\end{proof}

%%%%%%%%%%%%%%%%%%%%%%%%%%%%%%%%%%%%%%%%%%%%%%
\subsection{Full Proof of Theorem~\ref{thm:ordered-3-path-circuit}}\label{appendix:circuit-barrier}
%%%%%%%%%%%%%%%%%%%%%%%%%%%%%%%%%%%%%%%%%%%%%%

%Given lower bounds for ordered $5$-path detection, a natural question to ask is what about for $3$-path and $4$-path. An immediate observation is that a simple $O(1)$-round algorithm for induced $3$-path detection does not work for the ordered variant. 
Now we prove the first part of Theorem~\ref{thm:ordered-3-path-circuit}.

%\begin{theorem}
%In the $\mathsf{CONGEST}$ model, triangle detection is at least as easy as ordered $P_3$ detection.
%\end{theorem}

\subsubsection{Reduction to a decision problem on high-conductance graphs}

We show a reduction from ordered $P_3$ detection on general graphs to a certain decision problem on high conductance graphs.

For a graph $G=(V,E)$, we say that a partition of edges $E= E_m\cup E_r$ is an $(\varepsilon,\phi)$-expander decomposition if the following conditions hold
\begin{enumerate}
    \item Let $V=V_1\cup \cdots \cup V_p$ be a set of connected components induced by the edge set $E_m$. Let $G\{V_i\}$ be the graph resulting from adding $\mathrm{deg}_V(u) - \mathrm{deg}_{V_i}(u)$ self loops to each node $u$ in $G[V_i]$. The conductance of $G\{V_i\}$ satisfies $\Phi(G\{V_i\})\geq \phi$ for each $i$. 
    \item $|E_r|<\varepsilon |E|$.
\end{enumerate}

\begin{theorem}[\cite{chang2021near}]
For any $\varepsilon\in (0,1)$ and a positive integer $k$, an $(\varepsilon,\phi)$-expander decomposition with $\phi = (\varepsilon/\mathrm{polylog}(n))^{20\cdot 3^k}$ can be constructed in 
$
O\left(
(\varepsilon m)^{1/k}\cdot \left( \frac{\mathrm{polylog}(n)}{\varepsilon}\right)^{20\cdot 3^k}
\right)
$ rounds with high probability.
\end{theorem}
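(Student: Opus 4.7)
The plan is to prove this via the standard recursive \emph{cut-or-decompose} strategy, implemented distributedly. First I would design a subroutine $\mathsf{SparseCutOrCertify}$ that, given a subgraph $H$ induced by a node set $S$ (with self-loops preserving the original $V$-degrees) and a target conductance $\phi'$, either returns a cut $(A,S\setminus A)$ whose conductance in $H$ is at most $\phi'$, or certifies that $\Phi(H)\geq \phi'/\mathrm{polylog}(n)$ with high probability. The natural building block is a distributed implementation of an approximate ball-growing / nibble procedure (or equivalently, a congestion-free random walk), which is known to run in $\tilde{O}(1/\phi')$ rounds on a graph of conductance $\phi'$. The output is a nearly-balanced cut when the graph is far from an expander, or an ``unbalanced'' cut that can be trimmed to certify expansion on the larger side.

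Given this subroutine, the outer algorithm is a depth-$k$ recursion. At each level, I would partition the current piece into smaller pieces by repeatedly extracting sparse cuts, moving the boundary edges into $E_r$, and recursing on the resulting connected components. The target conductance shrinks geometrically with the recursion depth: if the level-$i$ target is $\phi_i$, then at the next level one can only afford $\phi_{i+1} \approx \phi_i^3 / \mathrm{polylog}(n)$ (the cube arises because trimming a $\phi$-sparse cut yields a subgraph with conductance roughly $\phi^2$, and one further loss comes from the approximation factor). Unwinding $k$ levels gives the final bound $\phi = (\varepsilon/\mathrm{polylog}(n))^{20\cdot 3^k}$ claimed in the theorem, while the total number of edges ever moved into $E_r$ is bounded by a geometric sum of level-wise losses, which can be controlled to be at most $\varepsilon |E|$ by an appropriate choice of constants in the trimming phase.

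For the round complexity, I would argue inductively: at each recursion level, the subroutine on a subgraph $H$ costs $\tilde{O}(1/\phi_i)$ rounds, and the different subgraphs at a given level run in parallel (they are vertex-disjoint, so their local communications do not interfere beyond a polylogarithmic congestion factor). Summing $\tilde{O}(1/\phi_k)$ over $k$ levels yields an overall complexity of the form $\tilde{O}((\varepsilon m)^{1/k}) \cdot (\mathrm{polylog}(n)/\varepsilon)^{20\cdot 3^k}$. The $(\varepsilon m)^{1/k}$ factor in particular is extracted by tuning the balance threshold so that the size of each recursive subproblem decreases by a factor $m^{1/k}$ per level, ensuring the recursion bottoms out after $k$ steps.

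The main obstacle, as expected, is the distributed implementation of $\mathsf{SparseCutOrCertify}$ under tight congestion constraints: the naive global flow / matching formulations used in centralized expander decomposition have bad congestion. I would circumvent this by following the random-walk (nibble) viewpoint, which is intrinsically local and can be simulated message-efficiently, at the cost of the polylogarithmic slack that appears in the statement. The delicate accounting is to make the loss per recursion level small enough that the cubing behavior $\phi_{i+1}\sim \phi_i^3$ does not blow up the final conductance beyond what is claimed, and to ensure that the $E_r$ budget is not violated when trimming unbalanced cuts. Once these two quantitative trade-offs are in place, the theorem follows by a straightforward induction on $k$.
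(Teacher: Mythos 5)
This theorem is stated in the paper as a citation of Chang and Saranurak~\cite{chang2021near} and is used as a black box; the paper itself contains no proof of it, so there is nothing internal to compare your proposal against. Your sketch is a fair high-level outline of the Chang--Saranurak approach: a distributed \emph{cut-or-certify} subroutine plugged into a depth-$k$ recursion, with conductance degrading polynomially per level and a balance threshold tuned so that subproblem size shrinks by an $m^{1/k}$ factor per level.

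That said, as you yourself acknowledge, the sketch leaves the substantive technical content unresolved, so as written it is a plan rather than a proof. Two things in particular would need to be supplied and are absent: (1) a congestion-efficient $\mathsf{CONGEST}$ implementation of the sparse-cut-or-certify primitive with the stated polylogarithmic approximation guarantee, including the trimming step and the budget argument that keeps $|E_r|\leq \varepsilon|E|$ across \emph{all} recursion levels simultaneously; and (2) a precise recurrence on the conductance parameter. Your heuristic that ``trimming a $\phi$-sparse cut yields conductance roughly $\phi^2$, plus one more loss for approximation'' gestures at the cubing but does not by itself justify the specific exponent $20\cdot 3^k$ or the matching round-complexity bound. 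In the context of this paper, the appropriate move is exactly what the authors do: cite~\cite{chang2021near} and treat the theorem as given.
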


We next define the following intermediate problem which we call \textsf{edge-colored triangle counting}.

\begin{problem}[\textsf{edge-colored triangle counting}] \label{problem:edge-colored-triangle}
     Let $G=(V^{in},E^{in})$ is an $n$-node graph where each node is colored by 1 or 2 or 3. For each edge $e$ let $c(e)\in \{0,1\}^K$ for some $K=O(\log n)$ be the color known to its endpoints. Let $z$ be a function (known to all nodes) that maps a multicolored triangle $T$ (a triangle whose three nodes are colored by 1,2,3) in the graph to an integer $z(T)$ from $\{0,1,\ldots, \mathrm{poly}(n)\}$. 
    %Let $T=(e_1,e_2,e_3)\in E^{in}\times E^{in}\times E^{in}$ be a multicolored triangle (a triangle whose three nodes are colored by 1,2,3) such that the endpoints of $e_1$ are colored 1 and 3. Let $z(T)$ be a number determined as follows: If all edges are colored by $0^K$ then $z(T)=0$. Let $i_p\in [K]$ be the minimum integer such that the $i_p$-th bit in $c(e_p)$ is 1 for $p\in \{1,2,3\}$. If $i_1 \leq i_2 \leq i_3$ does not hold then $z(T)=0$. If $i_1 = i_2$, then $z(T)=0$. Otherwise, $z(T)$ is the number of 1's in the $j$-th bit of $c(e_1)$ for $j\in \{i_1+1,\ldots, i_2\}$.
    The goal of this problem is to compute the sum of $z(T)$ for all multicolored triangles $T$ in $G$.
\end{problem}

Here we set the function $z$ as follows. Let $T=(e_1,e_2,e_3)\in E^{in}\times E^{in}\times E^{in}$ be a multicolored triangle (a triangle whose three nodes are colored by 1,2,3) such that the endpoints of $e_1$ are colored 1 and 3. Let $z(T)$ be a number determined as follows: If all edges are colored by $0^K$ then $z(T)=0$. Let $i_p\in [K]$ be the minimum integer such that the $i_p$-th bit in $c(e_p)$ is 1 for $p\in \{1,2,3\}$. If $i_1 \leq i_2 \leq i_3$ does not hold then $z(T)=0$. If $i_1 = i_2$, then $z(T)=0$. Otherwise, $z(T)$ is the number of 1's in the $j$-th bit of $c(e_1)$ for $j\in \{i_1+1,\ldots, i_2\}$.

Let $\mathcal{A}_1$ be an algorithm for \textsf{edge-colored triangle counting} in graphs with conductance $\phi$, and $T(n,\phi)$ be its round complexity. Let $\mathcal{A}_2$ be an algorithm for ordered $3$-path counting in graphs with conductance $\phi$, and $T'(n,\phi)$ be its round complexity. We consider the following protocol.

\begin{enumerate}
\item We first apply $(\varepsilon,1/\mathrm{polylog}(n))$-expander decomposition in $K = O(\log n)$ times recursively (i.e., the $(i+1)$-th expander decomposition is excuted on the edges between two clusters in the $i$-th expander decomposition). Let $E^{(i)}_m \cup E^{(i)}_r$ be the partition of edges by the $i$-th expander decomposition where $E^{(i)}_m$ is the set of cluster edges. Then, we can divide the edge set as
$E=E^{(1)}_m\cup E^{(2)}_m \cup \cdots \cup E^{(K)}_m \cup E^{(K)}_r$, and assume that $|E^{(K)}_r| = \tilde{O}(1)$. For each edge $e=(u,v)$, we define its color $c(e)$ as follows: For each $i\in [K]$, if $u$ and $v$ are in the same cluster in the $i$-th expander decomposition, then the $i$-th bit of $c(e)$ is 1, otherwise 0.
\item For each $i\in [K]$, each cluster of the graph $G^{(i)}=(V,E^{(i)}_m \cup E^{(i)}_r)$ runs $\mathcal{A}_1$ in parallel, where the input edges for each node is all edges incident to them.  After that, the leader node learns $E^{(K)}_r$ in $|E^{(K)}_r| = \tilde{O}(1)$ rounds and locally compute the solution of \textsf{edge-colored triangle counting} in the graph $(V,E^{(K)}_r)$. Let $A_1$ be the solution of \textsf{edge-colored triangle counting} in the original graph $G$.
\item For each $i\in [K]$, each cluster of the graph $G^{(i)}=(V,E^{(i)}_m \cup E^{(i)}_r)$ runs $\mathcal{A}_2$ in parallel, where the input edges for each node is all edges incident to them.
After that, the leader node learns $E^{(K)}_r$ in $|E^{(K)}_r| = \tilde{O}(1)$ rounds and locally compute the number of ordered $P_3$'s in the graph $(V,E^{(K)}_r)$. Let $A_2$ be the sum of all counted $P_3$'s in the whole procedure.
\end{enumerate}

\begin{lemma}
    $A_1 = A_2$ if and only if $G$ does not contain an ordered $3$-path.
\end{lemma}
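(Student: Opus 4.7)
The plan is to decompose $A_1$ and $A_2$ into sums of contributions indexed by vertex triples $(p_1,p_2,p_3)$ colored $1,2,3$ with $(p_1,p_2),(p_2,p_3)\in E$, which are either multicolored triangles (if $(p_1,p_3)\in E$) or ordered $P_3$'s (if $(p_1,p_3)\notin E$). I would show that the weight $z(T)$ is engineered so that every multicolored triangle contributes equally to $A_1$ and $A_2$, while each ordered $P_3$ contributes strictly positively only to $A_2$. Hence $A_2-A_1\ge 0$ always, with equality iff $G$ contains no ordered $P_3$.

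The first step is a bookkeeping lemma that records, for each triple and each level $j\in[K]$, whether the three vertices lie in a common level-$j$ cluster and which of the three edges are cluster edges at that level. Since each edge $e$ is a cluster edge at the unique level $i_e$ equal to the smallest $1$-bit of $c(e)$, while the remaining $1$-bits of $c(e)$ merely record additional levels at which the endpoints of $e$ stay co-clustered through other edges, the complete visibility pattern of a triple to $\mathcal{A}_1$ and $\mathcal{A}_2$ is determined by $c(e_1),c(e_2),c(e_3)$. The final local step on the residual graph $(V,E^{(K)}_r)$ absorbs the $\tilde O(1)$ leftover edges.

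The crux is triangle cancellation. For every multicolored triangle $T$ with edge levels $(i_1,i_2,i_3)$, I would show that the number of levels at which $\mathcal{A}_2$ miscounts $T$ as a phantom ordered $P_3$ — namely, levels at which $e_2$ and $e_3$ are both cluster edges but $e_1$ is not — equals exactly $z(T)$. The conditions $i_1<i_2\le i_3$ and the summation range $j\in(i_1,i_2]$ in the definition of $z(T)$ are calibrated to count these phantom levels via the $1$-bits of $c(e_1)$. A case analysis on the relative orderings of $(i_1,i_2,i_3)$ — including the degenerate ties $i_1=i_2$ where $z(T)=0$ correctly reflects the absence of phantoms, and the case where some edges of $T$ never share a level and are handled on $(V,E^{(K)}_r)$ — verifies cancellation. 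Once triangles cancel, $A_2-A_1$ is the nonnegative total cluster-count of genuine ordered $P_3$'s in $G$, which is zero iff $G$ is ordered-$P_3$-free, since every ordered $P_3$ is visible in at least one cluster (and, if earlier disconnected, inside the final remainder graph).

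The main obstacle is the triangle-cancellation step. It demands carefully distinguishing the unique level at which an edge $e$ is designated as a cluster edge from the potentially many higher levels at which its endpoints remain co-clustered via other edges, as encoded by the additional $1$-bits of $c(e)$. The specific summation range $(i_1,i_2]$ in $z(T)$ must be matched bit-by-bit to the set of phantom-occurrence levels of $T$; getting all the corner cases right (ties among the $i_k$, triangles whose edges never align at a common level, and proper accounting on the residual graph) is the principal technical work.
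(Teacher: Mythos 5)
Your proposal takes essentially the paper's route: decompose $A_1$ and $A_2$ over vertex triples, show that each multicolored triangle $T$ is over-counted by $\mathcal{A}_2$ as a phantom ordered $P_3$ exactly $z(T)$ times across the expander-decomposition levels, so $A_1 \le A_2$ always, with equality precisely when $G$ contains no genuine ordered $P_3$. One wording correction to your ``crux'' sentence: the phantom levels are those $j \in \{i_1+1,\ldots,i_2\}$ at which the $j$-th bit of $c(e_1)$ equals $1$ (i.e., $e_1$ already removed, $e_2$ and $e_3$ still present, and the endpoints of $e_1$ co-clustered), not ``levels at which $e_2$ and $e_3$ are both cluster edges'' --- each edge is a cluster edge at exactly one level --- though your bookkeeping and calibration paragraphs show you already hold the correct picture.
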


\begin{proof}
     Let $T=(e_1,e_2,e_3)$ be a set of three edges that form a multicolored triangle such that two endpoints of $e_1$ is colored 1 and 3, and $z(T)\neq 0$. From the definition of edge colors, for $i_1,i_2,i_3\in [K]$ in Problem~\ref{problem:edge-colored-triangle}, the edge $e_1$ is removed from the graph after the $i_1$-th expander decomposition and both edges $e_2$ and $e_3$ remains before the $i_2$-th expander decomposition. Recall that $z(T)$ is the number of 1's in the $j$-th bit of $c(e_1)$ for $j\in \{i_1+1,\ldots, i_2\}$, and the $j$-th bit of $c(e_1)$ is 1 iff both endpoints of $e_1$ are in the same cluster in the $j$-th expander decomposition. For such $j$, two edges $e_2,e_3$ remain in the graph $G^{(j)}=(V,E^{(j)}_m \cup E^{(j)}_r)$. Therefore $T$ is counted as an ordered $P_3$ in $G^{(j)}$ if and only if $j\in \{i_1+1,\ldots, i_2\}$. This indicates that $A_1 \leq A_2$. In particular, $A_1 = A_2$ iff $G$ contains no ordered $3$-path.
\end{proof}
We thus obtain the following.
\begin{lemma}\label{lem:ordered-3-path-reduction}
We can solve ordered $P_3$ detection in $\tilde{O}(T(n,\mathrm{polylog}(n)) + T'(n,\mathrm{polylog}(n)))$ rounds.
\end{lemma}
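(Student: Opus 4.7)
The plan is to bound the round complexity of each of the three stages of the protocol already described, and then show that they compose cleanly. First I will analyze the recursive expander decomposition in Step~1. Fixing $\varepsilon$ to be an inverse polylogarithm and applying the cited decomposition theorem with a suitable constant $k$, a single $(\varepsilon, 1/\mathrm{polylog}(n))$-expander decomposition costs $\tilde{O}(1)$ rounds, and repeating it $K = O(\log n)$ times therefore contributes only $\tilde{O}(1)$ rounds in total. Since $|E^{(i+1)}_r| \leq \varepsilon |E^{(i)}_r|$ by definition, taking $K$ large enough drives $|E^{(K)}_r|$ down to $\tilde{O}(1)$, which justifies the assumption used in Steps~2 and~3. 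The coloring $c(e)$ of each edge is computed locally by its endpoints during this stage, so it introduces no extra round cost.

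Next I will account for Step~2. In each graph $G^{(i)}$, every cluster induced by $E^{(i)}_m$ has conductance $\Omega(1/\mathrm{polylog}(n))$ by construction, so running $\mathcal{A}_1$ inside a single cluster of $G^{(i)}$ costs $T(n,\mathrm{polylog}(n))$ rounds. The subtle point, and the main obstacle I expect, is that the $K = O(\log n)$ executions of $\mathcal{A}_1$ must be coordinated across the different decompositions while sharing the same bandwidth. I will handle this by running the $K$ levels sequentially rather than in parallel, which incurs only a $\tilde{O}(1)$ multiplicative overhead. Within a single level, all clusters execute $\mathcal{A}_1$ simultaneously without interference, since the cluster edges $E^{(i)}_m$ are vertex-disjoint at the cluster level and each message of $\mathcal{A}_1$ stays inside its own cluster. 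After the $K$ executions conclude, the leader collects the $\tilde{O}(1)$ edges of $E^{(K)}_r$ via a BFS tree in $\tilde{O}(1)$ rounds and locally computes the contribution of those edges, giving a total of $\tilde{O}(T(n,\mathrm{polylog}(n)))$ rounds for Step~2.

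Step~3 is entirely analogous, replacing $\mathcal{A}_1$ by $\mathcal{A}_2$ and the function $z$ by the indicator of induced $P_3$; its total cost is $\tilde{O}(T'(n,\mathrm{polylog}(n)))$. Finally, the leader compares $A_1$ and $A_2$ and announces $G$ contains an ordered $P_3$ iff $A_1 \neq A_2$, which is correct by the preceding lemma. Summing the three stages yields the claimed bound of $\tilde{O}(T(n,\mathrm{polylog}(n)) + T'(n,\mathrm{polylog}(n)))$ rounds. The only genuinely nontrivial book-keeping is arguing that sequential composition across the $K$ levels costs only a $\mathrm{polylog}(n)$ factor absorbed by the $\tilde{O}(\cdot)$, and that simultaneous intra-cluster executions at a fixed level do not interfere; both follow from the fact that $\mathcal{A}_1$ and $\mathcal{A}_2$ by hypothesis operate entirely within their input cluster.
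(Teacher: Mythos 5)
Your proof is correct and takes essentially the same route as the paper: Lemma~\ref{lem:ordered-3-path-reduction} is stated in the paper without an explicit proof (it is presented as a direct consequence of the protocol in Steps~1--3 together with the preceding lemma), and your argument simply fills in the round-complexity book-keeping that the paper leaves implicit, including the correct reading of ``for each $i\in[K]$'' as sequential over levels and parallel across clusters at a fixed level, and the observation that intra-cluster executions do not interfere. One small caveat worth noting: the round cost you attribute to a single $(\varepsilon,1/\mathrm{polylog}(n))$-decomposition is stated as $\tilde{O}(1)$, but plugging $\varepsilon=1/\mathrm{polylog}(n)$ and the required $\phi=1/\mathrm{polylog}(n)$ into the cited decomposition theorem in fact gives an $n^{o(1)}$ bound rather than a genuine $\mathrm{polylog}(n)$ bound; this matches the paper's own (implicit) level of precision and is harmless downstream because Lemma~\ref{lem:circuit-simulation} already carries a $2^{O(\sqrt{\log n\log\log n})}=n^{o(1)}$ factor, but it is not literally $\tilde{O}(1)$ under the definition of $\tilde{O}$ used in the abstract.
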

%Since we have a non-constructive $\tilde{O}(1)$ upper bound for ordered $P_3$ detection, we also have $T(n,\mathrm{polylog}(n)) = \tilde{O}(1)$. By using a reduction from triangle detection on general graphs to triangle detection on high-conductance graphs~\cite{eden2022sublinear}, we have $T(n,\phi) = \tilde{O}(1)$ for all conductance $\phi$.

\subsubsection{Constructing explicit functions}

Consider an $n$-node $m$-edge connected graph $G =(V,E)$ with conductance at least $1/\mathrm{polylog}(n)$ where each node is colored by an integer from $\{1,2,3\}$ and each edge is colored by a $K=O(\log m)$-bit string. This graph is encoded by $N: = (K+2\log m + 4)m = O(m\log m)$ bits (where $K$ represents the color of each edge, $2\log m + 4$ represents the identifiers and the colors of two endpoints of each edge). %Note that the edge color uniquely determines $K$ recursive expander decompositions $\{E^{(i)}_m \cup E^{(i)}_r\}_{i\in [K]}$. 
We then define a boolean function $f:\{0,1\}^{N}\times \{0,1\}^{O(\log n)}\times \{0,1\} \rightarrow \{0,1\}$ as follows: Let $(x,i,j)\in \{0,1\}^{N}\times \{0,1\}^{O(\log n)}\times \{0,1\}$ is the input of $f$ where $x$ is considered as the representation of the input graph. The output is the $i$-th bit of the answer of \textsf{edge-colored triangle counting}\footnote{Since each $z(T)$ is represented by $O(\log n)$ bits and the number of triangles in $G$ is at most $O(n^3)$, $A_1$ is represented by $O(\log n)$ bits.} if $j=0$, otherwise the output is the $i$-th bit of the number of ordered $3$-paths in the input graph.
Assume that there exists a family of circuits with depth $O(N^{\alpha/5})$ and size $\mathsf{S}(f)$ that computes $f$. 
Now we use Lemma~\ref{lem:circuit-simulation} to show that proving polynomial lower bounds for \textsf{edge-colored triangle counting} or \textsf{ordered $3$-path counting} implies a super-linear lower bound on $\mathsf{S}(f)$.
\begin{lemma}[Lemma 26 of \cite{eden2022sublinear}]\label{lem:circuit-simulation}
    Let $U$ be a graph $U=(V,E)$ with $n$ nodes, $n^{1+\delta}$ edges, and mixing time $\tau$. Assume that a boolean function $f:\{0,1\}^{cn^{1+\delta}\log n}\rightarrow \{0,1\}$ is computed by a circuit $\mathsf{C}$ of depth $R$, comprising of gates with constant fan-in and fan-out, and at most $O(c\cdot s \cdot n^{1+\delta}\log n)$ wires for $s\leq n$. Then for any input partition that assigns to each node $u\in U$ no more than $c\cdot\mathrm{deg}(u)$ input wires, there is an $O(R\cdot c \cdot s \cdot \tau \cdot 2^{O(\sqrt{\log n \log \log n})})$-round protocol in the $\mathsf{CONGEST}$ model on $U$ that computes $f$ under the input partition.
\end{lemma}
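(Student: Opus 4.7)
The plan is to simulate $\mathsf{C}$ gate-by-gate on the network $U$, processing one circuit layer per phase and using the mixing-time hypothesis to route wire values between consecutive layers. Since $\mathsf{C}$ has $W = O(c \cdot s \cdot n^{1+\delta} \log n)$ wires and constant fan-in/fan-out, the total gate count is $O(W)$. First I would fix a load-balanced assignment of gates to nodes of $U$ in which each $u$ is responsible for $O(c \cdot s \cdot \mathrm{deg}(u) \cdot \log n)$ gates. Since this assignment depends only on the (known) circuit $\mathsf{C}$ and on $U$, it can be hard-coded into the protocol so that every node knows which gates are owned by which node.

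Next I would handle the input-delivery phase. The hypothesis gives that each $u$ starts with at most $c \cdot \mathrm{deg}(u)$ input bits, and each bit feeds only $O(1)$ depth-$1$ gates, so one routing instance of total volume $O(c \cdot m)$ and per-node load $O(c \cdot s \cdot \mathrm{deg}(u) \log n)$ delivers every input bit to the node that owns its consuming gate. The main loop then runs for $r = 1, \ldots, R$: each node locally evaluates the depth-$r$ gates it owns (all inputs having been delivered in the preceding routing phase), and the resulting wire values are forwarded to the owners of their depth-$(r{+}1)$ consumers in another routing phase. At every layer the number of messages is bounded by the total wire count $W$, while the assignment guarantees that both the send load and the receive load at each $u$ remain $O(c \cdot s \cdot \mathrm{deg}(u) \log n)$.

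The technical heart is a routing subroutine on bounded-mixing-time graphs: in a graph with mixing time $\tau$ and per-node source/sink load $\Lambda$, one can route all messages in $O((\Lambda + \tau) \cdot 2^{O(\sqrt{\log n \log \log n})})$ rounds. I would invoke this as a black box via the standard random-walk/sparsifier token-routing machinery of Ghaffari--Haeupler-style subpolynomial-round information dissemination. Plugging in per-degree load $O(c \cdot s \cdot \log n)$ yields a cost of $O(c \cdot s \cdot \tau \cdot 2^{O(\sqrt{\log n \log \log n})})$ rounds per layer, and multiplying by the $R$ depth levels (absorbing the single input phase into constants) gives the advertised total round complexity.

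The main obstacle is the routing subroutine itself: achieving load-balanced multi-commodity routing on an arbitrary graph with only a mixing-time hypothesis is delicate, and the $2^{O(\sqrt{\log n \log \log n})}$ overhead reflects the current best known slack from that machinery. A secondary but combinatorially nontrivial point is verifying that a single fixed gate-to-node assignment simultaneously keeps both the source and sink load of every layer's routing instance proportional to $\mathrm{deg}(u)$; this can be handled by a greedy packing / max-flow argument on the bipartite wire graph between successive layers, using that both the total wire count and the total node capacity $\sum_u \mathrm{deg}(u) = 2m$ scale compatibly so that a degree-weighted fractional assignment exists and can be rounded.
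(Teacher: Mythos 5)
This statement is not proved in the paper at all: it is imported verbatim as Lemma~26 of the cited work of Eden et al., so there is no in-paper proof to compare against. Your sketch — layer-by-layer simulation of the circuit, a degree-proportional assignment of gates to nodes hard-coded into the protocol, and per-layer delivery of wire values via the load-balanced routing machinery for graphs of mixing time $\tau$ (the Ghaffari--Kuhn--Su/Ghaffari--Li routing, which is exactly where the $2^{O(\sqrt{\log n \log\log n})}$ overhead comes from) — is essentially the same argument as the original proof in that reference, so it is the expected route and I see no gap beyond the black-box routing theorem you correctly flag as the technical heart.
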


\begin{lemma}\label{lem:circuit_lower_bound}
    If \textsf{edge-colored triangle counting} on $n$-node graphs with mixing time $\mathrm{polylog}(n)$ requires $\Omega(n^{\alpha})$ rounds, or \textsf{ordered $P_3$ counting} on $n$-node graphs with mixing time $\mathrm{polylog}(n)$ requires $\Omega(n^{\alpha})$ rounds, then $\mathsf{S}(f) = \Omega(N^{1+\alpha/4})$.
\end{lemma}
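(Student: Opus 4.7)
The plan is to prove the contrapositive. Suppose $\mathsf{S}(f)=o(N^{1+\alpha/4})$; then for every sufficiently large $N$ there is a circuit computing $f$ of depth $O(N^{\alpha/5})$ with at most $O(c\cdot s\cdot n^{1+\delta}\log n)$ wires for $s=o(N^{\alpha/4})$ and $c=O(\log n)$, matching the interface of Lemma~\ref{lem:circuit-simulation}. My goal is to derive $\mathsf{CONGEST}$ protocols of round complexity $o(n^{\alpha})$ for both \textsf{edge-colored triangle counting} and ordered $P_3$ counting on graphs with mixing time $\mathrm{polylog}(n)$, which will contradict whichever of the two assumed $\Omega(n^{\alpha})$ lower bounds holds and thus yield the claimed bound on $\mathsf{S}(f)$.

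The first step is to fix a natural, degree-respecting input partition. The $N$-bit encoding of the input graph $U=(V,E)$ described above the lemma statement allocates one block of $K+2\log m+4=O(\log n)$ bits per edge, storing the identifiers and colors of the two endpoints together with the edge color $c(e)$. Since both endpoints of each edge $e$ already know this block locally, I will assign it to the endpoint of $e$ with the smaller identifier. Then every node $u$ owns $O(\mathrm{deg}(u)\cdot \log n)$ input bits, which fits the $c\cdot \mathrm{deg}(u)$-wire constraint of Lemma~\ref{lem:circuit-simulation} with $c=O(\log n)$. The second step is to run the simulation. For each bit index $i\in[O(\log n)]$ and each $j\in\{0,1\}$, I hardwire the suffix $(i,j)$ of the input to the circuit for $f$, obtaining a circuit of the same depth and size that outputs the $i$-th bit of the \textsf{edge-colored triangle counting} answer when $j=0$, and of the ordered $P_3$ counting answer when $j=1$. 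Applying Lemma~\ref{lem:circuit-simulation} to each of these $O(\log n)$ circuits with $R=O(N^{\alpha/5})$, $s=o(N^{\alpha/4})$, and $\tau=\mathrm{polylog}(n)$ produces a $\mathsf{CONGEST}$ protocol that computes the target bit in
\[
O\bigl(R\cdot c\cdot s\cdot \tau\cdot 2^{O(\sqrt{\log n\,\log\log n})}\bigr)=\tilde{O}\bigl(N^{\alpha/5}\cdot N^{\alpha/4}\bigr)=\tilde{O}\bigl(N^{9\alpha/20}\bigr)
\]
rounds, and running the $O(\log n)$ subprotocols sequentially adds only a polylogarithmic factor. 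Since $N=O(m\log n)=O(n^{2}\log n)$ in the worst case, the overall round complexity is $\tilde{O}(n^{9\alpha/10})=o(n^{\alpha})$, which delivers the required contradiction.

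The main technical point to get right is ensuring that the derived parameter $s$ remains within the $s\le n$ regime required by Lemma~\ref{lem:circuit-simulation}: this holds because $s=o(N^{\alpha/4})=o(n^{\alpha/2}\mathrm{polylog}(n))$, which stays below $n$ in the only relevant regime $\alpha\le 2$. A secondary, more bookkeeping-level point is cleanly matching the per-edge encoding to the lemma's input-partition interface; the per-edge block structure used above, with each block owned by a single endpoint, takes care of this once we absorb a $\log n$ factor into $c$. Beyond these items, the argument is a direct application of Lemma~\ref{lem:circuit-simulation}.
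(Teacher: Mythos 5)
Your proof is correct and takes essentially the same route as the paper: a contrapositive (equivalently, contradiction) argument that plugs $R=O(N^{\alpha/5})$, $c=O(\log n)$, $s=o(N^{\alpha/4})$, $\tau=\mathrm{polylog}(n)$ into Lemma~\ref{lem:circuit-simulation}, hardwires the $O(\log n)$ choices of $(i,j)$, and observes that $N^{9\alpha/20}=n^{0.9\alpha+o(1)}=o(n^{\alpha})$ to obtain the contradiction. You are in fact slightly more careful than the paper, which does not spell out the degree-respecting input partition (assigning each edge's $O(\log n)$-bit block to one endpoint) or verify the $s\le n$ precondition of Lemma~\ref{lem:circuit-simulation}; both of these bookkeeping points are handled correctly in your write-up.
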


\begin{proof}
    Assume that there exists a family of circuits with depth $O(N^{\alpha/5})$ and size $\mathsf{S}(f) = O(N^{1+\alpha/4 + o(1)})$ that solves the $\Theta(N)$-bit boolean functions $f$ for all $N$. Then by Lemma~\ref{lem:circuit-simulation} with $s = N^{\alpha/4 + o(1)}$ and $R=N^{\alpha/5}$, there exists a $\mathsf{CONGEST}$ algorithm that computes $f$ within \begin{align*}
        O(R\cdot c \cdot s \cdot \tau \cdot 2^{O(\sqrt{\log n \log \log n})}\cdot \log n) = O(n^{0.9\alpha + o(1)})
    \end{align*} 
    rounds. Using this algorithm for $O(\log n)$ different inputs we can solve \textsf{edge-colored triangle counting} and \textsf{ordered $P_3$ counting} in $\tilde{O}(n^{0.9\alpha + o(1)})$ rounds, which is a contradiction.
\end{proof}

Combining Lemma~\ref{lem:ordered-3-path-reduction} and Lemma~\ref{lem:circuit_lower_bound} we get the first part of Theorem~\ref{thm:ordered-3-path-circuit}. In Appendix~\ref{appendix:induced-C4}, we complement this barrier by showing $O(n^{3/4})$ round quantum algorithm for induced $C_4$ detection as in Theorem~\ref{thm:induced-4-cycle}.

\bibliography{mybibliography}

\begin{appendix}

\section{Lower bound for ordered $P_5$ detection}\label{appendix:ordered-P5}
%%%%%%%%%%%%%%%%%%%%%%%%%%%%%%%%%%%%%%%%%%%%%%%%%%%%%%%%

The proof uses the same technique for the case of $P_{k}$-freeness: the reduction from the two-party communication complexity. Thus what we need to do is the graph construction.

We construct the graph $G_{x,y}$ as follows.
The node set $V$ of $G_{x,y}$ is the union of two subsets $V_A$ and $V_B$.
$V_A$ consists of three node sets $A_1,A_2,A_5$ of size $n$. $V_B$ consists of two node sets $B_3,B_4$ of size $n$.
Each set is explicitly described as $A_i = \{a_{i,j}|j\in [n]\}$ for $i\in \{1,2,5\}$ and $B_i = \{b_{i,j}|j\in [n]\}$ for $i\in \{3,4\}$.
The following edges are contained for all $G_{x,y}$, regardless of the value of $x$ and $y$.
\begin{itemize}
    \item $(a_{5,j},b_{4,j})$, $(a_{2,j},b_{3,j})$ for all $j\in [n]$;
    \item $(a_{1,j},a_{5,k})$ for all distinct $j,k\in [n]$.
\end{itemize}
Furthermore, we add the following edges depending on the value of $x$ and $y$. 
For each $j_1,j_2\in [n]$, let $k = j_1 + j_2\cdot (n-1)$ and 
\begin{itemize}
    \item if $x_k = 1$, we add the edge $(a_{1,j_1},a_{2,j_2})$;
    \item if $y_k = 1$, we add the edge $(b_{4,j_1},b_{3,j_2})$.
\end{itemize}
Finally, the color of nodes in $A_i$ is defined to be $i$ for $i\in \{1,2,5\}$, and the color of nodes in $B_i$ is defined to be $i$ for $i\in \{3,4\}$.
This finishes the construction of the graph $G_{x,y}$. We then prove the following lemma. The proof of Theorem~\ref{thm:ordered-5-path} is exactly the same as how Theorem~\ref{thm:11-path-lower-bound} was proved from Lemma~\ref{lem:11-path-graph-construction} (using Lemma~\ref{lem:disjointness_and_5_path} instead). 

\begin{lemma}\label{lem:disjointness_and_5_path}
    $G_{x,y}$ contains a copy of ordered $5$-path if and only if $\mathsf{DISJ}_{n^2}(x,y)=0$.
\end{lemma}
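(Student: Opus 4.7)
The coloring forces the candidate ordered path $p_1,p_2,p_3,p_4,p_5$ to satisfy $p_1 \in A_1$, $p_2 \in A_2$, $p_3 \in B_3$, $p_4 \in B_4$, $p_5 \in A_5$. I will write $p_1 = a_{1,j_1}$, $p_2 = a_{2,j_2}$, $p_3 = b_{3,j_3}$, $p_4 = b_{4,j_4}$, $p_5 = a_{5,j_5}$, and analyze which edges and non-edges are forced by the construction. The point of the construction is that the ``horizontal'' edges $A_2$--$B_3$ and $B_4$--$A_5$ fix a shared index ($j_2 = j_3$ and $j_4 = j_5$), while the induced condition (the non-edge between $p_1$ and $p_5$) will fix $j_1 = j_4$, forcing both input-dependent edges to correspond to the same coordinate $k = j_1 + j_2(n-1)$.

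For the ``if'' direction, assume $x_k = y_k = 1$ for some $k$, and write $k = j_1 + j_2(n-1)$ with $j_1, j_2 \in [n]$. I will exhibit the ordered path
\[
a_{1,j_1},\; a_{2,j_2},\; b_{3,j_2},\; b_{4,j_1},\; a_{5,j_1}.
\]
The four path edges are present by construction: $(a_{1,j_1},a_{2,j_2})$ from $x_k=1$, $(a_{2,j_2},b_{3,j_2})$ and $(b_{4,j_1},a_{5,j_1})$ from the fixed matchings, and $(b_{3,j_2},b_{4,j_1})$ from $y_k = 1$. For inducedness, the five pairs of non-consecutive vertices split into two types: four of them lie across color classes for which the construction places no edges at all (namely $A_1$--$B_3$, $A_1$--$B_4$, $A_2$--$B_4$, $A_2$--$A_5$, $B_3$--$A_5$), and the remaining pair $(a_{1,j_1},a_{5,j_1})$ is a non-edge precisely because edges $A_1$--$A_5$ exist only between distinct-index pairs.

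For the ``only if'' direction, assume an induced ordered $P_5$ exists. The edge $(p_2,p_3) = (a_{2,j_2}, b_{3,j_3})$ exists only for the matching pairs, hence $j_2 = j_3$; similarly $j_4 = j_5$. The crucial step is the non-edge $(p_1,p_5) = (a_{1,j_1}, a_{5,j_5})$: since every pair $(a_{1,j},a_{5,k})$ with $j \ne k$ is an edge, inducedness forces $j_1 = j_5 = j_4$. Then the edges $(a_{1,j_1}, a_{2,j_2})$ and $(b_{4,j_4}, b_{3,j_3}) = (b_{4,j_1}, b_{3,j_2})$ are controlled respectively by $x_k$ and $y_k$ for the \emph{same} $k = j_1 + j_2(n-1)$, so both are $1$ and $\mathsf{DISJ}_{n^2}(x,y) = 0$.

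The main conceptual point (rather than an obstacle) is the role of the fixed clique-like edge set $\{(a_{1,j},a_{5,k}) : j \ne k\}$ between $A_1$ and $A_5$: it is what converts an ``index-matching'' requirement $j_1 = j_4$ into a non-edge constraint, which is in turn automatically enforced by inducedness. Everything else is a routine case check that all other non-consecutive pairs lie in color classes across which no edges were ever introduced.
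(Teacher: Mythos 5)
Your proof is correct and follows essentially the same route as the paper: the fixed matchings $A_2$--$B_3$ and $B_4$--$A_5$ force $j_2=j_3$ and $j_4=j_5$, and the complete-bipartite-minus-a-matching structure between $A_1$ and $A_5$ converts the inducedness requirement into $j_1=j_5$, so both input-controlled edges reference the same coordinate $k$. (One trivial slip: a $P_5$ has six non-adjacent pairs, not five, so your ``four of them'' should read ``five of them''; you do in fact list all five of the always-nonedge cross-class pairs plus the one $A_1$--$A_5$ pair, so the substance is unaffected.)
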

\begin{proof}
    Any possible copy is required to use edges of the form $(a_{5,j_1},b_{4,j_1})$ and $(a_{2,j_2},b_{3,j_2})$. Thus it also contains edges of the form  $(a_{1,j_1},b_{2,j_2})$ for the same indices $j_1,j_2$. This is because since all nodes of color 1 except $a_{1,j_1}$ are incident to $a_{5,j_1}$ and they cannot be a part of the desired path. Finally, it is straightforward to confirm the nodes $a_{1,j_1},a_{2,j_2},b_{3,j_2},b_{4,j_1},a_{5,j_1}$ form the desired path if and only if $x_k=y_k=1$ for $k = j_1 + j_2\cdot (n-1)$.
\end{proof}
Since we have $O(n)$ edges between $V_A$ and $V_B$, we get $\tilde{\Omega}(n)$ lower bound.

%%%%%%%%%%%%%%%%%%%%%%%%%%%%%%%%%%%%%%%%%%%%%%%%%%%%%%%%
\section{Quantum algorithm for induced $C_4$ detection}\label{appendix:induced-C4}
%%%%%%%%%%%%%%%%%%%%%%%%%%%%%%%%%%%%%%%%%%%%%%%%%%%%%%%%

Our algorithm for induced $C_4$ detection is governed by a threshold $\delta$. We say that a vertex is light iff its degree is at most $\delta$. An induced $C_4$ is light if at least one of its vertices is light, otherwise the cycle is heavy. 
\paragraph*{Detecting light cycles}
In the algorithm each light node first broadcasts its list of neighbors in $O(\delta)$ rounds. Each node $v$ then constructs, for each $(v,w)\in E$, a list $S(v,w)$ as follows:
\[
S(v,w)=\{
    w'\in V \mid \exists u\in V,\mathrm{deg}(u)\leq \delta, w'\in N(u)\backslash N(v)
\}
\]
Note that $v$ can construct $S(v,w)$ locally. For each edge $(v,w)$, $v$ and $w$ search the intersection of $S(v,w)$ and $N(w)$ using distributed quantum search used in the proof of Theorem~\ref{thm:circuit_complexity_barrier_3_path}. This can be done $$O(\sqrt{\mathrm{min}\{|S(v,w)|,|N(w)|\}})=O(\sqrt{n})$$ rounds. Therefore detecting light cycles requires $O(\delta + \sqrt{n})$ rounds.

\paragraph*{Detecting heavy cycles}
Let us now consider a fixed induced $C_4$: $(u,v),(v,w),(w,x),(x,u)\in E$ such that all nodes $u,v,w,x\in V$ are heavy. Consider that each node selects $n/\delta$ edges from the set of its incident edges uniformly at random, and broadcasts them. Then, $u$ constructs for each edge $(u,v)\in E$
\[
S(v)=\{
    x \in V \mid \exists w\notin N(u) \text{ such that }(v,w)\in E \text{ and } (w,x)\in E
\}
\]
Define the function $f_v^u:V\rightarrow \{0,1\}$ where $f_v^u(x)=1$ iff $x\in S(v)\backslash N(v)$. Using Lemma~\ref{lem:distributed-quantum-search}, we can determine if there exists $x$ for which $f_v^u(x)=1$ using the edge $(u,v)$ in $O(\sqrt{n})$ rounds. Furthermore, this procedure is done in parallel for all $\{f_v^u\}_{(u,v)\in E}$.

Since both the edges $(v,w)$, $(w,x)$ are sent to $u$ with porobability at least $1/\delta^2$, repeating this procedure $O(\delta^2)$ times we can detect heavy cycles with constant probability. For this kind of randomized algorithms, we can apply a technique called amplitude amplification, which is a generalization of quantum search.

\begin{comment}
\begin{lemma}[\cite{AVPODC22}]\label{lem:amplitude-amplification}
Suppose that there is an algorithm in the quantum $\mathsf{CONGEST}$ model that runs in $T$-round, and outputs the correct answer with probability at least $p$. Then there is an algorithm that outputs the correct answer with constant probability in $O\left(\frac{T}{\sqrt{p}} + D\right) $ rounds where $D$ is the diameter of the graph.
\end{lemma}
\end{comment}

\begin{lemma}[\cite{fraigniaud2024even}]\label{lem:amplitude-amplification}
    Let $\mathcal{P}$ be a Boolean predicate on
graphs. Assume that there exists a (randomized or quantum) distributed algorithm $\mathcal{A}$ that decides $\mathcal{P}$ with one-sided success probability $\varepsilon$ on input graph $G$, i.e.,
\begin{itemize}
    \item If $G$ satisfies $\mathcal{P}$, then, with probability 1, $\mathcal{A}$ accepts at all nodes;
    \item If $G$ does not satisfy $\mathcal{P}$, then, with probability at least $\varepsilon$, $\mathcal{A}$ rejects in at least one node.
\end{itemize}
Assume further that $\mathcal{A}$ has round-complexity $T (n, D)$ for $n$-node graphs of diameter at most $D$.
Then, for any $\delta > 0$, there exists a quantum distributed algorithm $\mathcal{B}$ that decides $\mathcal{P}$ with one-sided
error probability $\delta$, and round-complexity $\mathrm{polylog}(1/\delta) \cdot \frac{1}{\sqrt{\varepsilon}} (D + T (n, D))$.
\end{lemma}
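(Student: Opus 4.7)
The plan is to apply quantum amplitude amplification to a coherent version of the algorithm $\mathcal{A}$, after first compressing its distributed output into a single bit held by a leader.

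As a preprocessing step (performed once, in $O(D)$ rounds), I would have the nodes elect a leader $r$ (say the minimum-identifier node) and build a BFS tree $\mathcal{T}$ rooted at $r$. I would then define an augmented protocol $\mathcal{A}'$: run $\mathcal{A}$ for $T(n,D)$ rounds, then pipeline the AND of all nodes' accept/reject bits toward $r$ along $\mathcal{T}$ in $O(D)$ additional rounds, so that $r$ ends up holding a single ``verdict'' bit equal to $1$ iff at least one node of $\mathcal{A}$ rejected. By assumption, in the yes case this verdict is $0$ with probability $1$, whereas in the no case it is $1$ with probability at least $\varepsilon$. It therefore suffices to detect a one-sided bias of $\varepsilon$ at $r$.

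Next, I would quantize $\mathcal{A}'$ to obtain a distributed unitary $U$ whose amplitude on verdict-$1$ basis states equals $\sqrt{p}$, where $p=0$ in the yes case and $p\geq\varepsilon$ in the no case. Each random coin becomes a fresh qubit initialized to $(|0\rangle+|1\rangle)/\sqrt{2}$; each local classical step is replaced by its reversible embedding on ancillae; each classical message is sent coherently, by copying the source register via a CNOT into a fresh qubit that is then swapped across the edge. In this way $U$ acts reversibly on the product of all nodes' Hilbert spaces, and both $U$ and $U^{-1}$ can be executed distributedly in $O(T(n,D)+D)$ rounds (for $U^{-1}$, each node runs the steps of $U$ backwards, swapping qubits back across the same edges in reverse order).

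I would then run amplitude amplification with iterate $G=-U R_0 U^{-1} R_{\mathrm{good}}$. The ``good-state'' reflection $R_{\mathrm{good}}$ is just a local $Z$ on $r$'s verdict qubit. The ``zero'' reflection $R_0$ on the global state $|0\rangle^{\otimes N}$ is implemented as follows: each node locally computes, into a private ancilla, the AND of indicators that each of its qubits equals $|0\rangle$; these bits are then AND-aggregated up $\mathcal{T}$ in $O(D)$ rounds; $r$ applies a controlled-$Z$ on the aggregated bit; finally, the aggregation is uncomputed by running the same pipeline in reverse so that no garbage remains. Thus each Grover step costs $O(T(n,D)+D)$ rounds. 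Since the bias $p$ is unknown beyond $p\geq\varepsilon$, I would use fixed-point amplitude amplification (or a standard doubling over geometric guesses of $p$) to force the amplified success probability above $1-\delta$ in $O(\mathrm{polylog}(1/\delta)/\sqrt{\varepsilon})$ iterates; after the last iterate, $r$ measures its verdict qubit and broadcasts the outcome along $\mathcal{T}$ in $O(D)$ further rounds, and each node accepts or rejects accordingly. Summing gives the claimed $\mathrm{polylog}(1/\delta)\cdot\tfrac{1}{\sqrt{\varepsilon}}\cdot(D+T(n,D))$ rounds. The main technical obstacle I anticipate is making the distributed $R_0$ (and the coherent simulation of $U$) truly unitary on the logical space: every ancilla used for coherent message transport and for the tree-based AND must be uncomputed before each iterate ends, because any residual correlation with the computational register would destroy the interference on which amplitude amplification relies.
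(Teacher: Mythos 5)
The paper does not prove this lemma; it is stated as a citation to Fraigniaud et al.~\cite{fraigniaud2024even}, so there is no in-paper argument to compare against. Your proposal is, however, essentially the standard proof of such distributed amplitude-amplification statements (it also underlies the earlier \cite{AVPODC22} version with a $1/\sqrt{p}$ dependence), and it is correct in all its main parts: coherently simulating the classical protocol with fresh superposed coin qubits and reversible message transport, routing the one-sided verdict bit to a leader via a BFS tree, implementing $R_0$ by a tree-aggregated AND followed by a controlled-$Z$ and a clean uncomputation, and iterating a Grover step of cost $O(T(n,D)+D)$.

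Two small points are worth tightening. First, since the lemma admits a quantum $\mathcal{A}$, the ``quantization'' step should be phrased as running (a purification of) $\mathcal{A}$ coherently without intermediate measurements, rather than only replacing classical coins by $(\ket{0}+\ket{1})/\sqrt{2}$; this is routine but should be said. Second, the ``doubling over geometric guesses of $p$'' alternative you mention would introduce an extra $\log(1/\varepsilon)$ factor that the lemma's bound $\mathrm{polylog}(1/\delta)\cdot\varepsilon^{-1/2}\cdot(D+T(n,D))$ does not allow; fixed-point amplitude amplification (or, in the one-sided setting, the exact Grover schedule for the known lower bound $\varepsilon$ combined with $O(\log(1/\delta))$ independent repetitions) is the right choice and you correctly list it as the primary option. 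With those caveats the argument is sound and recovers the stated round complexity.
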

Note that the diameter reduction technique~\cite{eden2022sublinear} allows us to assume the diameter is $O(\log n)$.
Using Lemma~\ref{lem:amplitude-amplification} with $T=O\left(\frac{n}{\delta} + \sqrt{n}\right)$ and $\varepsilon=O\left(\frac{1}{\delta^2}\right)$, we can find an heavy cycle (if exists) with constant probability within the round complexity $O\left(\left(\frac{n}{\delta} + \sqrt{n}\right)\delta\right)$.

We now consider the following protocol in which both the edges $(v,w)$, $(w,x)$ are sent to $u$ with probability at least $1/\delta$, instead of $1/\delta^2$. 
In the algorithm each node $u$ first selects an integer $\mathsf{ID}(u)$ from $[\delta]$ uniformly at random and broadcasts it to all its neighbors. Then $u$ divide $N(u)$ into $\delta$ subsets $N_i(u)$ for $i\in[\delta]$ where
$$
N_i(u) = \{
v\in N(u)\mid \mathsf{ID}(v) = i.
\}
$$
Note that $|N_i(u)|=O\left(\frac{|N(u)|}{\delta}\right) = O\left(\frac{n}{\delta}\right)$ with high probability. We thus condition on this event. Each node $u$ then selects $i\in [\delta]$ uniformly at random and asks each of its neighbors $v$ to send $N_i(v)$. In this procedure the edges $(v,w)$, $(w,x)$ are sent to $u$ with probability at least $1/\delta$. This leads to the round complexity $O\left(\left(\frac{n}{\delta} + \sqrt{n}\right)\sqrt{\delta}\right)$. Taking $\delta=\sqrt{n}$, we get the $O(n^{3/4})$-round algorithm for detecting induced $C_4$, as stated in Theorem~\ref{thm:induced-4-cycle}.

\end{appendix}

\end{document}